\newenvironment{conditions}
{%
   \begin{list}{\rm (\theenumi)}%
   {\noindent%
     \usecounter{enumi}%
     \setlength{\topsep}{2pt}%
     \setlength{\partopsep}{0pt}%
				\setlength{\itemsep}{2pt}%
     \setlength{\parsep}{0pt}%
     \setlength{\leftmargin}{2.5em}%
     \setlength{\labelwidth}{1.5em}%
     \setlength{\labelsep}{0.5em}%
     \setlength{\listparindent}{0pt}%
     \setlength{\itemindent}{0pt}%
   }%
}%
{\end{list}}%
\newenvironment{conditionsJE}[2][]
{\begin{enumerate}[
	series=tests,
	label= $(\text{#2}_{\arabic*})$,
	topsep=2pt,
	partopsep=0pt,
	itemsep=2pt,
	parsep=0pt,
	leftmargin=3em,
	labelwidth=1.5em,
	labelsep=0.5em,
	listparindent=0pt,
	itemindent=0pt,
	#1]
}{\end{enumerate}}
\newcommand{\cdp}{\mkern1mu\cdotp\!}
\newcommand{\clos}[1]{\overline{#1}}
\newcommand{\closI}[1]{{\overline{#1}\,}^1}
\newcommand{\closII}[1]{{\overline{#1}\,}^2}
\newcommand{\closIII}[1]{{\overline{#1}\,}^3}
\newcommand{\Ult}{\operatorname{Ult}}
\newcommand{\Stab}{\operatorname{Stab}}
\newcommand{\shuffle}{\mathrel{\llcorner\!\llcorner\!\!\!\lrcorner}}
\def\dif{\mathrel{\triangle}}
\renewcommand{\geq}{\geqslant}
\renewcommand{\leq}{\leqslant}
\newcommand{\bG}{\mathbf{G}}
\newcommand\sB{\mathscr{B}}
\newcommand{\cA}{\mathcal{A}}
\newcommand{\cB}{\mathcal{B}}
\newcommand{\cF}{\mathcal{F}}
\newcommand{\cG}{\mathcal{G}}
\newcommand{\cJ}{\mathcal{J}}
\newcommand{\cL}{\mathcal{L}}
\newcommand{\cP}{\mathcal{P}}
\newcommand{\cS}{\mathcal{S}}
\newcommand{\tvi}{\vrule height 12pt depth 5 pt width 0 pt}
\newcommand{\ptvi}{\vrule height 10pt depth 4 pt width 0 pt}
\newcommand{\calJ}{\mathrel{\mathcal{J}}}
\newcommand{\leJ}{\leqslant_\mathcal{J}}
\newcommand{\BPolL}{\sB\text{Pol}\,\,\cL}
\newcommand{\BPolG}{\sB\text{Pol}\,\,\cG}
\newcommand{\PolG}{\text{Pol}\,\,\cG}
\newcommand{\coPolG}{\text{co-Pol}\,\,\cG}
\newcommand{\PolL}{\text{Pol}\,\,\cL}
\newcommand{\coPolL}{\text{co-Pol}\,\,\cL}
\newcommand\nstarr[2]{\hfil\raisebox{#1pt}{$#2$}\hfil}
\newcommand\nstar[1]{\nstarr{2}{#1}}
\newcommand\wstarr[2]{\setbox3=\hbox{$*$}%
\hspace{2pt}\raisebox{8pt}{\smash{$*$}}\hspace{-2pt}\hspace{-\wd3}\nstarr{#1}{#2}}
\newcommand\wstar[1]{\wstarr{2}{#1}}
\begin{document}

\title{A survey on difference hierarchies of regular languages}

\author[O. Carton]{Olivier Carton}
\address{IRIF, CNRS and Universit\'e Paris-Diderot}
\email{olivier.carton@irif.fr}

\author[D. Perrin]{Dominique Perrin}
\address{Laboratoire d'informatique Gaspard-Monge, Universit\'e de Marne-la-Vall\'ee}
\email{dominique.perrin@esiee.fr}

\author[J.-\'E. Pin]{Jean-\'Eric Pin}
\address{IRIF, CNRS and Universit\'e Paris-Diderot} \email{jean-eric.pin@irif.fr} \thanks{The
third author is partially funded from the European Research Council (ERC) under the
European Union's Horizon 2020 research and innovation programme (grant agreement No
670624). The first and third authors are partially funded by the DeLTA project
(ANR-16-CE40-0007)}

\keywords{Difference hierarchy, regular language, syntactic monoid}
\subjclass{Formal languages and automata theory, Regular languages, Algebraic
language theory}

\begin{abstract}
	Difference hierarchies were originally introduced by Hausdorff and they play an
	important role in descriptive set theory. In this survey paper, we study difference
	hierarchies of regular languages. The first sections describe standard techniques
	on difference hierarchies, mostly due to Hausdorff. We illustrate these techniques
	by giving decidability results on the difference hierarchies based on shuffle
	ideals, strongly cyclic regular languages and the polynomial closure of group
	languages.
\end{abstract}

\maketitle

\begin{center}
	\emph{Dedicated to the memory of Zolt\'an \'Esik.}
\end{center}


\section{Introduction}\label{sec:Introduction}

Consider a set $E$ and a set $\cF$ of subsets of $E$ containing the empty set. The
general pattern of a difference hierarchy is better explained in a picture. Saturn's
rings-style Figure \ref{fig:Saturn} represents a decreasing sequence
\[
  X_1 \supseteq X_2\supseteq X_3 \supseteq X_{4} \supseteq X_5
\]
of elements of $\cF$. The grey part of the picture corresponds to the set $(X_1 -
X_2) + (X_3 - X_4) + X_5$, a typical element of the fifth level of the difference
hierarchy defined by $\cF$. Similarly, the $n$-th level of the difference hierarchy
defined by $\cF$ is obtained by considering length-$n$ decreasing nested sequences of
sets.
\begin{figure}[H]
\begin{center}
\includegraphics{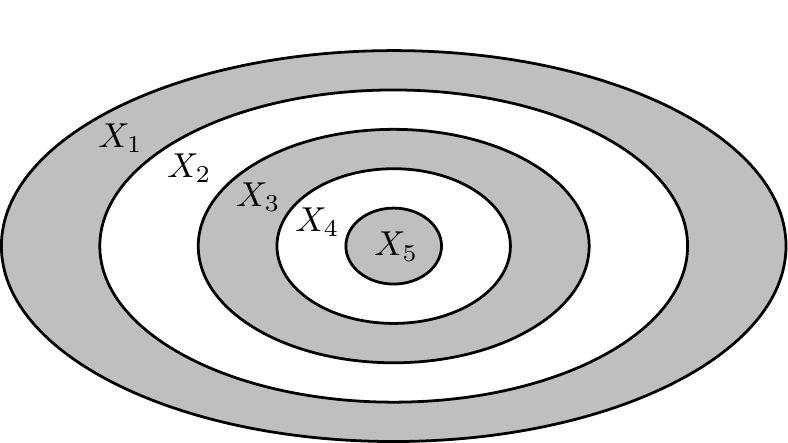}
\end{center}
\caption{Five subsets of $E$.}\label{fig:Saturn}
\end{figure}
\noindent Difference hierarchies were originally introduced by Hausdorff
\cite{Hausdorff14, Hausdorff49, Hausdorff57}. They play an important role in
descriptive set theory \cite[Section 11]{Wadge12} and also yield a hierarchy on
complexity classes known as the Boolean hierarchy \cite[Section
3]{KoblerSchoningWagner87}, \cite[Section 2]{Wechsung85}, \cite{CaiHemachandra86},
\cite[Section 3]{CaiHartmanisetal88}. Difference hierarchies were also used in the
study of $\omega$-regular languages \cite{Carton96, CartonPerrin94, CartonPerrin97,
CartonPerrin97b, CartonPerrin99, Wagner79}.

The aim of this paper is to survey difference hierarchies of regular languages.
Decidability questions for difference hierarchies over regular languages were studied
in \cite{GlasserSchmitz01} and more recently by Glasser, Schmitz and Selivanov in
\cite{GlasserSchmitzSelivanov16}. The latter article is the reference paper on this
topic and contains an extensive bibliography, to which we refer the interested
reader. However, paper \cite{GlasserSchmitzSelivanov16} focuses on forbidden patterns
in automata, a rather different perspective than ours.

We first present some general results on difference hierarchies and their connection
with closure operators. The results on approximation of Section
\ref{sec:Approximation}, first presented in \cite{Carton97}, lead in some cases to
convenient algorithms to compute chain hierarchies.

Next we turn to algebraic methods. Indeed, a great deal of results on regular
languages are obtained through an algebraic approach. Typically, combinatorial
properties of regular languages --- being star-free, piecewise testable, locally
testable, etc. --- translate directly to algebraic properties of the syntactic monoid
of the language (see \cite{Pin95a} for a survey). It is therefore natural to expect a
similar algebraic approach when dealing with difference hierarchies. However, things
are not that simple. First, one needs to work with \emph{ordered} monoids, which are
more appropriate for classes of regular languages not closed under complement.
Secondly, although Proposition \ref{prop:BnV} yields a purely algebraic characterization
of the difference hierarchy, it does not lead to decidability results, except for
some special cases. Two such cases are presented at the end of the paper. The first
one studies the difference hierarchy of the polynomial closure of a lattice of
regular languages. The main result, Corollary \ref{cor:Bn(PolG)}, which appears to be
new, states that the difference hierarchy induced by the polynomial of group
languages is decidable. The second case, taken from \cite{Carton97}, deals with
cyclic and strongly cyclic regular languages.

Our paper is organised as follows. Prerequities are presented in Section
\ref{sec:Prerequisites}. Section \ref{sec:Difference hierarchies} covers the results
of Hausdorff on difference hierarchies and Section \ref{sec:Closure operators} is a
brief summary on closure operators. The results on approximation form the core of
Section \ref{sec:Approximation}. Decidability questions on regular languages are
introduced in Section \ref{sec:Decidabilty questions}. Section \ref{sec:Chains and
difference hierarchies} on chains is inspired by results of descriptive set theory.
Two results that are not addressed in \cite{GlasserSchmitzSelivanov16} are presented
in Sections \ref{sec:polynomial closure} and \ref{sec:Cyclic}. The final Section 
\ref{sec:Conclusion} opens up some perspectives.


\section{Prerequisites}\label{sec:Prerequisites}

In this section, we briefly recall the following notions: upper sets, ordered monoids, 
stamps and syntactic objects.

Let $E$ be a preordered set. An \emph{upper set} of $E$ is a subset $U$ of $E$ such
that the conditions $x \in U$ and $x \leq y$ imply $y \in U$. An \emph{ordered
monoid} is a monoid $M$ equipped with a partial order $\leq$ compatible with the
product on $M$: for all $x, y, z \in M$, if $x \leq y$ then $zx \leq zy$ and $xz \leq
yz$.

A \emph{stamp} is a surjective monoid morphism $\varphi:A^* \to M$ from a finitely
generated free monoid $A^*$ onto a finite monoid $M$. If $M$ is an ordered
monoid, $\varphi$ is called an \emph{ordered stamp}. 

The \emph{restricted direct product} of two stamps $\varphi_1:A^* \to M_1$ and
$\varphi_2:A^* \to M_2$ is the stamp $\varphi$ with domain $A^*$ defined by
$\varphi(a) = (\varphi_1(a), \varphi_2(a))$. The image of $\varphi$ is an [ordered]
submonoid of the [ordered] monoid $M_1 \times M_2$.
\begin{center}
  \includegraphics{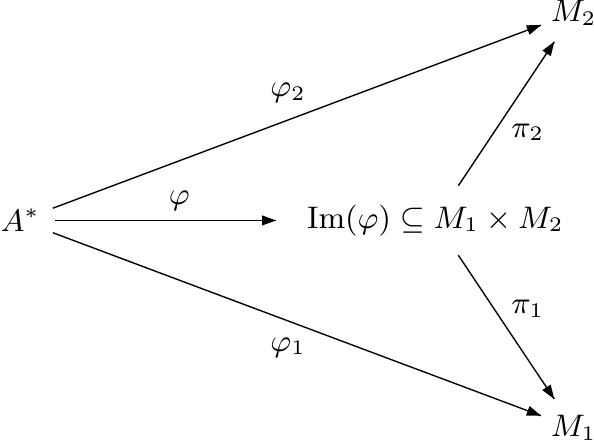}
\end{center}
Stamps and ordered stamps are used to recognise languages. A language $L$ of $A^*$ is
\emph{recognised by a stamp $\varphi: A^* \to M$} if there exists a subset $P$ of $M$
such that $L = \varphi^{-1}(P)$. It is \emph{recognised by an ordered stamp $\varphi:
A^* \to M$} if there exists an upper set $U$ of $M$ such that $L = \varphi^{-1}(U)$.

The syntactic preorder of a language was first introduced by Sch\"utzenberger in
\cite[p. 10]{Schutzenberger56}. Let $L$ be a language of $A^*$. The \emph{syntactic
preorder} of $L$ is the relation $\leq_L$ defined on $A^*$ by $u \leq_L v$ if and
only if, for every $x, y \in A^*$,
\[
	xuy \in L \implies xvy \in L.
\]
The associated equivalence relation $\sim_L$, defined by $u \sim_L v$ if $u \leq_L v$
and $v\leq_L u$, is the \emph{syntactic congruence} of $L$ and the quotient monoid
$M(L) = A^*/{\sim_L}$ is the \emph{syntactic monoid} of $L$. The natural morphism
$\eta: A^* \to A^*/{\sim_L}$ is the \emph{syntactic stamp} of $L$. The
\emph{syntactic image} of $L$ is the set $P = \eta(L)$.

The \emph{syntactic order} $\leq_P$ is defined on $M(L)$ as follows: $u \leq_P v$ if and
only if for all $x,y \in M(L)$,
$$
	xuy \in P \implies xvy \in P
$$
The partial order $\leq_P$ is stable and the resulting ordered monoid $(M(L), \leq_P)$
is called the \emph{syntactic ordered monoid} of $L$. Note that $P$ is now an upper set 
of $(M(L), \leq_P)$ and $\eta$ becomes an ordered stamp, called the syntactic ordered 
stamp of $L$.


\section{Difference hierarchies}\label{sec:Difference hierarchies}

Let $E$ be a set. In this article, a \emph{lattice} is simply a collection of subsets
of $E$ containing $\emptyset$ and $E$ and closed under taking finite unions and
finite intersections. A \emph{lattice} closed under complement is a \emph{Boolean
algebra}. Throughout this paper, we adopt Hausdorff's convention to denote union 
additively, set difference by a minus sign and intersection as a product. We also 
sometimes denote $L^c$ the complement of a subset $L$ of a set $E$. 

Let $\cF$ be a set of subsets of $E$ containing the empty set. We set $\cB_0(\cF) =
\{\emptyset\}$ and, for each integer $n\geq 1$, we let $\cB_n(\cF)$ denote the class
of all sets of the form
\begin{equation}
	  X = X_1 - X_2 + {} \dotsm {} \pm X_n  \label{eq:Dn}
\end{equation}
where the sets $X_i$ are in $\cF$ and satisfy $X_1 \supseteq X_2\supseteq X_3
\supseteq {}\dotsm{} \supseteq X_n$. By convention, the expression on the right hand
side of (\ref{eq:Dn}) should be evaluated from left to right, but given the
conditions on the $X_i$'s, it can also be evaluated as
\begin{equation}
	(X_1-X_2) + (X_3-X_4) + (X_5-X_6) + \dotsm \label{eq:Dn with ()}
\end{equation}
Since the empty set belongs to $\cF$, one has $\cB_n(\cF) \subseteq \cB_{n+1}(\cF)$
for all $n \geq 0$ and the classes $\cB_n(\cF)$ define a hierarchy within the Boolean
closure of $\cF$. Moreover, the following result, due to
Hausdorff~\cite{Hausdorff49}, holds:

\begin{thm}\label{thm:Hausdorff}
	Let $\cF$ be a lattice of subsets of $E$. The union of the classes $\cB_n(\cF)$ for
	$n\geq 0$ is the Boolean closure of $\cF$.
\end{thm}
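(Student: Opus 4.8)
The plan is to prove both inclusions, the nontrivial one being that the Boolean closure of $\cF$ is contained in $D := \bigcup_{n\geq 0}\cB_n(\cF)$. The inclusion $D \subseteq {}$(Boolean closure) is immediate: every $X$ as in \eqref{eq:Dn} is obtained from members of $\cF$ by finitely many unions and set differences, all of which are available inside any Boolean algebra containing $\cF$. For the converse it suffices to check that $D$ is itself a Boolean algebra containing $\cF$, since the Boolean closure is by definition the least such algebra. Now $\emptyset \in \cB_0(\cF)$, $E \in \cF = \cB_1(\cF)$, and $\cF \subseteq D$; so the whole task reduces to showing that $D$ is closed under complement and under intersection, closure under union then following by De Morgan.

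Closure under complement is a ``shift'' trick, and it is here that the hypothesis $E \in \cF$ enters. Given $X = X_1 - X_2 + \cdots \pm X_n \in \cB_n(\cF)$, I would prepend $E$ to the chain: since $E \supseteq X_1 \supseteq \cdots \supseteq X_n$, evaluating $E - X_1 + X_2 - \cdots \mp X_n$ according to \eqref{eq:Dn with ()} yields exactly the complement $E - X$, because on each ``ring'' of the partition of $E$ determined by the chain the two expressions take complementary values. Hence $X^c \in \cB_{n+1}(\cF)$.

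The heart of the proof is closure under intersection, that is, merging two nested chains into a single one, and this is the step I expect to be the main obstacle. I would argue with characteristic data rather than manipulate the alternating sums directly. For a chain $X_1 \supseteq \cdots \supseteq X_m$ of $\cF$ and a point $x \in E$, let $f_X(x)$ be the largest index $i$ with $x \in X_i$ (and $0$ if there is none); by \eqref{eq:Dn with ()} one has $x \in X$ if and only if $f_X(x)$ is odd, and likewise $x \in Y$ iff $f_Y(x)$ is odd for $Y = Y_1 - \cdots \pm Y_n \in \cB_n(\cF)$. The point is to pass to the product: $f_X(x)\,f_Y(x)$ is odd precisely when both factors are odd, i.e. precisely when $x \in X \cap Y$.

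It then remains to realise the level function $f_X f_Y$ by a single $\cF$-chain. For $1 \leq t \leq mn$ set $W_t = \{x \in E : f_X(x)\,f_Y(x) \geq t\}$. Since $pq$ is nondecreasing in each of $p,q$, the region $\{(p,q) : pq \geq t\}$ is upward closed in the grid $\{0,\dots,m\}\times\{0,\dots,n\}$, hence is the union of finitely many corners $\{p \geq a,\ q \geq b\}$; correspondingly $W_t$ is the finite union of the sets $X_a \cap Y_b$. This is where the lattice hypothesis is used in full: each $X_a \cap Y_b$ lies in $\cF$ (closure under intersection) and their finite union again lies in $\cF$ (closure under union), so $W_t \in \cF$. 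The $W_t$ visibly form a descending chain $W_1 \supseteq \cdots \supseteq W_{mn}$ whose associated level function is $f_X f_Y$, so $W_1 - W_2 + \cdots \pm W_{mn}$ equals $\{x : f_X(x)f_Y(x)\text{ odd}\} = X \cap Y$. Thus $X \cap Y \in \cB_{mn}(\cF) \subseteq D$, which shows $D$ is a Boolean algebra and hence equal to the Boolean closure of $\cF$.
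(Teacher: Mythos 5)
Your proposal is correct, and it shares the paper's skeleton --- the easy inclusion, closure under complement by prepending $E$ to the chain, and the use of the level function ($\mu_X(x)=\max\{i : x\in X_i\}$ in the paper, your $f_X$) together with the observation that membership is equivalent to odd parity of that function. Where you genuinely diverge is in the intersection step, which is indeed the heart of the matter. The paper builds a chain of length $n+m-1$ by the antidiagonal formula $Z_k = \sum_{i+j=k+1,\ i,j\ \text{not both even}} X_iY_j$ and then verifies, by a short case analysis, that $\mu_Z(x)$ is odd exactly when $\mu_X(x)$ and $\mu_Y(x)$ are both odd. You instead take the new level function to be literally the product $f_Xf_Y$ and realise it by its superlevel sets $W_t=\{x : f_X(x)f_Y(x)\geq t\}$, $1\leq t\leq mn$, using the corner decomposition of the upward-closed region $\{pq\geq t\}$ to see that each $W_t$ is a finite union of sets $X_a\cap Y_b$ and hence lies in the lattice $\cF$. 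Your route is arguably more transparent --- the parity bookkeeping is trivial since oddness of a product is exactly oddness of both factors, and the role of the lattice hypothesis is isolated cleanly in the corner decomposition --- but it places $X\cap Y$ only in $\cB_{mn}(\cF)$, whereas the paper's construction places it in $\cB_{n+m-1}(\cF)$. For the theorem as stated (the union over all levels equals the Boolean closure) this loss is harmless; it would matter only if one wanted to track the precise level reached in the difference hierarchy, which is the kind of quantitative information the rest of the paper cares about.
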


\begin{proof} Let $\cB(\cF) = \cup_{n\geq 1}\cB_n(\cF)$. By construction, every
element of $\cB_n(\cF)$ is a Boolean combination of members of $\cF$ and thus
$\cB(\cF)$ is contained in the Boolean closure of $\cF$. Moreover $\cB_1(\cF) = \cF$
and thus $\cF\subseteq \cB(\cF)$. It is therefore enough to prove that $\cB(\cF)$ is
closed under complement and finite intersection. If $X = X_1-X_2+{} \dotsm {}\pm
X_n$, one has
$$
	E-X = E-X_1+X_2- {}\dotsm {} \mp X_n
$$
and thus $X\in \cB(\cF)$ implies $E-X\in \cB(\cF)$. Thus $\cB(\cF)$ is closed under
complement.

Let $X = X_1 - X_2 + {} \dotsm {} \pm X_n$ and $Y = Y_1 - Y_2 + {} \dotsm {}\pm Y_m$
be two elements of $\cB(\cF)$. Let
\begin{align*}
  Z &= Z_1 - Z_2 + {} \dotsm {} \pm Z_{n+m-1} \\
\noalign{\noindent with}
Z_k &= \sum_{\substack{i+j = k+1\\ \text{$i$ and $j$ not both even}}} X_iY_j \\
\noalign{\noindent Therefore}
	Z_1 &= X_1Y_1,\\
	Z_2 &= X_1Y_2 + X_2Y_1,\\
	Z_3 &= X_1Y_3 + X_3Y_1,\\
	Z_4 &= X_1Y_4 + X_2Y_3 + X_3Y_2 + X_4Y_1,\\
			&\vdotswithin{=} \\
	Z_{n+m-1}&=\begin{cases}
								X_nY_m   &\text{if $m$ and $n$ are not both even}\\
								\emptyset&\text{otherwise}
							\end{cases}
\end{align*}
We claim that $Z = XY$. To prove the claim, consider for each set $X =
X_1 - X_2 + {} \dotsm {} \pm X_n$ associated with the decreasing sequence
$X_1$, \ldots, $X_n$ of subsets of $E$, the function $\mu_X$ defined
on $E$ by
$$
  \mu_X(x) = \max\,\{i\geq 1\mid x\in X_i\}
$$
with the convention that $\mu_X(x) = 0$ if $x \in E - X_1$. Then $x\in X$ if and only
if $\mu_X(x)$ is odd. We now evaluate $\mu_Z(x)$ as a function of $i = \mu_X(x)$ and
$j = \mu_Y(x)$. We first observe that if $k \geq i + j$, then $x \notin Z_k$. Next,
if $i$ and $j$ are not both even, then $x \in X_iY_j$ and $X_iY_j \subseteq
Z_{i+j-1}$, whence $\mu_Z(x) = i + j -1$. Finally, if $i$ and $j$ are both even, then
$x \notin Z_{i+j-1}$ and thus $\mu_Z(x)$ is either equal to $0$ or to $i+j-2$.
Summarizing the different cases, we observe that $\mu_X(x)$ and $\mu_Y(x)$ are both
odd if and only if $\mu_Z(x)$ is odd, which proves the claim. It follows that
$\cB(\cF)$ is closed under intersection.
\end{proof}

\noindent An equivalent definition of $\cB_n(\cF)$ was given by Hausdorff
\cite{Hausdorff57}. Let $X \dif Y$ denote the symmetric difference of two subsets $X$
and $Y$ of $E$.

\begin{prop}\label{prop:equivalent diff}
	For every $n \geq 0$, $\cB_n(\cF) = \{X_1 \dif X_2 \dif {} \dotsm {}
	\dif X_n \mid X_i\in \cF\}$.
\end{prop}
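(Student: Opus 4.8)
The plan is to recast both sides of the claimed equality in terms of the parity of the number of sets containing a given point, and then to pass between nested and arbitrary families by exploiting the lattice structure of $\cF$. The starting observation is that, by associativity and commutativity of $\dif$, a point $x$ lies in $X_1 \dif \dotsm \dif X_n$ if and only if $x$ belongs to an odd number of the sets $X_1, \dotsc, X_n$. I would also reuse the function $\mu_X$ and the remark made in the proof of Theorem \ref{thm:Hausdorff}: if $X = X_1 - X_2 + {}\dotsm{} \pm X_n$ comes from a decreasing sequence $X_1 \supseteq \dotsm \supseteq X_n$, then $x \in X$ if and only if $\mu_X(x) = \max\{i \mid x \in X_i\}$ is odd.

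For the inclusion $\cB_n(\cF) \subseteq \{X_1 \dif \dotsm \dif X_n \mid X_i \in \cF\}$, I would begin from a decreasing sequence $X_1 \supseteq \dotsm \supseteq X_n$ in $\cF$. Because the sequence is nested, the number of the $X_i$ containing a point $x$ is exactly $\mu_X(x)$; hence $x$ belongs to an odd number of the $X_i$ if and only if $\mu_X(x)$ is odd, i.e.\ if and only if $x \in X$. Comparing this with the parity description of $\dif$ yields the identity $X_1 - X_2 + {}\dotsm{} \pm X_n = X_1 \dif \dotsm \dif X_n$, so every element of $\cB_n(\cF)$ has the required form.

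For the reverse inclusion, the substantive step is to realise an arbitrary symmetric difference by a \emph{nested} sequence. Given $X_1, \dotsc, X_n \in \cF$, not assumed nested, I would set $Y_k$ to be the union of all $k$-fold intersections $X_{i_1} \cap \dotsm \cap X_{i_k}$ with $i_1 < \dotsm < i_k$, so that $Y_k$ is precisely the set of points lying in at least $k$ of the $X_i$. Since $\cF$ is a lattice, each $Y_k$ is in $\cF$, and by construction $Y_1 \supseteq Y_2 \supseteq \dotsm \supseteq Y_n$. Moreover $\mu_Y(x) = \max\{k \mid x \in Y_k\}$ equals the number of $X_i$ containing $x$, so $x \in Y_1 - Y_2 + {}\dotsm{} \pm Y_n$ if and only if that number is odd, i.e.\ if and only if $x \in X_1 \dif \dotsm \dif X_n$. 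Thus the symmetric difference lies in $\cB_n(\cF)$.

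Finally I would dispose of the degenerate situations: for $n = 0$ both sides reduce to $\{\emptyset\}$, and since $\emptyset \in \cF$ and $X \dif \emptyset = X$, one may pad either representation with empty sets to reconcile expressions of different effective length. The only genuinely delicate point is the definition of the $Y_k$ in the reverse inclusion; this is exactly where the lattice hypothesis on $\cF$ is needed, since without closure under finite unions and intersections the ``at least $k$'' sets need not belong to $\cF$.
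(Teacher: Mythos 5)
Your proof is correct and follows essentially the same route as the paper: your sets $Y_k$ (points lying in at least $k$ of the $X_i$) are exactly the paper's $Z_k = \sum_{i_1,\ldots,i_k\ \text{distinct}} X_{i_1}\dotsm X_{i_k}$, and the forward direction is the same nestedness observation. You merely fill in the parity verification via $\mu_X$ that the paper leaves implicit, and you correctly identify the lattice hypothesis as the point needed to keep the $Y_k$ inside $\cF$.
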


\begin{proof} Indeed, if $X = X_1 - X_2 + {} \dotsm {} \pm X_n$ with $X_1 \supseteq
X_2\supseteq X_3 \supseteq {}\dotsm{} \supseteq X_n$, then $X= X_1 \dif X_2 \dif {}
\dotsm {} \dif X_n$. In the opposite direction, if $X = X_1 \dif X_2 \dif {} \dotsm
{} \dif X_n$, then $X = Z_1 - Z_2 + {} \dotsm {} \pm Z_n$ where $Z_k =
\sum_{\text{$i_1, \ldots, i_k$ distincts}} X_{i_1} \dotsm X_{i_k}$.
\end{proof}


\section{Closure operators}\label{sec:Closure operators}

We review in this section the definition and the basic properties of
closure operators.

Let $E$ be a set. A map $X \to \clos{X}$ from $\cP(E)$ to itself is a \emph{closure
operator} if it is \emph{extensive, idempotent and isotone}, that is, if the
following properties hold for all $X, Y\subseteq E$:
\begin{conditions}
	\item $X\subseteq\clos{X}$ (extensive)

	\item $\clos{\clos{X}} = \clos{X}$ (idempotent)

	\item $X\subseteq Y$ implies $\clos{X}\subseteq\clos{Y}$ (isotone)
\end{conditions}
A set $F\subseteq E$ is \emph{closed} if $\clos{F} = F$. If $F$ is closed, and if
$X\subseteq F$, then $\clos{X}\subseteq \clos{F} = F$. It follows that $\clos{X}$ is
the least closed set containing $X$. This justifies the terminology ``closure''.
Actually, closure operators can be characterised by their closed sets.

\begin{prop}\label{prop:Closure operators}
	A set of closed subsets for some closure operator on $E$ is closed under
	(possibly infinite) intersection. Moreover, any set of subsets of $E$ closed under
	(possibly infinite) intersection is the set of closed sets for some closure
	operator.
\end{prop}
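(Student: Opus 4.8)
The plan is to treat the two assertions separately, since they are converse to one another. For the first, I would start from a closure operator $X \mapsto \clos{X}$ and an arbitrary family $\{F_i\}_{i\in I}$ of closed sets, and show that $F = \bigcap_{i\in I} F_i$ is closed. Extensivity gives $F \subseteq \clos{F}$ for free. For the reverse inclusion I would invoke isotony: since $F \subseteq F_i$ for each $i$, one gets $\clos{F} \subseteq \clos{F_i} = F_i$, and intersecting over $i$ yields $\clos{F} \subseteq F$, whence $\clos{F} = F$. I would also record that the degenerate (empty) intersection, namely $E$ itself, is closed, because $\clos{E} \subseteq E$ trivially while extensivity gives the reverse inclusion; this covers the ``possibly infinite'' clause in its limiting case.

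For the converse, given a family $\cC$ closed under arbitrary intersection, the natural candidate is the operator $\clos{X} = \bigcap\{C \in \cC \mid X \subseteq C\}$. The first thing I would check is that this is well defined, which is the step I expect to be the delicate point: the intersection must be nonempty. This is guaranteed because the empty intersection of members of $\cC$ equals $E$, so $E \in \cC$ and $E$ always belongs to the indexing family; closure of $\cC$ under intersection then ensures $\clos{X} \in \cC$ for every $X$.

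Next I would verify the three defining properties. Extensivity is immediate, since $X$ is contained in every member of the indexing family. Isotony follows because $X \subseteq Y$ shrinks the indexing family, $\{C \mid Y \subseteq C\} \subseteq \{C \mid X \subseteq C\}$, so the corresponding intersections are nested in the opposite order. For idempotence I would observe that $\clos{X} \in \cC$ and $\clos{X} \subseteq \clos{X}$, so $\clos{X}$ itself occurs in the family defining $\clos{\clos{X}}$, giving $\clos{\clos{X}} \subseteq \clos{X}$; combined with extensivity this yields equality.

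Finally I would identify the closed sets. If $C \in \cC$, then $C$ belongs to its own defining family, so $\clos{C} \subseteq C$, and extensivity forces $\clos{C} = C$; thus every member of $\cC$ is closed. Conversely, any closed set $F$ satisfies $F = \clos{F} \in \cC$. Hence the closed sets of this operator are exactly the members of $\cC$, which completes the proof.
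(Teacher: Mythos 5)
Your proposal is correct and follows essentially the same route as the paper: the first direction via isotony applied to $\bigcap_i F_i \subseteq F_i$, and the converse via the operator $\clos{X} = \bigcap\{C \in \cC \mid X \subseteq C\}$. The paper simply states that this map is a closure operator with $\cC$ as its closed sets, whereas you verify the three axioms and the identification of the closed sets explicitly (and sensibly note that the empty intersection puts $E$ in $\cC$, so the operator is well defined); this is just a fuller write-up of the same argument.
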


\begin{proof} Let $X\to \clos{X}$ be a closure operator and let $(F_i)_{i\in I}$ be a
family of closed subsets of $E$. Since a closure is isotone, $\clos{\bigcap_{i\in
I}F_i} \subseteq \clos{F_i} = F_i$. It follows that $\clos{\bigcap_{i\in I}F_i}
\subseteq \bigcap_{i\in I}F_i$ and thus $\bigcap_{i\in I}F_i$ is closed.

Given a set $\cF$ of subsets of $E$ closed under intersection, denote by
$\clos{X}$ the intersection of all elements of $\cF$ containing $X$. Then the map
$X\to \clos{X}$ is a closure operator for which $\cF$ is the set of closed sets.
\end{proof}

In particular, $\clos{X \cap Y}\subseteq\clos{X} \cap \clos{Y}$, but the
inclusion may be strict.

\begin{exa}
The trivial closure is the application defined by
$$
\clos{X} = \begin{cases}
	\emptyset &\text{if $X = \emptyset$}\\
         E  &\text{otherwise}
				 \end{cases}
$$
For this closure, the only closed sets are the empty set and $E$.
\end{exa}

\begin{exa}
	If $E$ is a topological space, the closure in the topological sense
	is a closure operator.
\end{exa}

\begin{exa}
	The convex hull is a closure operator. However, it is not induced by
	any topology, since the union of two convex sets is not necessarily
	convex.
\end{exa}

The \emph{intersection} of two closure operators $X \to
\closI{X}$ and $X \to \closII{X}$ is the function $X \to
\closIII{X}$ defined by $\closIII{X} = \closI{X} \cap \closII{X}$.

\begin{prop}\
   The intersection of two closure operators is a closure operator.
\end{prop}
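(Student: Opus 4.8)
The plan is to verify directly the three defining properties (extensive, idempotent, isotone) for the map $X \to \closIII{X} = \closI{X} \cap \closII{X}$, relying throughout on the fact that both $X \to \closI{X}$ and $X \to \closII{X}$ already satisfy these properties. Two of the three are immediate intersection arguments, and the only step requiring an idea is idempotence, which I expect to be the one mild obstacle.

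For extensivity, extensivity of the two given operators gives $X \subseteq \closI{X}$ and $X \subseteq \closII{X}$, hence $X \subseteq \closI{X} \cap \closII{X} = \closIII{X}$. For isotony, if $X \subseteq Y$ then isotony of each operator yields $\closI{X} \subseteq \closI{Y}$ and $\closII{X} \subseteq \closII{Y}$, and intersecting these two inclusions gives $\closIII{X} \subseteq \closIII{Y}$. No interaction between the two operators is needed here.

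The real point is idempotence. By extensivity (applied to the set $\closIII{X}$) one already has $\closIII{X} \subseteq \closIII{\closIII{X}}$, so it remains to prove the reverse inclusion. The key observation is that $\closIII{X} = \closI{X} \cap \closII{X} \subseteq \closI{X}$; applying isotony and then idempotence of the first operator gives $\closI{\closIII{X}} \subseteq \closI{\closI{X}} = \closI{X}$, and symmetrically $\closII{\closIII{X}} \subseteq \closII{X}$. Intersecting these, $\closIII{\closIII{X}} = \closI{\closIII{X}} \cap \closII{\closIII{X}} \subseteq \closI{X} \cap \closII{X} = \closIII{X}$, which together with the reverse inclusion yields $\closIII{\closIII{X}} = \closIII{X}$.

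One might instead hope to shortcut the argument through Proposition~\ref{prop:Closure operators} by claiming that the closed sets of $\closIII{\phantom{X}}$ are exactly the sets closed for both operators, and that such a family is trivially intersection-closed. I would avoid this route, because it is in fact a trap: a set $F$ with $\closI{F} \cap \closII{F} = F$ need not satisfy $\closI{F} = F$ or $\closII{F} = F$, so the closed sets of the intersection operator can properly contain the common closed sets. The direct verification above sidesteps this subtlety entirely.
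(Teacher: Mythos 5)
Your proof is correct and follows essentially the same route as the paper's: extensivity and isotony by intersecting the corresponding inclusions, and idempotence by combining $\closIII{X}\subseteq\closIII{\closIII{X}}$ (from extensivity) with $\closI{\closIII{X}}\subseteq\closI{\closI{X}}=\closI{X}$ and its analogue for the second operator. The cautionary aside about the closed sets of the intersection operator is mathematically sound but plays no role in the argument.
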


\begin{proof} Let $\closIII{\phantom{X}}$ be the intersection of
$\closI{\phantom{X}}$ and $\closII{\phantom{X}}$. First, since
$X\subseteq\closI{X}$ and $X\subseteq\closII{X}$, one has $X\subseteq\closIII{X} =
\closI{X} \cap \closII{X}$. In particular, $\closIII{X} \subseteq
\closIII{\closIII{X}}$. Secondly, since $\closI{X} \cap \closII{X}
\subseteq \closI{X}$, $\closI{\closI{X} \cap \closII{X}} \subseteq
\closI{\closI{X}} = \closI{X}$. Similarly, $\closII{\closI{X} \cap
\closII{X}} \subseteq \closII{X}$. It follows that
\[
	\closIII{\closIII{X}} = \closI{\closI{X} \cap \closII{X}} \cap \closII{\closI{X}
	\cap \closII{X}} \subseteq \closI{X} \cap \closII{X} = \closIII{X}
\]
and hence $\closIII{X} = \closIII{\closIII{X}}$. Finally, if
$X \subseteq Y$, then $\closI{X} \subseteq \closI{Y}$ and $\closII{X}
\subseteq \closII{Y}$, and therefore $\closIII{X} \subseteq
\closIII{Y}$.\end{proof}

\noindent Let us conclude this section by giving a few examples of closure operators
occurring in the theory of formal languages.


\begin{exa}\label{ex:Iteration} 
\textbf{Iteration}. The map $L \to L^*$ is a closure operator. Similarly, the map $L
\to L^+$, where $L^+$ denotes the subsemigroup generated by $L$, is a closure
operator.
\end{exa}


\begin{exa}\label{ex:Shuffle} \textbf{Shuffle ideal}. 
The \emph{shuffle product} (or simply \emph{shuffle}) of two languages $L_1$ and
$L_2$ over $A$ is the language
\begin{multline*}
	L_1 \shuffle L_2 = \{ w \in A^* \mid w = u_1v_1 {} \dotsm {} u_nv_n \text{ for
	some words $u_1, \ldots, u_n, v_1, \ldots, v_n$ of $A^*$} \\
	\text{such that $u_1 {} \dotsm {} u_n \in L_1$ and $v_1 {} \dotsm {} v_n \in L_2$}
	\}\,.
\end{multline*}
The shuffle product defines a commutative and associative operation over the set of
languages over $A$. Given a language $L$, the language $L \shuffle A^*$ is called the
\emph{shuffle ideal} generated by $L$ and it is easy to see that the map $L \to L
\shuffle A^*$ is a closure operator.

This closure operator can be extended to infinite words in two ways: the \emph{finite
and infinite shuffle ideals} generated by an $\omega$-language $X$ are respectively:
\begin{align*}
	X \shuffle A^* &= \{ y_0x_1y_1 {} \dotsm {} x_ny_nx \mid y_0, \dots, y_n \in A^*
	\text{ and } x_1{} \dotsm {} x_n x \in X \} \\
	X \shuffle A^\omega &= \{ y_0x_1y_1x_2 {} \dotsm {} \mid y_0, \dots, y_n, \dots \in A^*
	\text{ and } x_1x_2{} \dotsm {} \in X \}
\end{align*}
The maps $X \to X \shuffle A^*$ and $X \to X \shuffle A^\omega$ are both closure
operators.
\end{exa}

\begin{exa}\label{ex:Ultimate closure} \textbf{Ultimate closure}. The
\emph{ultimate closure} of a language $X$ of infinite words is defined by:
\[
	\Ult(X) = \{ ux \mid u\in A^* \text{ and } vx \in X \text{ for some } v \in A^*\}
\]
The map $X \to \Ult(X)$ is a closure operator.
\end{exa}


\section{Approximation}\label{sec:Approximation}

In this section, we consider a set $\cF$ of closed sets of $E$ containing the empty
set. It follows that the corresponding closure operator satisfies the condition
$\clos{\emptyset} = \emptyset$. We first define the notion of an \emph{approximation}
of a set by a chain of closed sets. Then the existence of a best approximation will
be established.
In this section, $L$ is a subset of $E$.

\begin{defi}
A chain $F_1 \supseteq F_2 \supseteq {} \dotsm {} \supseteq F_n$ of closed sets is an
\emph{$n$-approximation of $L$} if the following inclusions hold for all $k$ such 
that $2k + 1 \leq n$:
\begin{multline*}
	F_1 - F_2 \subseteq F_1 - F_2 + F_3 - F_4 \subseteq {} \dotsm {} \subseteq F_1 -
	F_2 + {} \dotsm {} + F_{2k-1} - F_{2k}\subseteq {} \dotsm {} \\
	\subseteq L\subseteq {} \dotsm {} \subseteq F_1 - F_2 + F_3 - {} \dotsm {} + F_{2k+1}
	\subseteq {} \dotsm {} \subseteq F_1 - F_2 + F_3 \subseteq F_1
\end{multline*}
\end{defi}
\noindent There is a natural order among the $n$-approximations of a given set $L$.
An $n$-approximation $F_1 \supseteq F_2 \supseteq {} \dotsm {} \supseteq F_n$ of $L$ is
said to be \emph{better} than an $n$-approximation $F'_1 \supseteq F'_2 \supseteq
{} \dotsm {} \supseteq F'_{n}$ if, for all $k$ such that $2k+1 \leq n$,
\begin{align*}
	F_1 - F_2 + F_3 - {} \dotsm {} + F_{2k+1} &\subseteq F'_1 - F'_2 + F'_3 -
	{} \dotsm {} + F'_{2k+1}\\
\noalign{\noindent and}
	F'_1 - F'_2 + {} \dotsm {} + F'_{2k-1} - F'_{2k} &\subseteq F_1 - F_2 +
	{} \dotsm {} + F_{2k-1} - F_{2k}
\end{align*}
We will need the following elementary lemma:

\begin{lem}\label{lem:elementary}
	Let $X$, $Y$ and $Z$ be subsets of $E$.
	\begin{conditions}
		\item \label{item:Z-X} The conditions $X-Y \subseteq Z$ and $X-Z\subseteq Y$ are
		equivalent.
		
		\item \label{item:X+Y} The conditions $Z \subseteq X+Y$ and ${X^c \cap Z}\subseteq
		Y$ are equivalent.
		
		\item \label{item:XYZ} If $Y \subseteq X$ and $X-Y \subseteq Z$, then $X-Z = 
		Y-Z$ and $X + Z = Y + Z$.
	\end{conditions}
\end{lem}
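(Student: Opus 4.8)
The plan is to reduce all three parts to one-line membership chases, using the paper's Hausdorff conventions throughout: $+$ denotes union, $-$ denotes set difference (so that $X - Y = X\cap Y^c$), juxtaposition denotes intersection, and $(\cdot)^c$ denotes complement. The single fact that organises parts \ref{item:Z-X} and \ref{item:X+Y} is that an inclusion involving a set difference can be rephrased as a plain inclusion into a union, namely $X\cap Y^c\subseteq Z$ if and only if $X\subseteq Y + Z$.

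For part \ref{item:Z-X}, I would rewrite the first condition $X - Y\subseteq Z$ as $X\subseteq Y + Z$ via the equivalence just mentioned. Since the right-hand side $X\subseteq Y+Z$ is visibly symmetric in $Y$ and $Z$, interchanging their roles yields $X\subseteq Z + Y$, which is in turn the condition $X - Z\subseteq Y$. Thus both conditions are equivalent to the single symmetric inclusion $X\subseteq Y+Z$, which proves the equivalence. For part \ref{item:X+Y}, the condition $Z\subseteq X+Y$ unpacks directly at the level of elements: every $z\in Z$ lies in $X$ or in $Y$, and discarding those already in $X$ leaves precisely the statement $X^c\cap Z\subseteq Y$; the converse is the same reading backwards. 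This is essentially part \ref{item:Z-X} viewed from the complementary side, so no new idea is needed.

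For part \ref{item:XYZ}, I would first observe that the inclusions $Y - Z\subseteq X - Z$ and $Y + Z\subseteq X + Z$ are automatic from $Y\subseteq X$ by isotonicity of difference and union, so only the reverse inclusions require argument. To show $X - Z\subseteq Y - Z$, take $x\in X - Z$, so $x\in X$ and $x\notin Z$; since $X - Y\subseteq Z$ and $x\notin Z$, the element $x$ cannot lie in $X - Y$, and as $x\in X$ this forces $x\in Y$, giving $x\in Y - Z$. To show $X + Z\subseteq Y + Z$, take $x\in X + Z$: if $x\in Z$ there is nothing to prove, and if $x\in X - Z$ the preceding argument places $x\in Y\subseteq Y + Z$. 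Combining with the automatic inclusions gives both equalities.

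The main point to watch is not a genuine obstacle but a matter of notational discipline: because union, difference, and intersection are written additively, subtractively, and multiplicatively, I must keep the reading $X - Y = X\cap Y^c$ fixed so that the symbolic manipulations in parts \ref{item:Z-X} and \ref{item:X+Y} are interpreted correctly. Once that interpretation is in force, every assertion of the lemma follows from a short element-wise verification, and there is no step of real difficulty.
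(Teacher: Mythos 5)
Your proof is correct in all three parts: the rewriting of $X-Y\subseteq Z$ as the symmetric inclusion $X\subseteq Y+Z$, the element chase for part \eqref{item:X+Y}, and the two-sided verification in part \eqref{item:XYZ} all check out. The paper states this lemma without proof (it is flagged as elementary), so there is nothing to compare against; your argument is precisely the kind of routine element-wise verification the authors are implicitly relying on.
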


\noindent The description of the best approximation of $L$ requires the introduction
of two auxiliary functions. For every subset $X$ of $E$, set
\begin{equation}\label{eq:f and g}
  f(X) = \clos{X-L} \quad\text{and}\quad g(X) = \clos{X \cap L}
\end{equation}
The key properties of these functions are formulated in the following lemma.

\begin{lem}\label{lem:funcfg}
	The following properties hold for all subsets $X$ and $Y$ of $E$:
\begin{conditions}
	\item \label{item:fg} $X - f(X) \subseteq L$ and $L \subseteq X + g(X^c)$,
	 
	\item \label{item:XYf} if $X \supseteq Y \supseteq L$, then $f(X) \supseteq f(Y)$
	and $X -f(X) \subseteq Y - f(Y) \subseteq L$,

	 \item \label{item:XYg} if $X \subseteq Y \subseteq L$, then $g(X) \subseteq g(Y)$
	 and $L \subseteq Y + g(Y^c) \subseteq X + g(X^c)$.
\end{conditions}
\end{lem}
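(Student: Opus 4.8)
The plan is to prove the three items in order, reducing everything to the extensivity and isotonicity of the closure together with the purely set-theoretic equivalences of Lemma~\ref{lem:elementary}. The monotonicity assertions ($f(Y)\subseteq f(X)$ and $g(X)\subseteq g(Y)$) will fall out of isotonicity, the ``outer'' inclusions (those ending in $\subseteq L$ or $L\subseteq$) will be instances of item~\ref{item:fg}, and the only real work will be the two ``middle'' inclusions in items~\ref{item:XYf} and~\ref{item:XYg}.

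First I would establish \ref{item:fg}, which serves as the engine for the rest. By extensivity, $X-L\subseteq\clos{X-L}=f(X)$; feeding this into the equivalence of Lemma~\ref{lem:elementary}\ref{item:Z-X} (with the triple $X$, $L$, $f(X)$) converts it into $X-f(X)\subseteq L$. Dually, extensivity gives $X^c\cap L\subseteq\clos{X^c\cap L}=g(X^c)$, and the equivalence of Lemma~\ref{lem:elementary}\ref{item:X+Y} (with $Z=L$ and the pair $X$, $g(X^c)$) turns this into $L\subseteq X+g(X^c)$. So both halves of \ref{item:fg} are just extensivity read through Lemma~\ref{lem:elementary}.

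For \ref{item:XYf} assume $X\supseteq Y\supseteq L$. Then $Y-L\subseteq X-L$, so isotonicity immediately yields $f(Y)\subseteq f(X)$, and the two extreme inclusions $X-f(X)\subseteq L$ and $Y-f(Y)\subseteq L$ are instances of \ref{item:fg}. The genuinely new point is $X-f(X)\subseteq Y-f(Y)$, and here my plan is \emph{not} to compare the two differences directly but to first absorb the gap $X-Y$ into $f(X)$: since $L\subseteq Y$ we have $X-Y\subseteq X-L\subseteq f(X)$, so Lemma~\ref{lem:elementary}\ref{item:XYZ} (applied with $Y\subseteq X$ and $Z=f(X)$) yields the identity $X-f(X)=Y-f(X)$. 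Combining this with $f(Y)\subseteq f(X)$ gives $X-f(X)=Y-f(X)\subseteq Y-f(Y)$, as required.

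Item~\ref{item:XYg} is the order-dual, and I would prove it by the same mechanism. Assuming $X\subseteq Y\subseteq L$, isotonicity gives $g(X)\subseteq g(Y)$ and, via $Y^c\subseteq X^c$, also $g(Y^c)\subseteq g(X^c)$, while the outer inclusions $L\subseteq Y+g(Y^c)$ and $L\subseteq X+g(X^c)$ come from \ref{item:fg}. For the middle inclusion $Y+g(Y^c)\subseteq X+g(X^c)$ I would again absorb the gap: since $Y\subseteq L$ we have $Y-X=Y\cap X^c\subseteq L\cap X^c\subseteq g(X^c)$, so Lemma~\ref{lem:elementary}\ref{item:XYZ} (applied with $X\subseteq Y$ and $Z=g(X^c)$) gives $Y+g(X^c)=X+g(X^c)$, whence $Y+g(Y^c)\subseteq Y+g(X^c)=X+g(X^c)$. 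The only obstacle worth naming is this recurring move: the two middle inclusions must not be attacked head-on; instead one uses part~\ref{item:XYZ} of Lemma~\ref{lem:elementary} to swap $X$ and $Y$ inside a single difference (resp.\ union) while holding the closure term fixed, and invokes the monotonicity of $f$ (resp.\ $g$) only afterwards.
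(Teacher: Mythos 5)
Your proof is correct and follows essentially the same route as the paper's: the monotonicity claims come from isotonicity, and the two middle inclusions are obtained exactly as in the paper by absorbing the gap $X\mathrel{\triangle}Y$ into the closure term via part~(3) of Lemma~\ref{lem:elementary} before invoking monotonicity of $f$ or $g$. The only cosmetic difference is that for item~(1) you route extensivity through parts (1) and (2) of Lemma~\ref{lem:elementary}, where the paper does the equivalent direct set computation.
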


\begin{proof} Let $X$ and $Y$ be subsets of $E$.

\noindent \eqref{item:fg} follows from a simple computation: 
\begin{align*}
 	X - f(X) &= X - \clos{X-L} \subseteq X - (X-L) = X\cap L \subseteq L \\
	X + g(X^c) &= X + \clos{X^c \cap L} \supseteq (X \cap L) + (X^c \cap L) = L.
\end{align*}
\eqref{item:XYf} Suppose that $X \supseteq Y \supseteq L$. Then $X - L \supseteq Y -
L$ and thus $\clos{X - L} \supseteq \clos{Y - L}$, that is, $f(X) \supseteq f(Y)$.
Furthermore, $X - Y \subseteq X - L \subseteq \clos{X - L} = f(X)$. Applying part
\eqref{item:XYZ} of Lemma \ref{lem:elementary} with $Z = f(X)$, one gets $X - f(X) =
Y - f(X)$, whence $X - f(X)\subseteq Y - f(Y)$ since $f(X) \supseteq f(Y)$ by the
first part of \eqref{item:XYf}.

\noindent \eqref{item:XYg} Suppose that $X \subseteq Y \subseteq L$. Then ${X \cap L}
\subseteq {Y \cap L}$ and thus $g(X) \subseteq g(Y)$. Furthermore, $Y - X = X^c \cap
Y \subseteq X^c \cap L \subseteq \clos{X^c \cap L} = g(X^c)$. Applying part
\eqref{item:XYZ} of Lemma \ref{lem:elementary} with $Z = g(X^c)$, one gets $Y +
g(X^c)= X + g(X^c)$, whence $Y + g(Y^c) \subseteq X + g(X^c)$ since $g(Y^c) \subseteq
g(X^c)$ by the first part of \eqref{item:XYg}.
\end{proof}

\begin{lem} \label{lem:funcfg2}
	Let $F_1 \supseteq F_2 \supseteq {} \dotsm {} \supseteq F_n$ be an
	$n$-approximation of $L$ and, for $1 \leq k \leq n$, let $S_k = F_1 - F_2 + {}
	\dotsm {} \pm F_k$. Then, for $1 \leq k \leq n$,
\begin{equation}
\left\{
\begin{aligned}\label{eq:fSk}
	f(S_k) &= f(F_k) \text{ if $k$ is odd} \\
	g(S_k^c) &= g(F_k) \text{ if $k$ is even}
\end{aligned}
\right.
\end{equation}
\end{lem}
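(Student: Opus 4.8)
The plan is to prove both identities by direct set manipulation, exploiting the recurrence linking $S_k$ to $S_{k-1}$ together with the containments supplied by the approximation, without any induction and without appealing to Lemmas~\ref{lem:elementary} or~\ref{lem:funcfg}. Recall that in Hausdorff's additive notation the partial sums satisfy $S_k = S_{k-1} + F_k = S_{k-1}\cup F_k$ when $k$ is odd and $S_k = S_{k-1} - F_k = S_{k-1}\cap F_k^c$ when $k$ is even, with the convention $S_0 = \emptyset$. Recall also that the chain defining an $n$-approximation reads
$$
  S_2 \subseteq S_4 \subseteq \dotsm \subseteq L \subseteq \dotsm \subseteq S_3 \subseteq S_1,
$$
so that $S_j \subseteq L$ for every even index $j$ and $L \subseteq S_j$ for every odd index $j$ occurring in this chain.

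For the odd case, I would intersect the identity $S_k = S_{k-1}\cup F_k$ with $L^c$ to obtain
$$
  S_k - L = (S_{k-1}-L)\cup(F_k - L).
$$
Since $k$ is odd, $k-1$ is even and hence $S_{k-1}\subseteq L$, so that $S_{k-1}-L = \emptyset$ and $S_k - L = F_k - L$. Applying the closure operator gives $f(S_k) = \clos{S_k-L} = \clos{F_k-L} = f(F_k)$. (For $k=1$ this is the degenerate case $S_0 = \emptyset$, and the identity holds trivially.)

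The even case is exactly dual. From $S_k = S_{k-1}\cap F_k^c$ I would pass to complements, $S_k^c = S_{k-1}^c\cup F_k$, and intersect with $L$:
$$
  S_k^c\cap L = (S_{k-1}^c\cap L)\cup(F_k\cap L).
$$
Here $k-1$ is odd, so $L\subseteq S_{k-1}$ and therefore $S_{k-1}^c\cap L = \emptyset$. This leaves $S_k^c\cap L = F_k\cap L$, and applying the closure yields $g(S_k^c) = \clos{S_k^c\cap L} = \clos{F_k\cap L} = g(F_k)$.

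The set algebra above is routine; the only point requiring care is the bookkeeping with the definition of an $n$-approximation, namely verifying that the index $k-1$ always falls within the range of the defining chain. In the odd case I use an even partial sum $S_{k-1}$ with $k-1\le n-1$, and in the even case an odd partial sum $S_{k-1}$ with $k-1\le n-1$; a short check of the extremal values of $k$ (according to the parity of $n$) confirms that the required inclusion is in each case among those guaranteed by the definition, so that one never needs a containment for $S_n$ of the wrong parity. Once this is settled, the two one-line identities $S_k - L = F_k - L$ and $S_k^c\cap L = F_k\cap L$ close the argument.
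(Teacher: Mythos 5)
Your proof is correct and follows essentially the same route as the paper's: both arguments use $S_{k-1}\subseteq L$ (for $k$ odd) to get $S_k-L=F_k-L$ and $L\subseteq S_{k-1}$ (for $k$ even) to get $S_k^c\cap L=F_k\cap L$, then apply the closure; you merely spell out the set algebra and the index bookkeeping that the paper leaves implicit.
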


\begin{proof} If $k=1$, then $S_1=F_1$ and the result is trivial. Suppose that $k>
1$. If $k$ is odd, $S_{k-1} \subseteq L$ and thus $S_k - L = (S_{k-1} + F_k) - L =
F_k - L$. It follows that $f(S_k) = f(F_k)$. If $k$ is even, $L \subseteq S_{k-1}$
and thus $S_k^c \cap L = (S_{k-1}^c + F_k) \cap L = F_k \cap L$. Therefore $g(S_k^c)
= g(F_k)$.\end{proof}

\noindent Define a sequence $(L_n)_{n\geq 0}$ of subsets of $E$ by
$L_0= E$ and, for all $n \geq 0$,
\begin{equation}\label{eq:fg}
  L_{n+1} =\begin{cases}
		f(L_n) &\text{if $n$ is odd}\\
    g(L_n) &\text{if $n$ is even}
		\end{cases}
\end{equation}
The next theorem expresses the fact that the sequence $(L_n)_{n\geq 0}$ is the best
approximation of $L$ as a Boolean combination of closed sets. In particular, if $L_n
= \emptyset$ for some $n > 0$, then $L \in \cB_{n-1}(\cF)$.

\begin{thm}\label{thm:best $n$-approximation}
	Let $L$ be a subset of $E$. For every $n > 0$, the sequence $(L_k)_{1 \leq k\leq
	n}$ is the best $n$-approximation of $L$.
\end{thm}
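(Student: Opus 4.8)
The plan is to prove two statements separately: (A) that $(L_k)_{1\le k\le n}$ is itself an $n$-approximation of $L$, and (B) that it is better than every other $n$-approximation. Throughout write $S_k = L_1 - L_2 + \dotsm \pm L_k$ for its partial sums, with $S_0 = \emptyset$. For (A) the chain $L_1 \supseteq \dotsm \supseteq L_n$ is immediate: each $L_{k+1}$ equals $\clos{L_k - L}$ or $\clos{L_k \cap L}$, and since $L_k-L$ and $L_k \cap L$ are contained in the closed set $L_k$, isotonicity gives $L_{k+1} \subseteq \clos{L_k} = L_k$. The real content is an induction on $k$ establishing the \emph{error identities} $S_k - L = L_k - L$ when $k$ is odd and $L - S_k = L \cap L_k$ when $k$ is even, the base case $S_1 = L_1 = \clos L$ being clear. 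I would keep this induction self-contained, i.e.\ not invoke Lemma~\ref{lem:funcfg2}, which presupposes the approximation property we are trying to prove.

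In the inductive step the identity at level $k$ rewrites the next term as $L_{k+1} = \clos{S_k - L} = f(S_k)$ (for $k$ odd) or $L_{k+1} = \clos{S_k^c \cap L} = g(S_k^c)$ (for $k$ even), since $L_k-L = S_k-L$, respectively $L\cap L_k = S_k^c\cap L$. Lemma~\ref{lem:funcfg}\eqref{item:fg} then yields $S_{k+1} = S_k - f(S_k) \subseteq L$, respectively $S_{k+1} = S_k + g(S_k^c) \supseteq L$, and a short complementation computation produces the error identity at level $k+1$. This gives the separating inclusions $S_{2j} \subseteq L \subseteq S_{2j+1}$, and the remaining nested inclusions follow formally from the chain: from $L_{2j+2} \subseteq L_{2j}$ one gets $L \cap L_{2j+2} \subseteq L \cap L_{2j}$, i.e.\ $L - S_{2j+2} \subseteq L - S_{2j}$, and since $S_{2j},S_{2j+2} \subseteq L$ this is exactly $S_{2j} \subseteq S_{2j+2}$; the odd partial sums decrease by the symmetric argument using $S_k - L = L_k - L$. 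Hence $(L_k)$ is an $n$-approximation.

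For (B), let $F_1 \supseteq \dotsm \supseteq F_n$ be an arbitrary $n$-approximation and set $S'_k = F_1 - \dotsm \pm F_k$. I would prove by induction on $k$ that $S_k \subseteq S'_k$ for $k$ odd and $S'_k \subseteq S_k$ for $k$ even, which is precisely the ``better'' relation. The base case is $\clos L \subseteq F_1$. For the step with $k$ odd, Lemma~\ref{lem:funcfg2} (now legitimately available, since (A) is proved) identifies $L_{k+1} = f(L_k) = f(S_k)$, so $S_{k+1} = S_k - f(S_k)$; as $S'_k \supseteq S_k \supseteq L$, Lemma~\ref{lem:funcfg}\eqref{item:XYf} gives $S'_k - f(S'_k) \subseteq S_k - f(S_k) = S_{k+1}$; and from $S'_{k+1} = S'_k - F_{k+1} \subseteq L$, Lemma~\ref{lem:elementary}\eqref{item:Z-X} forces $S'_k - L \subseteq F_{k+1}$, whence $F_{k+1} \supseteq f(S'_k)$ since $F_{k+1}$ is closed, so $S'_{k+1} \subseteq S'_k - f(S'_k) \subseteq S_{k+1}$. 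The step with $k$ even is the exact dual, using $L_{k+1} = g(S_k^c)$, Lemma~\ref{lem:funcfg}\eqref{item:XYg} and Lemma~\ref{lem:elementary}\eqref{item:X+Y}.

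I expect the crux to be this last induction, where the three lemmas must interlock at the level of partial sums rather than of the individual $L_k$. The decisive insight is that the approximation constraints on $F$ \emph{force} $F_{k+1}$ to contain the closure $f(S'_k)$ (resp.\ $g((S'_k)^c)$) via Lemma~\ref{lem:elementary}, that this forced closure dominates the canonical correction by isotonicity of $f$ and $g$ (Lemma~\ref{lem:funcfg}\eqref{item:XYf},\eqref{item:XYg}), and that the canonical correction is exactly $f(S_k)$ (resp.\ $g(S_k^c)$) by Lemma~\ref{lem:funcfg2}. By comparison, part (A) is essentially bookkeeping once the error identities are in place.
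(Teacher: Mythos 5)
Your proposal is correct and follows essentially the same route as the paper: the decisive induction in your part (B) --- using Lemma~\ref{lem:elementary} to force $F_{k+1}\supseteq f(S'_k)$ (resp.\ $g((S'_k)^c)$), Lemma~\ref{lem:funcfg}\eqref{item:XYf}/\eqref{item:XYg} to compare with the canonical correction, and Lemma~\ref{lem:funcfg2} to identify that correction with $L_{k+1}$ --- is exactly the paper's argument. Your part (A) only differs cosmetically, re-deriving the content of Lemma~\ref{lem:funcfg2} inline by induction, whereas the paper obtains the separating inclusions $S_{2k}\subseteq L\subseteq S_{2k+1}$ directly from Lemma~\ref{lem:funcfg}\eqref{item:fg} applied to each $L_j$ and only invokes Lemma~\ref{lem:funcfg2} afterwards, so neither organization is circular.
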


\begin{proof} We first show that the sequence $(L_k)_{1 \leq k\leq n}$ is an
$n$-approximation of $L$. First, every $L_k$ is closed by construction. We show that
$L_{k+1} \subseteq L_k$ by induction on $k$. This is true for $k=0$ since $L_0 = E$.
Now, if $k$ is even, $L_{k+1} = \clos{L_k \cap L} \subseteq \clos{L_k} = L_k$ and if
$k$ is odd, $L_{k+1} = \clos{L_k - L} \subseteq \clos{L_k} = L_k$.

Set, for $k > 0$, $S_k = L_1 - L_2 + {} \dotsm {} \pm L_{k}$. By part \eqref{item:fg}
of Lemma \ref{lem:funcfg}, the relations $L_{2k-1} - L_{2k} = L_{2k-1} - f(L_{2k-1})
\subseteq L$ hold for every $k> 0$, and similarly, $L_{2k} - L_{2k+1} = L_{2k} -
g(L_{2k}) \subseteq L^c$. It follows that $S_{2k} \subseteq L$. Furthermore
$S_{2k+1}^c = (L_0 -L_1) + (L_2 -L_3) + {} \dotsm {} + (L_{2k} - L_{2k+1}) \subseteq
L^c$ and thus $L\subseteq S_{2k+1}$.

We now show that the sequence $(L_k)_{1 \leq k\leq n}$ is the best approximation of
$L$. Let $(L'_k)_{1 \leq k \leq n}$ be another $n$-approximation of $L$. Set, for $k
> 0$, $S'_k = L'_1 - L'_2 + {} \dotsm {} \pm L'_{k}$. Then, by definition, $L
\subseteq L'_1$ and thus
$$
  S_1 = L_1 = \clos{L} \subseteq \clos{L'_1} = L'_1 = S'_1.
$$
Let $k > 0$. Suppose by induction that $S_{2k-1} \subseteq S'_{2k-1}$. We show
successively that $S_{2k} \subseteq S'_{2k}$ and $S_{2k+1} \subseteq S'_{2k+1}$.

By definition of an approximation, $S'_{2k} = {S'_{2k-1} - L'_{2k}} \subseteq L$, and
thus ${S'_{2k-1} - L} \subseteq L'_{2k}$ by part \eqref{item:Z-X} of Lemma
\ref{lem:elementary}. It follows that $f(S'_{2k-1}) = \clos{S'_{2k-1} - L} \subseteq
\clos{L'_{2k}} = L'_{2k}$. Now, since $S'_{2k-1} \supseteq S_{2k-1} \supseteq L$, one
can apply part \eqref{item:XYf} of Lemma \ref{lem:funcfg} to get
$$
	S'_{2k} = {S'_{2k-1} - L'_{2k}} \subseteq {S'_{2k-1} - f(S'_{2k-1})} \subseteq
	{S_{2k-1} - f(S_{2k-1})}.
$$
Moreover since $f(S_{2k-1}) = f(L_{2k-1}) = L_{2k}$ by Lemma \ref{lem:funcfg2}, one
gets 
$$
	S'_{2k} \subseteq S_{2k-1} - f(S_{2k-1}) = S_{2k-1} - L_{2k} = S_{2k}.
$$
Similarly, $L \subseteq S'_{2k+1} = S'_{2k} + L'_{2k+1}$ and hence ${(S'_{2k})^c \cap
L} \subseteq L'_{2k+1}$ by part \eqref{item:X+Y} of Lemma \ref{lem:elementary}. It
follows that $g((S'_{2k})^c) = \clos{(S'_{2k})^c \cap L} \subseteq \clos{L'_{2k+1}} =
L'_{2k+1}$. Now, since $S'_{2k} \subseteq S_{2k} \subseteq L$, one can apply part
\eqref{item:XYg} of Lemma \ref{lem:funcfg} to get
$$
	S_{2k} + g(S_{2k}^c) \subseteq S'_{2k} + g((S'_{2k})^c) \subseteq S'_{2k}
	+ L'_{2k+1} = S'_{2k+1}.
$$
Moreover since the equalities $g(S_{2k}^c) = g(L_{2k}) = L_{2k+1}$ hold by Lemma
\ref{lem:funcfg2}, one gets
$$
  S_{2k+1} = S_{2k} + L_{2k+1} = S_{2k} + g(S_{2k}^c) \subseteq S'_{2k+1}. 
$$
which concludes the proof.
\end{proof}  

When $\cF$ is a set of subsets of $E$ closed under arbitrary intersection, Theorem
\ref{thm:best $n$-approximation} provides a characterization of the classes
$\cB_n(\cF)$.

\begin{cor}\label{cor:Bn(F)}
	Let $L$ be a subset of $E$ and let $\cF$ be a set of subsets of $E$ closed under
	(possibly infinite) intersection and containing the empty set. Let $(L_k)_{1 \leq k
	\leq n}$ be the best $n$-approximation of $L$ with respect to $\cF$. Then $L \in
	\cB_{n-1}(\cF)$ if and only if $L_n = \emptyset$ and in this case
\begin{equation}\label{eq:L diff F}
	L = L_1 - L_2 + {} \dotsm {} \pm L_{n-1}
\end{equation}
\end{cor}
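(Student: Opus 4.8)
The plan is to read off both implications from two facts already available: that the sequence $(L_k)$ produced by the recursion \eqref{eq:fg} is, by Theorem \ref{thm:best $n$-approximation}, a genuine $n$-approximation of $L$ \emph{and} the best one, and that, since $\cF$ is closed under intersection and contains $\emptyset$, Proposition \ref{prop:Closure operators} identifies $\cF$ with the closed sets of the associated closure, so that $\clos{\emptyset} = \emptyset$ and every $L_k$ with $k \geq 1$ lies in $\cF$. Throughout, I would write $S_k = L_1 - L_2 + {}\dotsm{}\pm L_k$, so that the approximation property of $(L_k)$ reads $S_{2k} \subseteq L \subseteq S_{2k+1}$ whenever the indices are admissible.

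For the implication $L_n = \emptyset \Rightarrow L \in \cB_{n-1}(\cF)$, I would argue directly. If $L_n = \emptyset$ then $S_n = S_{n-1}$, since the last term $\pm L_n$ contributes nothing. When $n = 2m$ is even this gives $L \subseteq S_{n-1} = S_n \subseteq L$, using $L \subseteq S_{2m-1}$ and $S_{2m} \subseteq L$; when $n = 2m+1$ is odd it gives $L \subseteq S_n = S_{n-1} \subseteq L$, using $L \subseteq S_{2m+1}$ and $S_{2m} \subseteq L$. In either parity one obtains $L = S_{n-1} = L_1 - L_2 + {}\dotsm{}\pm L_{n-1}$, which is exactly the decomposition \eqref{eq:L diff F}; as the $L_k$ are closed and nested, this exhibits $L$ as a member of $\cB_{n-1}(\cF)$.

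For the converse, suppose $L \in \cB_{n-1}(\cF)$ and fix a decomposition $L = F_1 - F_2 + {}\dotsm{} \pm F_{n-1}$ with $F_1 \supseteq {}\dotsm{} \supseteq F_{n-1}$ in $\cF$, padding it by $F_n = \emptyset$. First I would check that $(F_k)_{1\leq k\leq n}$ is an $n$-approximation of $L$. This is the same bookkeeping used in the proof of Theorem \ref{thm:Hausdorff}: setting $\mu(x) = \max\{i \mid x\in F_i\}$ (and $0$ if $x\notin F_1$), a point $x$ lies in the truncated sum $F_1 - {}\dotsm{}\pm F_k$ according only to the parity of $\mu(x)$ and whether $\mu(x)\leq k$, whence the required interlacing inclusions; moreover $F_n = \emptyset$ makes the top partial sum equal to $L$ exactly. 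Since $(L_k)$ is the \emph{best} $n$-approximation, it is better than $(F_k)$, and I would compare the partial sums at the top level: the ``better'' inclusion against $(F_k)$ together with the sandwich inclusion for $(L_k)$ squeezes $S_{n-1}$ between $L$ and $L$, yielding $S_{n-1} = L$ (splitting into $n$ even, where one uses the odd-level inclusion $S_{n-1} \subseteq F_1 - {}\dotsm{} + F_{n-1} = L$, and $n$ odd, where one uses the even-level inclusion). Finally, Lemma \ref{lem:funcfg2} rewrites $f(S_{n-1})$ as $f(L_{n-1})$ (for $n$ even) or $g(S_{n-1}^c)$ as $g(L_{n-1})$ (for $n$ odd), so the recursion \eqref{eq:fg} gives $L_n = f(L) = \clos{L-L} = \clos{\emptyset} = \emptyset$, respectively $L_n = g(L^c) = \clos{L^c\cap L} = \emptyset$.

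I expect the converse to be the delicate part, for two reasons: one must verify that an exact nested decomposition, padded with $\emptyset$, really is an $n$-approximation, so that ``best'' may legitimately be invoked, and one must track indices and parities carefully so that the single equality $S_{n-1} = L$ emerges from combining the ``better than'' relation with the interlacing inclusions. The forward implication, by contrast, is a short squeeze once $L_n = \emptyset$ collapses $S_n$ onto $S_{n-1}$.
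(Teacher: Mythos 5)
Your proposal is correct and follows essentially the same route as the paper: the same squeeze via the interlacing inclusions once $L_n=\emptyset$ collapses $S_n$ onto $S_{n-1}$, and for the converse the same padding of a $\cB_{n-1}(\cF)$-decomposition by $F_n=\emptyset$, appeal to the optimality of $(L_k)$ to get $S_{n-1}=L$, and use of Lemma \ref{lem:funcfg2} to conclude $L_n=\clos{\emptyset}=\emptyset$. The only difference is that you explicitly verify (via the $\mu$ bookkeeping) that the padded decomposition is an $n$-approximation, a point the paper asserts without proof.
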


\begin{proof} If $L \in \cB_{n-1}(\cF)$, then $L = F_1 - F_2 + {} \dotsm {} \pm
F_{n-1}$ with $F_1, \ldots, F_{n-1} \in \cF$. Let $F_n = \emptyset$. Then the
sequence $(F_k)_{1\leq k \leq n}$ is an $n$-approximation of $L$. Since $(L_k)_{1
\leq k \leq n}$ is the best $n$-approximation of $L$, one has $L =L_1 - L_2 + {}
\dotsm {} \pm L_{n-1}$. Thus, with the notation of Lemma \ref{lem:funcfg2},
\begin{equation}\label{eq:Ln}
\left\{
\begin{aligned}
	f(L_{n-1}) &= f(L) = \emptyset \text{ if $n-1$ is odd} \\
	g(L_{n-1}) &= g(L^c) = \emptyset \text{ if $n-1$ is even}
\end{aligned}
\right.
\end{equation}
Therefore, $L_n = \emptyset$ by (\ref{eq:fg}).

\noindent Conversely, suppose that $L_n = \emptyset$. If $n = 2k$, then
\begin{align*}
	&(L_1 - L_2) + {} \dotsm {} + (L_{2k-1} - L_{2k}) \subseteq L   
	\subseteq (L_1 - L_2) + {} \dotsm {} + (L_{2k-3} - L_{2k-2}) + L_{2k-1} \\
\noalign{\noindent If $n = 2k + 1$, then}
	&(L_1 - L_2) + {} \dotsm {} + (L_{2k-1} - L_{2k}) \subseteq L  
	\subseteq (L_1 - L_2) + {} \dotsm {} + (L_{2k-1} - L_{2k}) + L_{2k+1} 
\end{align*}
In both cases, one gets $L = L_1 - L_2 + {} \dotsm {} \pm L_{n-1}$ and thus
$L \in \cB_{n-1}(\cF)$.
\end{proof}

\noindent Let us illustrate this corollary by a concrete example.

\begin{exa}\label{ex:best approximation}
Let $A = \{a, b, c\}$ and let $\cL$ be the lattice of shuffle ideals. If $L$ is the
language $\{1, a, b, c, ab, bc, abc\}$, a straightforward computation gives 
\begin{align*}
	L_0 &= A^*  \\
	L_1 &= g(L_0) = A^* \shuffle (L_0 \cap L) = A^* \shuffle L = A^* \\
	L_2 &= f(L_1) = A^* \shuffle (L_1 - L) = A^* \shuffle \{aa, ac, ba, bb, ca, cb,
	cc\} = A^* - \{1, a, b, c, ab, bc\} \\
	L_3 &= g(L_2) = A^* \shuffle (L_2 \cap L) = A^* \shuffle abc \\
	L_4 &= f(L_3) = A^* \shuffle (L_3 - L) = A^* \shuffle ((A^* \shuffle abc) - abc)\\
	&= A^* \shuffle \{aabc, abac, abca, babc,
	abbc, abcb, cabc, acbc, abcc \} \\
	L_5 &= g(L_4) = A^* \shuffle (L_4 \cap L) = \emptyset
\end{align*}
It follows that $L = L_1 - L_2 + L_3 - L_4$ and $L \in \cB_4(\cL)$, but $L \notin 
\cB_3(\cL)$.
\end{exa}

It is also possible to use the approximation algorithm for a set $\cL$ of subsets of 
$E$ closed under (possibly infinite) union and containing the set $E$. In this case, 
the set 
\[
  \cL^c = \{L^c \mid L \in \cL\}
\]
is closed under (possibly infinite) intersection and contains the empty set.
Consequently, the approximation algorithm can be applied to $\cL^c$ but it describes
the difference hierarchy $\cB_{n}(\cL^c)$. To recover the difference hierarchy
$\cB_{n}(\cL)$, the following algorithm can be used. First compute the best
$\cL^c$-approximation of even length of $L$ and the best $\cL^c$-approximation of odd
length of $L^c$, say
\begin{align}
	L   &= L_1^c - L_2^c + {} \dotsm {} \pm L_n^c    \label{eq:Lc decomposition of L} \\
	L^c &= F_1^c - F_2^c + {} \dotsm {} \pm F_m^c  \label{eq:Lc decomposition of LC} 
\end{align}
with $n$ even, $m$ odd, $L_i, F_i \in \cL$ and $L_n$ and $F_m$ possibly empty to fill
the parity requirements. Now $L$ admits the following $\cL$-decompositions, where 
$L_1$ and $F_1$ are possibly empty (and consequently deleted):
\begin{align}
	L &= L_n - L_{n-1} + {} \dotsm {} \pm L_1  \label{eq:L decomposition of L} \\
	  &= F_m - F_{m-1} + {} \dotsm {} \pm F_1  \label{eq:L decomposition of Lc}	
\end{align}
It remains to take the shortest of the two expressions to get the best
$\cL$-approximation of $L$.


\section{Decidability questions on regular languages}\label{sec:Decidabilty questions}

Given a lattice of regular languages $\cL$, four decidability questions arise:

 
\begin{qu}\label{Q:L decidable}
	Is the membership problem for $\cL$ decidable? 
\end{qu}

\begin{qu}\label{Q:BL decidable}
	Is the membership problem for $\cB(\cL)$ decidable? 
\end{qu}

\begin{qu}\label{Q:BnL decidable}
	For a given positive integer $n$, is the membership problem for $\cB_n(\cL)$
	decidable?
\end{qu}

\begin{qu}\label{Q:hierarchy decidable}
	Is the hierarchy $\cB_n(\cL)$ decidable?
\end{qu}

\noindent In other words, given a regular language $L$, Question \ref{Q:L decidable}
asks to decide whether $L \in \cL$, Question \ref{Q:BL decidable} whether $L \in
\cB(\cL)$ and Question \ref{Q:BnL decidable} whether $L \in \cB_n(\cL)$. Question
\ref{Q:hierarchy decidable} asks whether one can one effectively compute the smallest
$n$ such that $L \in \cB_n(\cL)$, if it exists. Note that if Questions \ref{Q:BL
decidable} and \ref{Q:BnL decidable} are decidable, then so is Question
\ref{Q:hierarchy decidable}. Indeed, given a language $L$, one first decides whether
$L$ belongs to $\cB(\cL)$ by Question \ref{Q:BL decidable}. If the answer is
positive, this ensures that $L$ belongs to $\cB_n(\cL)$ for some $n$ and Question
\ref{Q:BnL decidable} allows one to find the smallest such $n$.

If the lattice $\cL$ is finite, it is easy to solve the four questions in a positive
way. In some cases, a simple application of Corollary \ref{cor:Bn(F)} suffices to solve  
Question \ref{Q:BnL decidable} immediately. One just needs to find the appropriate 
closure operator and to provide algorithms to compute the functions $f(X)$ and $g(X)$ 
defined by \eqref{eq:f and g}.

\begin{exa}\label{ex:B*}
	Let $\cL$ be the lattice generated by the languages of the form $B^*$, where $B
	\subseteq A$. Then both $\cL$ and $\cB(\cL)$ are finite. It is known that a regular
	language belongs to $\cL$ if and only if its syntactic ordered monoid is idempotent
	and commutative and satisfies the inequation $1 \leq x$ for all $x$ \cite{Pin95b}.
	It belongs to $\cB(\cL)$ if and only if its syntactic monoid is idempotent and
	commutative.
	
	Finally, one can define a closure operator by setting $\overline{L} = B^*$, where
	$B$ is the set of letters occurring in some word of $L$. For instance, let $L =
	(\{a,b,c\}^* - \{b,c\}^*) + (\{a,b\}^* - a^*) + 1$. This language belongs to
	$\cB(\cL)$ and its minimal automaton is represented below:
\begin{center}
  \includegraphics{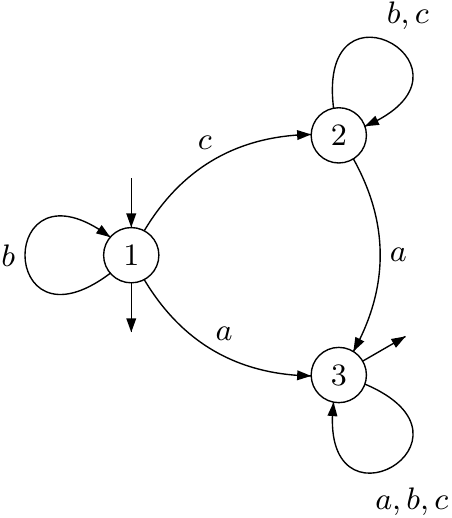}
\end{center}
Applying the approximation algorithm of Section \ref{sec:Approximation}, one gets
$L_0 = \{a,b,c\}^*$, $L_1 = \{b,c\}^*$, $L_2 = b^*$ and $L_3 = \emptyset$ and thus $L
= \{a,b,c\}^* - \{b,c\}^* + b^*$ is the best $3$-approximation of $L$.
\end{exa}
If the lattice is infinite, our four questions become usually much harder, but can
still be solved in some particular cases. We will discuss this in Sections
\ref{sec:polynomial closure} and \ref{sec:Cyclic}, but first present a powerful tool
introduced in \cite{Carton97}, chains in ordered monoids.


\section{Chains and difference hierarchies}\label{sec:Chains and difference
hierarchies}

Chains can be defined on any ordered set. We first give their definition, then
establish a connection with difference hierarchies.


\begin{defi}\label{def:chain}
Let $(E, \leq)$ be a partially ordered set and let $X$ be a subset of $E$. A
\emph{chain} of $E$ is a strictly increasing sequence
\[
	x_0 < x_1 < \ldots < x_{m-1} 
\] 
of elements of $E$. It is called an \emph{$X$-chain} if $x_0$ is in $X$ and the
$x_i$'s are alternatively elements of $X$ and of its complement $X^c$. The integer
$m$ is called the \emph{length} of the chain. We let $m(X)$ denote the maximal length
of an $X$-chain.
\end{defi}

\noindent There is a subtle connection between chains and difference hierarchies of
regular languages. Let $M$ be a finite ordered monoid and let $\varphi : A^*
\rightarrow M$ be a surjective monoid morphism. Let
\[
  \cL = \{ \varphi^{-1}(U) \mid \text{$U$ is an upper set of $M$} \}
\]
By definition, every language of $\cL$ is recognised by the ordered monoid $M$.

\begin{prop}\label{prop:chains to difference}
	If there exists a subset $P$ of $M$ such that $L= \varphi^{-1}(P)$ and $m(P) \leq
	n$, then $L$ belongs to $\cB_n(\cL)$.
\end{prop}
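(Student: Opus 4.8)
The plan is to reduce the statement to a purely order-theoretic fact about the finite poset $M$ and then settle that fact by an explicit construction. Write $\mathcal{U}$ for the lattice of upper sets of $M$. Since $\varphi^{-1}$ commutes with unions, intersections and set differences and preserves inclusion, and since $\varphi^{-1}(U) \in \cL$ for every $U \in \mathcal{U}$ by the definition of $\cL$, any decomposition $P = U_1 - U_2 + \dotsm \pm U_n$ with $U_1 \supseteq \dotsm \supseteq U_n$ in $\mathcal{U}$ yields $L = \varphi^{-1}(P) = \varphi^{-1}(U_1) - \dotsm \pm \varphi^{-1}(U_n)$, which witnesses $L \in \cB_n(\cL)$. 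So it suffices to prove that $m(P) \leq n$ implies $P \in \cB_n(\mathcal{U})$; note that the monoid structure plays no role here, only the order on $M$.

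First I would attach to each $x \in M$ the rank
\[
\rho(x) = \max\{\, k \mid \text{there is a $P$-chain } x_0 < \dotsm < x_{k-1} \text{ with } x_{k-1} \leq x \,\},
\]
with $\rho(x) = 0$ when no element of $P$ lies below $x$. Since passing from $x$ to some $y \geq x$ only enlarges the set of elements lying below, $\rho$ is isotone, so each set $U_k = \{x \in M \mid \rho(x) \geq k\}$ is an upper set and $U_1 \supseteq U_2 \supseteq \dotsm$. A chain realising $\rho(x)$ is in particular a $P$-chain, so $\rho(x) \leq m(P) \leq n$ for all $x$; hence $U_{n+1} = \emptyset$ and only $U_1, \ldots, U_n$ survive. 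With the function $\mu_X$ introduced in the proof of Theorem \ref{thm:Hausdorff} one has $\mu_U(x) = \rho(x)$, so $x \in U_1 - U_2 + \dotsm \pm U_n$ holds exactly when $\rho(x)$ is odd.

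The crux is therefore the parity lemma: $x \in P$ if and only if $\rho(x)$ is odd. I would prove both directions by the same extension argument. Take a maximal $P$-chain $x_0 < \dotsm < x_{k-1}$ below $x$, with $k = \rho(x)$; by alternation its top $x_{k-1}$ lies in $P$ iff $k-1$ is even. If the parity of $k$ clashed with the membership of $x$ --- say $x \in P$ but $k$ is even, so $x_{k-1} \in P^c$ --- then $x_{k-1} \neq x$, whence $x_{k-1} < x$ strictly, and appending $x$ extends the chain to length $k+1$, contradicting maximality; the symmetric case $x \in P^c$ with $k$ odd is identical. This forces $x \in P \iff \rho(x)$ odd, and together with the previous paragraph gives $P = U_1 - U_2 + \dotsm \pm U_n \in \cB_n(\mathcal{U})$, completing the proof. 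The only delicate point is the strictness $x_{k-1} < x$, which is precisely where the membership clash is used; everything else is routine bookkeeping with the $\mu$-function of Theorem \ref{thm:Hausdorff}.
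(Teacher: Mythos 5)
Your proof is correct and is essentially the paper's own argument: both construct the upper sets $U_k$ as the level sets of a rank function measuring maximal $P$-chain length and conclude via the parity characterisation of membership in $P$. The only (cosmetic) difference is that the paper ranks $s$ by chains ending exactly at $s$, which makes the parity claim immediate and pushes the chain-extension trick into the proof that $U_k$ is an upper set, whereas your rank $\rho$ uses chains with top $\leq x$, making upper-set-ness trivial and relocating the same extension trick into the parity lemma; the two rank functions coincide.
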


\noindent Before starting the proof, let us clarify a delicate point. The condition
$L= \varphi^{-1}(P)$ means that $L$ is recognised by the \emph{monoid $M$}. It does
not mean that $L$ is recognised by the \emph{ordered monoid $M$}, a property which
would require $P$ to be an upper set.

\begin{proof}
For each $s \in M$, let $m(P, s)$ be the maximal length of a $P$-chain ending with
$s$. Finally, let, for each $k > 0$,
$$
	U_k = \{s \in M \mid m(P, s) \geq k \}
$$
We claim that $U_k$ is an upper set of $M$. Indeed, if $s \in U_k$, there exists a
$P$-chain $x_0 < x_1 < {} \dotsm {} < x_{r-1} = s$ of length $r \geq k$. Let $t$ be an
element of $M$ such that $s \leq t$. If $s$ and $t$ are not simultaneously in $P$,
then $x_0 < x_1 < {} \dotsm {} < x_{r-1} < t$ is a $P$-chain of length $r+1 \geq k$.
Otherwise, $x_0 < x_1 < {} \dotsm {} < x_{r-2} < t$ is a $P$-chain of length $r \geq k$.
Thus $m(P, t) \geq k$, and $t \in U_k$, proving the claim.

We now show that
\begin{equation}\label{eq:P =}
	P = U_1 - U_2 + U_3 - U_4 {} \dotsm {} \pm U_n  
\end{equation}
First observe that $s \in P$ if and only if $m(P, s)$ is odd. Since $m(P) \leq
n$, one has $m(P, s) \leq n$ for every $s \in M$ and thus $U_{n+1} =
\emptyset$. Formula \eqref{eq:P =} follows, since for each $r \geq 0$, 
\[
	\{s \in M \mid m(P, s) = r \} = U_r - U_{r+1}.
\]
Let, for $1 \leq i\leq n$, $L_i = \varphi^{-1}(U_i)$. Since $U_i$ is
an upper set, each $L_i$ belongs to $\cL$. Moreover, one gets from \eqref{eq:P =}
the formula
\begin{equation}\label{eq:L =}
	L = L_1 - L_2 + L_3 {} \dotsm {} \pm L_n  
\end{equation}
which shows that $L \in \cB_n(\cL)$.
\end{proof}

\noindent We now establish a partial converse to Proposition \ref{prop:chains to
difference}. A \emph{lattice of regular languages} is a set $\cL$ of regular
languages of $A^*$ containing $\emptyset$ and $A^*$ and closed under finite union and
finite intersection.

\begin{prop}\label{prop:BnV}
	Let $\cL$ be a lattice of regular languages. If a language $L$ belongs to
	$\cB_n(\cL)$, then there exist an ordered stamp $\eta : A^* \rightarrow M$ and a
	subset $P$ of $M$ satisfying the following conditions:
	\begin{conditions} 
	\item \label{item:restricted product} $\eta$ is a restricted product of 
	syntactic ordered stamps of members of $\cL$,
	
	\item \label{item:phi-1 P} $L= \eta^{-1}(P)$,
	
	\item \label{item:m(P)} $m(P) \leq n$.
\end{conditions}
\end{prop}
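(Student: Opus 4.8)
The plan is to start from a difference representation of $L$ and build the required stamp as a restricted product of the syntactic ordered stamps of its pieces. Since $L \in \cB_n(\cL)$, write $L = X_1 - X_2 + {} \dotsm {} \pm X_n$ with $X_1 \supseteq X_2 \supseteq {} \dotsm {} \supseteq X_n$ and each $X_i \in \cL$. Each $X_i$ is recognised by its syntactic ordered stamp $\eta_i : A^* \to M_i$, say $X_i = \eta_i^{-1}(U_i)$ with $U_i$ an upper set of $M_i$ (its syntactic image). First I would let $\eta : A^* \to M$ be the restricted direct product of $\eta_1, \dotsm, \eta_n$ (the binary construction of Section \ref{sec:Prerequisites} extends at once to a finite family), so that $M$ is an ordered submonoid of $M_1 \times \dotsm \times M_n$ equipped with the componentwise order. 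This exhibits $\eta$ as a restricted product of syntactic ordered stamps of members of $\cL$, which is exactly condition \eqref{item:restricted product}.

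Next I would transport the upper sets $U_i$ to $M$. Each projection $\pi_i : M \to M_i$ is an order-preserving morphism, so $V_i := \pi_i^{-1}(U_i)$ is an upper set of $M$, and since $\eta_i = \pi_i \circ \eta$ one has $\eta^{-1}(V_i) = \eta_i^{-1}(U_i) = X_i$. As $\eta$ is surjective, the inclusions $X_1 \supseteq {} \dotsm {} \supseteq X_n$ pull back to $V_1 \supseteq {} \dotsm {} \supseteq V_n$ (apply $\eta$ to $\eta^{-1}(V_{i+1}) \subseteq \eta^{-1}(V_i)$ and use $\eta\eta^{-1}(V)=V$). Setting $P = V_1 - V_2 + {} \dotsm {} \pm V_n$ and using that $\eta^{-1}$ commutes with all Boolean operations yields $\eta^{-1}(P) = X_1 - X_2 + {} \dotsm {} \pm X_n = L$, which is condition \eqref{item:phi-1 P}.

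It remains to bound $m(P)$, and this is the step I expect to be the crux. Following the device in the proof of Theorem \ref{thm:Hausdorff}, define $\mu_P(s) = \max\{ i \geq 1 \mid s \in V_i \}$, with $\mu_P(s) = 0$ when $s \notin V_1$, so that $s \in P$ if and only if $\mu_P(s)$ is odd. The key point is that, because every $V_i$ is an upper set and the $V_i$ are nested, $\mu_P$ is isotone: if $s \leq t$ then $\{ i \mid s \in V_i \} \subseteq \{ i \mid t \in V_i \}$. Now take any $P$-chain $x_0 < x_1 < {} \dotsm {} < x_{m-1}$ in the sense of Definition \ref{def:chain}. Isotonicity gives $\mu_P(x_0) \leq {} \dotsm {} \leq \mu_P(x_{m-1})$, while the defining alternation of the chain forces the parities of $\mu_P(x_0), \dotsm, \mu_P(x_{m-1})$ to alternate; two consecutive equal values being impossible, these inequalities must in fact be strict. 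Since $\mu_P(x_0) \geq 1$ (as $x_0 \in P$) and $\mu_P(x_{m-1}) \leq n$, the chain yields $m$ distinct values in $\{ 1, \dotsm, n \}$, whence $m \leq n$. Therefore $m(P) \leq n$, which is condition \eqref{item:m(P)} and completes the argument. The only delicate points are the surjectivity argument giving the nesting of the $V_i$ and the parity-versus-monotonicity reasoning that upgrades the alternation of a $P$-chain into a strictly increasing sequence of $\mu_P$-values.
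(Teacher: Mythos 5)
Your proposal is correct and follows essentially the same route as the paper: form the restricted product of the syntactic ordered stamps of the $X_i$, pull the syntactic images back to nested upper sets, take $P$ to be the set of elements whose ``level'' (your $\mu_P$, the paper's \emph{cut}) is odd, and bound chains by showing this level is isotone and must strictly increase along a $P$-chain. The only cosmetic difference is that you define $P$ as the difference $V_1 - V_2 + \dotsm \pm V_n$ and then identify it with the odd-level set, whereas the paper defines $P$ directly via the cut; the two definitions coincide.
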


\begin{proof} If $L \in \cB_n(\cL)$, then
\[
	L = L_1 - L_2 + L_3 \ \dotsm\ \pm L_n
\]
with $L_1 \supseteq L_2 \supseteq {} \dotsm {} \supseteq L_n$ and $L_i \in \cL$. Let
$\eta_i: A^* \rightarrow (M_i, \leq_i)$ be the syntactic morphism of $L_i$ and let
$P_i = \eta_i(L_i)$. Then each $P_i$ is an upper set of $M_{i}$ and $L_i =
\eta_i^{-1}(P_i)$. Let $\eta : A^* \rightarrow M$ be the restricted product of the
stamps $\eta_i$. Condition \eqref{item:restricted product} is satisfied by
construction.

Observe that if $\eta(u) = (s_1, \ldots, s_n)$ is an element of $M$, the condition
$s_{i+1} \in P_{i+1}$ is equivalent to $u \in L_{i+1}$, and since $L_{i+1}$ is a
subset of $L_i$, this condition also implies $u\in L_i$ and $s_i \in P_i$.
Consequently, for each element $s = (s_1, \ldots, s_n)$ of $M$, there exists a unique
$k \in \{0, \ldots, n\}$ such that
\[
	s_1 \in P_1, \ldots, s_k \in P_k, s_{k+1}\notin P_{k+1}, \ldots, s_n \notin P_n
\]
This unique $k$ is called the \emph{cut} of $s$. Setting
\[
	P = \{s \in M \mid \text{the cut of $s$ is odd}\}
\]
one gets, with the convention $L_{n+1} = \emptyset$ for $n$ odd, 
\begin{equation}\label{eq:L}
	\eta^{-1}(P) = \bigcup_{\text{$k$ odd}} \Bigl((L_1 \cap {} \dotsm {} \cap L_k)
	- L_{k+1}\Bigr) = \bigcup_{\text{$k$ odd}} (L_k - L_{k+1})
	= L
\end{equation}
which proves \eqref{item:phi-1 P}.

Let now $x_0 < x_1 < {} \dotsm {} < x_{m-1}$ be a $P$-chain. Let, for $0 \leq i \leq
m-1$, $x_i = (s_{i,1}, \ldots, s_{i,n})$ and let $k_i$ be the cut of $x_i$. We claim
that $k_{i+1} > k_i$. Indeed, since $x_i < x_{i+1}$, $s_{i, k_i} \leq_i s_{i+1, k_i}$
and since $P_i$ is an upper set, $s_{i, k_i} \in P_i$ implies $s_{i+1, k_i}\in
P_{i+1}$, which proves that $k_{i+1} \geq k_i$. But since $x_i$ and $x_{i+1}$ are not
simultaneously in $P$, their cuts must be different, which proves the claim. Since
$x_0 \in P$, the cut of $x_0$ is odd, and in particular, non-zero. It follows that $0
< k_0 < k_1 < {} \dotsm {} < k_{m-1}$ and since the cuts are numbers between $0$ and
$n$, $m \leq n$, which proves \eqref{item:m(P)}.
\end{proof}

It is tempting to try to improve Proposition \ref{prop:BnV} by taking for $M$ the
syntactic morphism of $L$ and for $\varphi$ the syntactic morphism of $L$. However,
Example \ref{ex:best approximation} ruins this hope. Indeed, let $F = \{1, a, b, c,
ab, bc, abc\}$ be the set of factors of the word $abc$. Then the syntactic monoid of
$L$ can be defined as the set $F \cup \{0\}$ equipped with the product defined by
\[
  xy = 
  \begin{cases}
  	xy &\text{if $x$, $y$ and $xy$ are all in $F$} \\
  	0  &\text{otherwise}
  \end{cases}
\]
Now the syntactic image of $L$ is equal to $F$. It follows that $M - F = \{0\}$ and
thus, whatever order is taken on $M$, the length of a chain is bounded by $3$.
Nevertheless, if $\cL$ is the lattice of shuffle ideals, then $L$ does not belong to
$\cB_3(\cL)$.

Therefore, if $L$ is a regular language, the maximal length of an $L$-chain cannot be
in general computed in the syntactic monoid of $L$. It follows that decidability
questions on $\cB_n(\cL)$, as presented in Section \ref{sec:Decidabilty questions}
below, cannot in general be solved just by inspecting the syntactic monoid. An
exceptional case where the syntactic monoid suffices is presented in the next
section.


\section{The difference hierarchy of the polynomial closure of a
lattice}\label{sec:polynomial closure}

A language $L$ of $A^*$ is a \emph{marked product} of the languages $L_0, L_1,
\ldots, L_n$ if
\[
	L = L_0a_1L_1 \dotsm a_nL_n
\]
for some letters $a_1, \ldots, a_n$ of $A$. Given a set $\cL$ of languages, the
\emph{polynomial closure} of $\cL$ is the set of languages that are finite unions of
marked products of languages of $\cL$. The \emph{polynomial closure} of $\cL$ is
denoted $\PolL$ and the Boolean closure of $\PolL$ is denoted $\BPolL$. Finally, let
$\coPolL$ denote the set of complements of languages in $\PolL$. In this section, we 
are interested in the difference hierarchy induced by $\PolL$.
We consider several examples.


\subsection{Shuffle ideals}\label{subsec:Shuffle ideals}

If $\cL = \{\emptyset, A^*\}$, then $\PolL$ is exactly the set of shuffle ideals
considered in Examples \ref{ex:Shuffle} and \ref{ex:B*} and $\BPolL$ is the class of
\emph{piecewise testable languages}. The following easy result was mentioned in
\cite{Pin95b}.

\begin{prop}\label{prop:shuffle ideals}
	A language is a shuffle ideal if and only if its syntactic ordered monoid $M$
	satisfies the inequation $1 \leq x$ for all $x \in M$.
\end{prop}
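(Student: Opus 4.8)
The plan is to reduce both sides of the equivalence to one and the same combinatorial property, namely closure under letter insertion, and then to match them through the definition of the syntactic order. First I would record that, since $L\to L\shuffle A^*$ is a closure operator (Example~\ref{ex:Shuffle}), $L$ is a shuffle ideal exactly when $L = L\shuffle A^*$. Unwinding the definition of $\shuffle$, a word lies in $L\shuffle A^*$ iff it contains some word of $L$ as a scattered subword, so $L = L\shuffle A^*$ says precisely that $L$ is upward closed for the subword ordering. Since inserting an arbitrary word amounts to inserting one letter at a time, an immediate induction on length shows that this is equivalent to the single-letter condition: for all $u,v\in A^*$ and all $a\in A$, one has $uv\in L \implies uav\in L$.

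Next I would treat the algebraic side, the point being that the inequation ``$1\leq x$ for all $x\in M$'' can be tested on the generators alone, i.e.\ it is equivalent to $1\leq \eta(a)$ for every letter $a\in A$, where $\eta:A^*\to M$ is the syntactic ordered stamp. For this, set $N=\{x\in M \mid 1\leq x\}$. Stability of the order gives, from $1\leq x$, the inequality $y = 1\cdot y\leq x\cdot y$ for every $y$; combined with $1\leq y$ and transitivity this yields $1\leq xy$, so that $N$ is a submonoid of $M$. As $\eta$ is surjective, $M$ is generated by $\{\eta(a)\mid a\in A\}$, whence $N=M$ as soon as $N$ contains every $\eta(a)$. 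The converse inclusion is of course trivial.

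Finally I would connect the two descriptions. By the definition of the syntactic order on $M(L)$, and because the syntactic image $P=\eta(L)$ satisfies $\eta^{-1}(P)=L$, one has $\eta(u)\leq_P \eta(v)$ iff $u\leq_L v$, that is, iff $xuy\in L\implies xvy\in L$ for all $x,y\in A^*$. Applying this with the empty word for $u$ (so that $\eta(u)=1$) and $v=a$ gives that $1\leq \eta(a)$ holds iff $xy\in L\implies xay\in L$ for all $x,y$, which is exactly the single-letter insertion property for the letter $a$. Chaining the three reductions establishes both implications. The single point that needs care---and that I regard as the main, if modest, obstacle---is the orientation of the order: one must verify that $1\leq x$ translates into \emph{adding} material to a word of $L$ keeping it in $L$, rather than the reverse. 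Once this is checked against the conventions of Section~\ref{sec:Prerequisites}, where $P$ is an upper set, the argument is purely mechanical.
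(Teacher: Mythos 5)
The paper does not actually prove this proposition---it is stated as an ``easy result'' cited from \cite{Pin95b}---so there is no in-paper argument to compare yours against. Judged on its own, your proof is correct and complete: the three reductions (shuffle ideal $\iff$ upward closure under the scattered-subword order $\iff$ closure under single-letter insertion; the generator reduction via the submonoid $N=\{x\in M\mid 1\leq x\}$; and the translation $\eta(u)\leq_P\eta(v)\iff u\leq_L v$ using surjectivity of $\eta$ and $\eta^{-1}(P)=L$) are each sound, and you rightly flag and resolve the one delicate point, namely that the convention making $P$ an upper set orients $1\leq x$ as ``insertion preserves membership'' rather than the reverse.
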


The syntactic characterization of piecewise testable languages follows from a much
deeper result of Simon \cite{Simon75}.

\begin{thm}\label{thm:piecewise testable languages}
	A language is piecewise testable if and only if its syntactic monoid is
	$\cJ$-trivial.
\end{thm}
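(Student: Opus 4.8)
The plan is to run both implications through the \emph{piecewise congruences} $\sim_n$. For $n \geq 0$, declare $u \sim_n v$ when $u$ and $v$ have exactly the same subwords (scattered subsequences) of length at most $n$; writing $u \preceq w$ for ``$u$ is a subword of $w$'', a language is piecewise testable exactly when it is a Boolean combination of shuffle ideals $\{w \mid u \preceq w\} = u \shuffle A^*$, equivalently when it is saturated by some $\sim_n$. Each $\sim_n$ is a congruence of finite index whose quotient is $\cJ$-trivial: on $A^*/{\sim_n}$ the relation $[u] \leJ [v]$ amounts to reverse inclusion of subword sets of length $\le n$, so $[u] \calJ [v]$ forces these sets to coincide, i.e.\ $[u] = [v]$.

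For the direct implication, if $L$ is piecewise testable then $L$ is saturated by some $\sim_n$; as $\sim_n$ is a congruence, it refines the syntactic congruence $\sim_L$, so the syntactic monoid $M(L) = A^*/{\sim_L}$ is a quotient of $A^*/{\sim_n}$. Since $\cJ$-triviality is inherited by quotients, $M(L)$ is $\cJ$-trivial.

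For the converse, which is the deep direction due to Simon, let $\eta : A^* \to M$ be the syntactic morphism of $L$ with $M$ finite and $\cJ$-trivial. It is enough to find an integer $n$, depending only on $M$, such that $u \sim_n v \implies \eta(u) = \eta(v)$: then each $\eta^{-1}(m)$ is a union of $\sim_n$-classes, so $L$ is saturated by $\sim_n$ and hence piecewise testable. Everything therefore reduces to the combinatorial assertion that, for $n$ large enough, sharing all subwords up to length $n$ forces equality of images in $M$.

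This assertion is the heart of the theorem and the step I expect to be the main obstacle. I would prove it by induction on the pair $(|A|, |M|)$, using two features of $\cJ$-trivial monoids: aperiodicity in the form $x^{\omega+1} = x^\omega$ (with $x^\omega$ the idempotent power), and the fact that $\leJ$ is a genuine partial order along which products only descend, $xy \leJ x$ and $xy \leJ y$. The induction is organised by the content $c(w) \subseteq A$ of a word. If $c(w)$ is a proper subset $B \subsetneq A$ then $w \in B^*$, and the induction hypothesis applies over the smaller alphabet $B$. The full-content case $c(w) = A$ is the crux: one cuts $w$ at the first and last occurrences that complete its content (the left and right basic factorizations), peeling off a bounded amount of genuinely new subword information, and then uses aperiodicity to stabilise the idempotent powers that appear and the partial order $\leJ$ to ensure that this descent halts after boundedly many steps. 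Carefully tracking these bounds shows that $\eta(u) = \eta(v)$ as soon as $u$ and $v$ agree on all subwords of a length controlled by $|M|$, which is exactly what is required.
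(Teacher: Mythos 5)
The paper does not prove this statement at all: it is quoted as ``a much deeper result of Simon \cite{Simon75}'' and used as a black box, so there is no in-paper argument to compare yours against. Your first direction is fine: the congruences $\sim_n$ have $\cJ$-trivial quotients (your computation that $[u]\calJ[v]$ forces equality of the subword sets of length at most $n$ is correct), a piecewise testable language is saturated by some $\sim_n$, hence its syntactic monoid is a quotient of $A^*/{\sim_n}$, and $\cJ$-triviality is preserved under quotients of finite monoids since $\cJ$-trivial finite monoids form a pseudovariety.

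The converse, however, is not proved; it is only announced. You correctly reduce everything to the assertion that for $n$ large enough (depending on $M$ alone), $u \sim_n v$ implies $\eta(u) = \eta(v)$, and you say yourself that this is ``the heart of the theorem and the step I expect to be the main obstacle.'' The paragraph that follows is a plan, not an argument: the two claims on which it rests --- that cutting $w$ at the first and last content-completing occurrences ``peels off a bounded amount of genuinely new subword information,'' and that ``this descent halts after boundedly many steps'' with a bound controlled by $|M|$ --- are precisely where all the difficulty of Simon's theorem is concentrated, and neither is justified. Nothing in the sketch explains how aperiodicity ($x^{\omega+1}=x^\omega$) actually enters, nor how one extracts a concrete $n$ from $|M|$. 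To close the argument you would need to prove genuine intermediate lemmas, for instance in the classical style: (i) if $u \sim_{2n} v$ then there is a word $w$ admitting both $u$ and $v$ as subwords with $w \sim_n u \sim_n v$; and (ii) if $u$ is a subword of $w$ and $u \sim_n w$ for $n$ at least (some explicit function of) $|M|$, then $\eta(u)=\eta(w)$ whenever $M$ is $\cJ$-trivial --- the latter being where $x^{\omega+1}=x^\omega$ and the strictness of the $\leJ$-descent are actually used. As it stands the hard direction is a statement of intent, so the proposal has a genuine gap.
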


Note that the closed sets of the closure operator $X \to X \shuffle A^*$ of Example
\ref{ex:Shuffle} are exactly the shuffle ideals. It follows that for the lattice
$\cL$ of shuffle ideals, the four questions mentioned earlier have a positive answer.
More precisely, the decidability of the membership problem for $\cL$ and for
$\cB(\cL)$ follows from Proposition \ref{prop:shuffle ideals} and Theorem
\ref{thm:piecewise testable languages}, respectively. The decidability of Question
\ref{Q:BnL decidable} (and hence of Question \ref{Q:hierarchy decidable}) follows
from the approximation algorithm. See Example \ref{ex:best approximation}.


\subsection{Group languages}\label{subsec:Group languages}

Recall that a \textit{group language} is a language whose syntactic monoid is a
group, or, equivalently, is recognized by a finite deterministic automaton in which
each letter defines a permutation of the set of states. According to the definition
of a polynomial closure, a \textit{polynomial of group languages} is a finite union
of languages of the form $L_0a_1L_1 {} \dotsm {} a_kL_k$ where $a_1, \ldots, a_k$ are
letters and $L_0, \ldots, L_k$ are group languages.

Let $d_{\bG}$ be the metric on $A^*$ defined as follows:
\begin{align*}
	r_{\bG}(u,v) &= \min \left\{|M| \mid \text{$M$ is a finite group that separates $u$
	and $v$} \right\}\\
	d_{\bG}(u,v) &= 2^{-r_{\bG}(u,v)}
\end{align*}
\noindent It is known that $d_{\bG}$ defines the so-called \emph{pro-group topology}
on $A^*$. It is also known that the closure of a regular language for $d_{\bG}$ is
again regular and can be effectively computed. This result was actually proved in two
steps: it was first reduced to a group-theoretic conjecture in \cite{PinReutenauer91}
and this conjecture became a theorem in \cite{RibesZalesskii93}.

Let $\cG$ be the set of group languages on $A^*$ and let $\PolG$ be the polynomial
closure of $\cG$. We also let $\coPolG$ denote the set of complements of languages of
$\PolG$. The following characterization of $\coPolG$ was given in \cite{Pin94}.

\begin{thm}\label{thm:coPolG}
	Let $L$ be a regular language and let $M$ be its syntactic ordered monoid. The
	following conditions are equivalent:
\begin{samepage}
	\begin{conditions}	
	\item $L \in \coPolG$,
	
	\item $L$ is closed in the pro-group topology on $A^*$,
	
	\item for all $x \in M$, $x^\omega \leq 1$.
\end{conditions}	
\end{samepage}
\end{thm}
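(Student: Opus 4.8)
The plan is to prove the equivalence of the three conditions in Theorem~\ref{thm:coPolG} by establishing a cycle of implications, $(1)\Rightarrow(2)\Rightarrow(3)\Rightarrow(1)$, anchoring the whole argument to the known regularity and computability of the pro-group closure discussed just before the statement. The implication $(1)\Rightarrow(2)$ should be the most direct: a language in $\coPolG$ is by definition the complement of a polynomial of group languages, so I would show that each group language is closed in the pro-group topology (a group language $\eta^{-1}(s)$ is clopen because the morphism to its syntactic group is uniformly continuous for $d_{\bG}$), that marked products and finite unions of closed sets behave well, and hence that a set in $\PolG$ is open, making its complement closed.

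For $(2)\Rightarrow(3)$ I would argue topologically on the syntactic ordered monoid. If $L$ is closed, then for every $x \in M$ and every pair $u,v \in A^*$ with $\eta(u)=x^\omega\!\cdot(\text{context})$ and the corresponding element with $x^\omega$ replaced by $1$, the sequence $u^{n!}$ converges to an idempotent-type limit in the pro-group topology; the key density fact is that $x^\omega$ lies in the closure of $\{1\}$ under the relevant approximations, so any context sending $x^\omega$ into the syntactic image $P$ must also send $1$ into $P$. Translating this into the syntactic order via the definition of $\leq_P$ from Section~\ref{sec:Prerequisites} yields exactly $x^\omega \leq 1$. The delicate part here is the passage from topological closure to the pointwise inequality in the ordered monoid, which rests on the fact that $x^{\omega}$ and $1$ cannot be separated by a finite group, i.e. $r_{\bG}(x^{\omega\cdot n}, 1)\to\infty$.

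The implication $(3)\Rightarrow(1)$ is where I expect the main obstacle, and I would lean on the deep input cited in the excerpt. Assuming $x^\omega \leq 1$ for all $x$, I want to conclude $L \in \coPolG$, equivalently $L^c \in \PolG$. The natural route is to show that the inequation $x^\omega \leq 1$ forces $L$ to coincide with its pro-group closure (using $(2)\Leftrightarrow(3)$ already partly in hand), and then invoke the theorem of Ribes--Zalesski\u\i\ via \cite{PinReutenauer91, RibesZalesskii93} guaranteeing that the closure of a regular language is again regular and, more precisely, expressible as a finite Boolean-positive combination realising membership in $\PolG$. Concretely, I would take the saturation of $L$ by the syntactic congruence, write $L^c$ as a union over the syntactic image of the complement, and use the inequation to verify that each relevant subset is an open set of the form captured by a marked product of group languages.

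The hard part will be making the last step effective and self-contained rather than merely citing the topological closure result: one must exhibit the explicit polynomial-of-group-languages decomposition, and this is precisely where the reduction of \cite{PinReutenauer91} to a group-theoretic statement, made a theorem in \cite{RibesZalesskii93}, does the heavy lifting. I would therefore structure the proof so that the genuinely new content is the algebraic characterisation $(2)\Leftrightarrow(3)$, while the equivalence with membership in $\coPolG$ is reduced cleanly to the regularity of the pro-group closure.
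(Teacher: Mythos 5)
The first thing to say is that the paper does not prove this theorem at all: it is quoted verbatim from \cite{Pin94} (``The following characterization of $\coPolG$ was given in \cite{Pin94}''), so there is no internal proof to measure you against. Judged against the proof in the literature, your skeleton --- the cycle $(1)\Rightarrow(2)\Rightarrow(3)\Rightarrow(1)$, with $(1)\Rightarrow(2)$ elementary, $(2)\Rightarrow(3)$ a topological argument on the syntactic ordered monoid, and $(3)\Rightarrow(1)$ carrying all the depth via the reduction of \cite{PinReutenauer91} and the Ribes--Zalesski\u\i\ theorem \cite{RibesZalesskii93} --- is exactly the standard architecture, and you have correctly located where the hard work sits.

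Three points need repair. First, in $(1)\Rightarrow(2)$ you treat the openness of a marked product $L_0a_1L_1\dotsm a_kL_k$ of group languages as bookkeeping (``marked products \dots behave well''), but this is a genuine lemma: knowing that $v$ is not separated from $u=u_0a_1\dotsm a_ku_k$ by any small group does not hand you a factorisation of $v$, so one must actually argue that a whole $d_{\bG}$-ball around $u$ admits such a factorisation. Second, your ``key density fact'' in $(2)\Rightarrow(3)$ is stated backwards: the pro-group topology on $A^*$ is Hausdorff, so $x^\omega$ (meaning any word representing it) does \emph{not} lie in the closure of $\{1\}$. The correct statement is that the \emph{sequence} $w^{n!}$ converges to the empty word, hence $sw^{n!}t\to st$ by continuity of the product; since $\eta(w^{n!})=\eta(w)^\omega$ for large $n$, the hypothesis $s\,x^\omega\,t\in P$ puts all the $sw^{n!}t$ in $L$, and closedness of $L$ then forces $st\in L$, which is precisely $x^\omega\leq_P 1$. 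Third, in $(3)\Rightarrow(1)$ you propose to ``use $(2)\Leftrightarrow(3)$ already partly in hand'' to deduce that $L$ equals its closure; but at that stage only $(2)\Rightarrow(3)$ is available, and deducing closedness from the inequation $x^\omega\leq 1$ is itself part of the hard direction --- it is not a free consequence of what precedes. The non-circular route is to prove $(3)\Rightarrow(1)$ directly: starting from the inequation, exhibit $L^c$ as a finite union of marked products of group languages, which is exactly the step that the group-theoretic theorem of \cite{RibesZalesskii93} makes possible and which cannot be made ``self-contained'' at survey length.
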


\noindent Theorem \ref{thm:coPolG} shows that $\coPolG$, and hence $\PolG$, is
decidable. The corresponding result for $\BPolG$ has a long story, related in detail
in \cite{Pin95c}, where several other characterizations can be found.

\begin{thm}\label{thm:BPolG}
	Let $L$ be a regular language and let $M$ be its syntactic monoid. The following
	conditions are equivalent:
\begin{conditions}	
	\item $L \in \BPolG$,
	
	\item the submonoid generated by the idempotents of $M$ is $\cJ$-trivial,
	
	\item for all idempotents $e$, $f$ of $M$, the condition $efe = e$ implies $ef = e
	= fe$.
\end{conditions}	
\end{thm}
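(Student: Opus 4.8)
The plan is to handle the purely algebraic equivalence (2)~$\Leftrightarrow$~(3) first, and then to establish (1)~$\Leftrightarrow$~(2) by passing through the variety of finite monoids cut out by these conditions.

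For (2)~$\Leftrightarrow$~(3), condition~(3) is precisely the combinatorial restriction on idempotents guaranteeing $\cJ$-triviality of $\langle E(M)\rangle$. If $\langle E(M)\rangle$ is $\cJ$-trivial and $e,f\in E(M)$ satisfy $efe=e$, then $e$, $ef$ and $fe$ all lie in $\langle E(M)\rangle$ and satisfy $e\leJ ef\leJ e$ and $e\leJ fe\leJ e$, so $\cJ$-triviality forces $ef=e=fe$. Conversely, assuming~(3), I would show by induction on length that any two $\cJ$-equivalent products of idempotents are equal, using~(3) to collapse each ``return'' of a generating idempotent to itself; this is the standard argument that~(3) rules out nontrivial $\cJ$-classes in $\langle E(M)\rangle$.

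For (1)~$\Leftrightarrow$~(2), first observe that~(3) defines a variety $\bV$ of finite monoids: it is inherited by submonoids and quotients, and under finite products because $E(M_1\times M_2)=E(M_1)\times E(M_2)$, so $\langle E(M_1\times M_2)\rangle$ embeds into $\langle E(M_1)\rangle\times\langle E(M_2)\rangle$ and $\cJ$-triviality is product-closed. By the Eilenberg correspondence $\bV$ matches a Boolean variety of languages, and the theorem asserts this class is $\BPolG$. The inclusion (1)~$\Rightarrow$~(2) is the elementary half: the syntactic monoid of a Boolean combination divides a finite product of syntactic monoids of languages in $\PolG\cup\coPolG$. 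By Theorem~\ref{thm:coPolG} the ordered syntactic monoid of a $\coPolG$-language satisfies $x^\omega\leq 1$, and dually a $\PolG$-language satisfies $1\leq x^\omega$. A short order computation shows either inequation already forces~(3) on the underlying monoid: if $x^\omega\leq 1$ and $efe=e$ with $e,f$ idempotent, then $e,f\leq 1$; left-multiplying $f\leq 1$ by $e$ gives $ef\leq e$, while left-multiplying $e\leq 1$ by $ef$ gives $efe\leq ef$, i.e. $e\leq ef$, whence $ef=e$, and symmetrically $fe=e$. Since~(3) is self-dual, referring only to the underlying monoid, both families satisfy it, and because $\bV$ is a variety, $M$ inherits~(3).

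The hard direction is (2)~$\Rightarrow$~(1), and I expect it to be the main obstacle. One must realise an arbitrary monoid $M$ with $\langle E(M)\rangle$ $\cJ$-trivial as a divisor of a product of ordered monoids coming from $\PolG$ and $\coPolG$, equivalently identify $\bV$ with the variety generated by the syntactic ordered monoids of $\PolG$-languages together with their order-duals. This amounts to the algebraic equality of $\bV$ with the Mal'cev product of the variety of $\cJ$-trivial monoids and the variety $\bG$ of finite groups, and the nontrivial containment is exactly where deep input is required: the description of the pro-group closure of a regular language --- the Ribes--Zalesskii product theorem (equivalently Ash's solution of the type-II conjecture), already invoked above for the computability of $d_{\bG}$. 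The elementary order arguments yield only one inclusion; the reverse, which actually produces a $\PolG$-decomposition of a language recognised by a monoid in $\bV$, rests on this profinite, group-theoretic machinery, and this is what the ``long story'' referred to in the statement reflects.
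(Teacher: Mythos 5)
First, a point of reference: the paper does not prove this theorem at all. It is quoted from the literature (the ``long story'' related in \cite{Pin95c}), so there is no in-paper argument to compare yours against; what follows assesses your sketch on its own terms.

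The steps you actually carry out are correct: (2)$\Rightarrow$(3) via $e\calJ ef\calJ fe$ inside $\langle E(M)\rangle$, the order computation showing that $x^\omega\leq 1$ (or its dual $1\leq x^\omega$) forces the block-group implication, and the observation that condition (3) cuts out a variety, which together give (1)$\Rightarrow$(3). But there are two genuine gaps. The first is (3)$\Rightarrow$(2): you assert a ``standard'' induction that collapses returns of idempotents, but no such elementary argument is known. The inclusion $\mathbf{BG}\subseteq\mathbf{EJ}$ (every block group has a $\cJ$-trivial idempotent-generated submonoid) is itself one of the nontrivial links in the chain of equalities $\mathbf{PG}=\mathbf{J}\ast\mathbf{G}=\mathbf{EJ}=\mathbf{BG}$; the only known route to it closes the cycle through the hard inclusion out of $\mathbf{BG}$, which rests on Ash's theorem. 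If you only need (1)$\Rightarrow$(2), there is an easy direct substitute you should use instead: in a syntactic ordered monoid satisfying $x^\omega\leq 1$, every product of idempotents lies in the submonoid $\{x\mid x\leq 1\}$, which is $\cJ$-trivial because $s=atb$ and $t=csd$ with $a,b,c,d\leq 1$ force $s\leq t\leq s$; then use that $\mathbf{EJ}$ is closed under products and division. The second gap is that (2)$\Rightarrow$(1) is not proved, only attributed to profinite machinery; moreover the reduction is not quite right as stated, because knowing $\langle E(M)\rangle$ is $\cJ$-trivial does not by itself place $M$ in the Mal'cev product of $\mathbf{J}$ and $\bG$ --- the relevant submonoid is Ash's type-II kernel, which contains $\langle E(M)\rangle$ together with all weak conjugates, and showing that this larger submonoid is still $\cJ$-trivial is precisely where Ash's theorem (equivalently, Ribes--Zalesskii) enters. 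So your proposal is a sound roadmap that locates the difficulty correctly, but it is not a proof.
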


\noindent We  now study the difference hierarchy based on $\coPolG$. Let
$\cF$ be the set of closed subsets for the pro-group topology.

\begin{prop}\label{prop:Bn(PolG)}
	For each $n \geq 0$, a regular language belongs to $\cB_n(\coPolG)$ if
	and only if it belongs to $\cB_n(\cF)$.
\end{prop}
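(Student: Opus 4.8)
The plan is to prove the two inclusions separately. The inclusion $\cB_n(\coPolG) \subseteq \cB_n(\cF)$ is immediate. By Theorem~\ref{thm:coPolG} every regular language in $\coPolG$ is closed for the pro-group topology, and since $\cF$ is by definition the family of \emph{all} closed sets, we get $\coPolG \subseteq \cF$. Both families contain $\emptyset$, and the assignment $\cF \mapsto \cB_n(\cF)$ is monotone in its base family, because a decomposition $X_1 - X_2 + {}\dotsm{} \pm X_n$ witnessing membership in $\cB_n$ of the smaller family is \emph{a fortiori} a valid decomposition over the larger one. Hence $\cB_n(\coPolG) \subseteq \cB_n(\cF)$.

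For the reverse inclusion I would run the approximation machinery of Section~\ref{sec:Approximation}. The closed sets $\cF$ of the pro-group topology are stable under arbitrary intersection and contain $\emptyset$, so Corollary~\ref{cor:Bn(F)} applies, the associated closure operator being the topological closure $\clos{\cdot}$. Given a regular $L$ with $L \in \cB_n(\cF)$, I form the sequence $(L_k)$ defined by \eqref{eq:fg}, using the functions $f(X) = \clos{X - L}$ and $g(X) = \clos{X \cap L}$ of \eqref{eq:f and g}; by Theorem~\ref{thm:best $n$-approximation} this is the best $(n+1)$-approximation of $L$. Applying Corollary~\ref{cor:Bn(F)} with $n+1$ in place of $n$, the hypothesis $L \in \cB_n(\cF)$ is equivalent to $L_{n+1} = \emptyset$, and in that case $L = L_1 - L_2 + {}\dotsm{} \pm L_n$ with $L_1 \supseteq L_2 \supseteq {}\dotsm{} \supseteq L_n$.

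The crucial point is that each $L_k$ is a \emph{regular} language, and this is the only nontrivial input. It follows by induction on $k$: $L_0 = A^*$ is regular, and if $L_k$ is regular then so are the Boolean combinations $L_k - L$ and $L_k \cap L$, whence $L_{k+1}$, being the pro-group closure of such a set, is again regular. This last step is exactly the content of the Ribes--Zalesskii theorem \cite{RibesZalesskii93} asserting that the pro-group closure of a regular language is regular; it is the main obstacle of the argument and the reason the statement is special to $\coPolG$, but it is already available to us. Now each $L_k$ with $k \geq 1$ is closed, being a closure, and regular, so Theorem~\ref{thm:coPolG} places it in $\coPolG$. Consequently the decomposition $L = L_1 - L_2 + {}\dotsm{} \pm L_n$ obtained above has all of its terms in $\coPolG$ and satisfies the required chain condition, which gives $L \in \cB_n(\coPolG)$ and completes the proof.
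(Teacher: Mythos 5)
Your proof is correct and follows essentially the same route as the paper: the easy inclusion via Theorem~\ref{thm:coPolG} and monotonicity of $\cB_n$, then the approximation algorithm of Section~\ref{sec:Approximation} combined with the Ribes--Zalesskii fact that the pro-group closure of a regular language is regular, so that each $L_k$ is regular and closed, hence in $\coPolG$. Your explicit induction on the regularity of the $L_k$ just spells out what the paper states in one line.
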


\begin{proof} Theorem \ref{thm:coPolG} shows that $\coPolG$ is a subset of $\cF$. It 
follows that any language of $\cB_n(\coPolG)$ belongs to $\cB_n(\cF)$.

Let now $L$ be a regular language of $\cB_n(\cF)$ and let $(L_k)_{1 \leq k \leq n}$
be the best $n$-approximation of $L$ with respect to $\cF$. Corollary \ref{cor:Bn(F)}
shows that $L \in \cB_{n}(\cF)$ if and only if $L_{n+1} = \emptyset$. Moreover, in
this case $L = L_1 - L_2 + {} \dotsm {} \pm L_n$. According to the algorithm
described at the end of Section \ref{sec:Approximation}, the best $n$-approximation
of $L$ is obtained by alternating the two operations
\[
  f(X) = \clos{X-L} \quad\text{and}\quad g(X) = \clos{X \cap L}
\]
Now, as we have seen, the closure of a regular language for $d_{\bG}$ is regular. It
follows that if $X$ is regular, then both $f(X)$ and $g(X)$ are regular and closed.
By Theorem \ref{thm:coPolG}, they both belong to $\coPolG$. It follows that each
$L_k$ belongs to $\coPolG$ and thus $L \in \cB_n(\coPolG)$.
\end{proof}

This leads to the following corollary:

\begin{cor}\label{cor:Bn(PolG)}
	The difference hierarchy $\cB_n(\coPolG)$ is decidable.
\end{cor}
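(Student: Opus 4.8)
The goal is to prove Corollary \ref{cor:Bn(PolG)}: the difference hierarchy $\cB_n(\coPolG)$ is decidable. The plan is to combine Proposition \ref{prop:Bn(PolG)} with the effectiveness of the approximation algorithm developed in Section \ref{sec:Approximation}. By Proposition \ref{prop:Bn(PolG)}, for a regular language $L$ we have $L \in \cB_n(\coPolG)$ if and only if $L \in \cB_n(\cF)$, where $\cF$ is the set of closed subsets for the pro-group topology. So it suffices to decide membership of $L$ in $\cB_n(\cF)$, and the natural tool is Corollary \ref{cor:Bn(F)}: since $\cF$ is closed under arbitrary intersection and contains $\emptyset$, the best $(n+1)$-approximation $(L_k)_{1 \leq k \leq n+1}$ of $L$ exists, and $L \in \cB_n(\cF)$ if and only if $L_{n+1} = \emptyset$.

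The key point is that this criterion can be checked effectively. First I would recall that the best approximation is computed by alternately applying the operators $f(X) = \clos{X-L}$ and $g(X) = \clos{X \cap L}$, starting from $L_0 = A^*$. The essential input, already established in the proof of Proposition \ref{prop:Bn(PolG)}, is that the pro-group closure of a regular language is again regular and effectively computable (the Ribes--Zalesskii theorem, via \cite{PinReutenauer91, RibesZalesskii93}). Consequently, starting from the regular set $L_0 = A^*$, each successive $L_k$ obtained by applying $f$ or $g$ is regular and can be computed explicitly: set differences, intersections, and complements of regular languages are effective, and the closure step is effective by the cited results. Hence the finite sequence $L_1, \ldots, L_{n+1}$ can be computed, and one simply tests whether $L_{n+1} = \emptyset$, which is decidable for regular languages.

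Putting these together gives the decidability of Question \ref{Q:BnL decidable} for the lattice $\coPolG$, namely membership in $\cB_n(\coPolG)$ for each fixed $n$. To obtain decidability of the full hierarchy (Question \ref{Q:hierarchy decidable}), I would invoke the observation from Section \ref{sec:Decidabilty questions}: once membership in each $\cB_n(\coPolG)$ is decidable, it remains to know whether $L$ lies in the Boolean closure $\cB(\coPolG) = \BPolG$ at all, which is decidable by Theorem \ref{thm:BPolG}. If so, $L \in \cB_n(\coPolG)$ for some $n$, and one finds the least such $n$ by testing $n = 1, 2, \ldots$ in turn using the approximation algorithm; equivalently, the least $n$ is simply the largest index with $L_n \neq \emptyset$ in the computed sequence.

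I expect the only genuine subtlety to be the effectiveness of the closure operator, but this is precisely what was already secured in the proof of Proposition \ref{prop:Bn(PolG)} and what makes $\cF$ amenable to Corollary \ref{cor:Bn(F)}; the rest is a routine assembly of effective operations on regular languages together with an emptiness test. So the corollary follows immediately, and the proof can be kept short.
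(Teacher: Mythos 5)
Your proposal is correct and follows essentially the same route as the paper: reduce to $\cB_n(\cF)$ via Proposition \ref{prop:Bn(PolG)}, decide membership in $\BPolG$ via Theorem \ref{thm:BPolG}, and then compute the best approximation effectively using the Ribes--Zalesskii result and apply Corollary \ref{cor:Bn(F)}. The only difference is that you spell out the effectiveness of the Boolean operations and the emptiness test in more detail than the paper does, which is harmless.
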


\begin{proof} Let $L$ be a regular language. Theorem \ref{thm:BPolG} shows that one
can effectively decide whether $L \in \BPolG$. If this is the case, it remains to
find the minimal $n$ such that $L \in \cB_n(\cF)$. But Proposition
\ref{prop:Bn(PolG)} shows that $L$ belongs to $\cB_n(\coPolG)$ if and only if it
belongs to $\cB_n(\cF)$. Moreover, since the closure of a regular language can be
effectively computed, the best $n$-approximation of $L$ with respect to $\cF$ can be
effectively computed. Now, Corollary \ref{cor:Bn(F)} gives an algorithm to decide
whether $L \in \cB_n(\cF)$.
\end{proof}


\section{Cyclic and strongly cyclic regular languages}\label{sec:Cyclic}

Cyclic and strongly cyclic regular languages are two classes of regular languages
related to symbolic dynamic and first studied in \cite{BealCartonReutenauer96}. It
was shown in \cite{Carton97} that an appropriate notion of chains suffices to
characterise the difference hierarchy based on the class of strongly cyclic regular
languages. This contrasts with Section \ref{sec:Chains and difference hierarchies},
in which the general results on chain did not lead to a full characterization of
difference hierarchies.

Let $\cA = (Q, A, \cdp)$ be a finite (possibly incomplete) deterministic automaton. A
word $u$ \emph{stabilises} a subset $P$ of $Q$ if $P \cdp u = P$. Given a subset $P$
of $Q$, let $\Stab(P)$ be the set of all words that stabilise $P$. The language
$\Stab(\cA)$ that stabilises $\cA$ is by definition the set of all words which
stabilise at least one nonempty subset of $Q$.

\begin{defi}\label{def:strongly cyclic}
	A language is \emph{strongly cyclic} if it stabilises some finite deterministic
	automaton.
\end{defi}

\begin{exa}\label{ex:strongly cyclic}
If $\cA$ is the automaton represented in Figure \ref{fig:strongly
cyclic}, then 
$$
	\Stab(\{1\}) = (b + aa)^*,\ \Stab(\{2\}) = (ab^*a)^*,\ \Stab(\{1,
	2\}) = a^*
$$ 
and $\Stab(\cA) = (b + aa)^* + (ab^*a)^* + a^*$.
\begin{figure}[H]
  \begin{center}
    \includegraphics{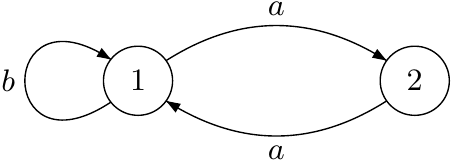}
  \end{center}
	\caption{The automaton $\cA$.}\label{fig:strongly cyclic}
\end{figure}
\end{exa}
\noindent One can show that the set of strongly cyclic languages of $A^*$ forms a
lattice of languages but is not closed under quotients. For instance, as shown in
Example \ref{ex:strongly cyclic}, the language $L = (b + aa)^* + (ab^*a)^* + a^*$ is
strongly cyclic, but Corollary \ref{cor:strongly cyclic are cyclic} will show that
its quotient $b^{-1}L = (b + aa)^*$ is not strongly cyclic, since $aa \in (b + aa)^*$
but $a \notin (b + aa)^*$.

We will also need the following characterization \cite[Proposition
7]{BealCartonReutenauer96}:

\begin{prop}\label{prop:charstabword}
	Let $\cA = (Q,A,E)$ be a deterministic automaton. A word~$u$ belongs to
	$\Stab(\cA)$ if and only if there is some state $q$ of~$\cA$ such that for every
	integer~$n$, the transition $q\cdot u^n$ exists.
\end{prop}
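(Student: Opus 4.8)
The plan is to prove the two implications separately. The forward implication is essentially a restatement of the definition, while the backward one rests on a pigeonhole argument applied to the forward orbit of a single state under $u$.

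First I would record a small observation about the (possibly partial) action of $u$ on $Q$: if a nonempty $P \subseteq Q$ satisfies $P \cdot u = P$, then, because $P$ is finite, the map $q \mapsto q \cdot u$ must be defined on all of $P$ and restrict to a permutation of $P$. Indeed, $P \cdot u$ consists of the images of those states of $P$ on which $u$ is defined, so $|P \cdot u| \le |P|$ with equality only when every such transition is defined and the map is injective on $P$. Granting this, the forward implication is immediate: if $u \in \Stab(\cA)$, choose a witnessing nonempty $P$ with $P \cdot u = P$ and any state $q \in P$. Since $u$ permutes $P$, one has $q \cdot u^n \in P$ for every $n \ge 0$, so in particular $q \cdot u^n$ is defined for all $n$.

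For the backward implication, suppose some state $q$ has $q \cdot u^n$ defined for every $n \ge 0$, and set $q_n = q \cdot u^n$. This is an infinite sequence in the finite set $Q$, so by the pigeonhole principle there are indices $i < j$ with $q_i = q_j$. I would then take $P = \{q_i, q_{i+1}, \dots, q_{j-1}\}$, the periodic part of the orbit, and verify directly that $P \cdot u = P$: the action sends $q_k$ to $q_{k+1}$ for $i \le k < j-1$ and sends $q_{j-1}$ to $q_j = q_i$, so its image on $P$ is exactly $\{q_{i+1}, \dots, q_{j-1}, q_i\} = P$. As $P$ is nonempty, this exhibits $u$ as stabilising a nonempty subset, i.e.\ $u \in \Stab(\cA)$.

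The proposition is elementary, so I do not expect a serious obstacle; the one point that requires care is in the backward direction, where one must extract from the \emph{eventually} periodic orbit a subset that is \emph{exactly} invariant rather than merely forward-stable — this is precisely why the cyclic segment $\{q_i, \dots, q_{j-1}\}$, and not an initial segment of the orbit, is the right choice. The finiteness of $Q$ is used only here, and nowhere does completeness of the automaton play a role beyond the bookkeeping of partial transitions noted above.
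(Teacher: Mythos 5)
Your proof is correct. Note that the paper itself gives no proof of this proposition --- it is quoted from B\'eal, Carton and Reutenauer (cited as Proposition~7 there) --- so there is nothing internal to compare against; your argument (the counting observation that $P\cdot u=P$ forces $u$ to act as a permutation of $P$, plus the pigeonhole extraction of the periodic part of the orbit for the converse) is the standard and expected one, and both directions are sound.
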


\noindent Strongly cyclic languages admit the following syntactic characterization
\cite[Theorem 8]{BealCartonReutenauer96}. As usual, $s^\omega$ denotes the 
idempotent power of $s$, which exists and is unique in any finite monoid.

\begin{prop}\label{prop:strongly cyclic languages}
	Let $L$ be a non-full regular language. The following conditions are equivalent:
	\begin{conditions}
		\item $L$ is strongly cyclic,
	
		\item there is a morphism $\varphi$ from $A^*$ onto a finite monoid $M$ with zero
		such that
\[
	L = \varphi^{-1}(\{s \in M \mid s^\omega \neq 0 \}),
\]

		\item the syntactic monoid $M$ of $L$ has a zero and the syntactic image of $L$
		is the set of all elements $s \in M$ such that $s^\omega \neq 0$.
	\end{conditions}
\end{prop}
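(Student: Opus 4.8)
The plan is to establish the equivalences through the four implications $(1)\Rightarrow(2)$, $(2)\Rightarrow(1)$, $(3)\Rightarrow(2)$ and $(2)\Rightarrow(3)$, which together give $(1)\Leftrightarrow(2)\Leftrightarrow(3)$. The two halves are of a different nature: $(1)\Leftrightarrow(2)$ is handled by two explicit automaton/monoid constructions, while $(2)\Leftrightarrow(3)$ is a descent to the syntactic monoid. Throughout I would write $0$ for the zero of the ambient finite monoid, recall that $s^\omega$ is the unique idempotent in the subsemigroup generated by $s$, and fix once and for all a single exponent $p$ (a suitable common multiple of all indices and periods) so that $t^\omega = t^p$ for every element $t$ of the monoid under consideration.

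For $(1)\Rightarrow(2)$ I would take $M$ to be the transition monoid of the stabilised automaton $\cA = (Q,A,\cdp)$, realised as partial transformations of $Q$, with $\varphi(u)$ the partial map $q \mapsto q \cdp u$; the candidate zero is the nowhere-defined map. The only place non-fullness is used is to guarantee that this empty map actually belongs to $M$: since $L \neq A^*$ there is a word $w \notin \Stab(\cA)$, and by Proposition \ref{prop:charstabword} together with the finiteness of $Q$ some power $w^N$ kills every state, so $\varphi(w^N) = 0$. Reading off domains, the domain of $\varphi(u)^\omega = \varphi(u)^p$ is exactly $\{q \in Q \mid q \cdp u^n \text{ exists for all } n\}$, which by Proposition \ref{prop:charstabword} is nonempty if and only if $u \in \Stab(\cA)$, giving $L = \varphi^{-1}(\{s \mid s^\omega \neq 0\})$. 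For the converse $(2)\Rightarrow(1)$ I would use the reverse, Cayley-type construction: set $Q = M\setminus\{0\}$ and define the partial action $q \cdp a = q\varphi(a)$ whenever $q\varphi(a) \neq 0$. Since $0$ is absorbing, $q\cdp u^n$ exists iff $q\varphi(u)^n \neq 0$; if $\varphi(u)^\omega \neq 0$ then $q = \varphi(u)^\omega$ keeps $q\varphi(u)^n = \varphi(u)^{\omega+n}$ inside the group $H$-class of the idempotent $\varphi(u)^\omega$, hence nonzero for all $n$, whereas taking $n = p$ shows that such a surviving $q$ forces $\varphi(u)^\omega \neq 0$. Thus $\Stab(\cA) = L$.

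The implication $(3)\Rightarrow(2)$ is immediate, the syntactic ordered stamp being a morphism onto $M(L)$ with the required zero and accepting set. The substantive step is $(2)\Rightarrow(3)$. Let $\eta : A^* \to N = M(L)$ be the syntactic morphism, let $P_N = \eta(L)$ be the syntactic image, and let $\pi : M \to N$ be the canonical projection with $\eta = \pi\varphi$ (it exists because $\varphi$ recognises $L$). First, $\pi(0)$ is a zero of $N$, so $N$ has a zero $0_N = \pi(0)$. It then suffices to prove the word-level equivalence $w \in L \Leftrightarrow \eta(w)^\omega \neq 0_N$, since $L = \eta^{-1}(P_N)$ and surjectivity of $\eta$ then force $P_N = \{t \in N \mid t^\omega \neq 0_N\}$. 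Using the uniform exponent $p$ one checks that $w \in L \Leftrightarrow \varphi(w^p)\neq 0 \Leftrightarrow w^p \in L$, from which the direction $\eta(w)^\omega \neq 0_N \Rightarrow w \in L$ is the easy contrapositive: if $\varphi(w^p) = 0$ then $\eta(w^p) = \pi(0) = 0_N = \eta(w)^\omega$.

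The hard direction, and the step I expect to be the main obstacle, is $w \in L \Rightarrow \eta(w)^\omega \neq 0_N$: one must rule out that the nonzero idempotent $\varphi(w^p)$ is collapsed to the zero class by the syntactic congruence. I would argue by contradiction. If $\eta(w^p) = 0_N$, then $w^p$ is syntactically equivalent to any word $z_0$ with $\varphi(z_0) = 0$ (such a $z_0$ exists since $\varphi$ is onto). Because $0$ absorbs, one has $\varphi(x z_0 y) = 0$ and hence $x z_0 y \notin L$ for \emph{every} context $(x,y)$; syntactic equivalence then yields $x w^p y \notin L$ for every context, and in particular $w^p \notin L$, contradicting $w^p \in L$. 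This absorption-in-context argument — showing that the syntactic congruence cannot merge a surviving idempotent with the zero class — is the crux; once it is in place, everything else is routine bookkeeping with the uniform exponent $p$ and with Proposition \ref{prop:charstabword}.
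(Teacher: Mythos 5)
The paper does not actually prove this proposition --- it is quoted as \cite[Theorem 8]{BealCartonReutenauer96} and used as a black box --- so there is nothing internal to compare against; judged on its own, your proof is correct and self-contained. The two constructions for $(1)\Leftrightarrow(2)$ (transition monoid of partial maps on $Q$ in one direction, the partial right-translation automaton on $M\setminus\{0\}$ in the other) are the standard ones, and the key verifications go through: non-fullness really is what puts the empty map into the transition monoid, the domain of the idempotent power $\varphi(u)^\omega$ is exactly the set of states surviving all powers of $u$ by Proposition~\ref{prop:charstabword}, and in the converse direction the orbit $\varphi(u)^{\omega+n}$ stays in the group $\mathcal{H}$-class of $\varphi(u)^\omega$ and so never hits $0$. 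Your treatment of $(2)\Rightarrow(3)$ correctly identifies the one non-routine point, namely that the syntactic congruence cannot merge a surviving idempotent $\varphi(w^p)\neq 0$ into the zero class, and the context argument via a word $z_0$ with $\varphi(z_0)=0$ settles it; note also that your uniform exponent $p$ automatically works in the quotient $N=M(L)$, which you implicitly use when writing $\eta(w)^\omega=\eta(w^p)$. The only loose end is the degenerate case $M=\{0\}$ (equivalently $L=\emptyset$), where your automaton in $(2)\Rightarrow(1)$ has empty state set; you should either exclude the trivial monoid from the meaning of ``monoid with zero'' (so that $L=\emptyset$ satisfies none of the three conditions, since $1\in\Stab(\cA)$ for any nonempty $Q$) or explicitly allow $Q=\emptyset$ --- either convention makes the equivalence hold, but one sentence fixing it would close the gap.
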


\noindent Proposition \ref{prop:strongly cyclic languages} leads to a simple
syntactic characterization of strongly cyclic languages. Recall that a language of
$A^*$ is \emph{nondense} if there exists a word $u \in A^*$ such that $L \cap A^*uA^*
= \emptyset$.

\begin{prop}\label{prop:identities of strongly cyclic}
	Let $L$ be a regular language, let $M$ be its syntactic monoid and let $P$ be its
	syntactic image. Then $L$ is strongly cyclic if and only if it satisfies the
	following conditions, for all $u, x, v \in M$:
\begin{conditionsJE}{S}
	\item \label{eq:S1} $ux^\omega v \in P$ implies $x^\omega \in P$,
	
	\item \label{eq:S2} $x^\omega \in P$ if and only if $x \in P$.
\end{conditionsJE}
Furthermore, if these conditions are satisfied and if $L$ is not the full language,
then $L$ is nondense.
\end{prop}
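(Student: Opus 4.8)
The whole proof rests on the structural description of strongly cyclic languages from Proposition~\ref{prop:strongly cyclic languages}: a non-full language $L$ is strongly cyclic precisely when its syntactic monoid $M$ has a zero and its syntactic image is $P = \{s \in M \mid s^\omega \neq 0\}$. So the plan is to show that, for a non-full $L$, conditions \ref{eq:S1} and \ref{eq:S2} are equivalent to this description. The case $L = A^*$ is handled separately: there $M$ is trivial, $P = M$, and both conditions hold vacuously, while $A^*$ is strongly cyclic.

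For the direct implication I would assume $L$ strongly cyclic and non-full, so $P = \{s \mid s^\omega \neq 0\}$. Since $s^\omega$ is idempotent, $(s^\omega)^\omega = s^\omega$, which gives \ref{eq:S2} at once: $s \in P \iff s^\omega \neq 0 \iff s^\omega \in P$. For \ref{eq:S1} I would use that the zero is absorbing: if $x^\omega = 0$ then $ux^\omega v = 0$, hence $(ux^\omega v)^\omega = 0$ and $ux^\omega v \notin P$; contrapositively $ux^\omega v \in P$ forces $x^\omega \neq 0$, i.e. $x^\omega \in P$.

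The converse is where the work lies. Assuming \ref{eq:S1} and \ref{eq:S2} and $L \neq A^*$, I must exhibit a zero in $M$. Choosing $t \notin P$, condition \ref{eq:S2} makes the idempotent $e = t^\omega$ lie outside $P$, and \ref{eq:S1} then says that every element $uev$ of the two-sided ideal $MeM$ is outside $P$, i.e. $MeM \subseteq M - P$. The crucial step --- and the one I expect to be the main obstacle --- is to invoke the fact that $M$ is the \emph{syntactic} monoid of $L$, so the syntactic order is antisymmetric: two elements lying in $P$ under exactly the same contexts must be equal. Any two elements $z, z'$ of the ideal $MeM$ have all their contexts $uzv, uz'v$ back inside $MeM$, hence outside $P$; so $z$ and $z'$ are indistinguishable by $P$ and therefore equal. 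Thus $MeM$ is a singleton, and being an ideal it consists of a single zero $0$ of $M$, with $e = 0 \notin P$.

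It remains to identify $P$ and conclude. The inclusion $P \subseteq \{s \mid s^\omega \neq 0\}$ is immediate from \ref{eq:S2} together with $0 \notin P$. For the reverse inclusion I would argue contrapositively: if $s^\omega \notin P$ then $s^\omega$ is again an idempotent outside $P$, so the same singleton argument applied to $Ms^\omega M$ yields $s^\omega = 0$; hence $s^\omega \neq 0$ forces $s^\omega \in P$, and then \ref{eq:S2} gives $s \in P$. With $M$ carrying a zero and $P = \{s \mid s^\omega \neq 0\}$, Proposition~\ref{prop:strongly cyclic languages} makes $L$ strongly cyclic. For the final nondensity claim, I would pick $u \in A^*$ with $\eta(u) = 0$; then $\eta(xuy) = 0 \notin P$ for all $x, y$, so $A^*uA^* \cap L = \emptyset$ and $L$ is nondense.
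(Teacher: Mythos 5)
Your proof is correct and follows essentially the same route as the paper: reduce to Proposition~\ref{prop:strongly cyclic languages}, get \ref{eq:S1} and \ref{eq:S2} from the absorbing zero in the forward direction, and in the converse use the syntactic-monoid property to collapse the ideal generated by an idempotent outside $P$ to a zero and then identify $P$ with $\{s \mid s^\omega \neq 0\}$. You are in fact somewhat more explicit than the paper at the two points it glosses over --- the singleton-ideal argument producing the zero, and the final nondensity claim --- both of which you handle correctly.
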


\begin{proof}
Let $L$ be a strongly cyclic language, let $M$ be its syntactic monoid and let $P$ be
its syntactic image. If $L$ is the full language, then the conditions \ref{eq:S1} and
\ref{eq:S2} are trivially satisfied. If $L$ is not the full language, then
Proposition \ref{prop:strongly cyclic languages} shows that $M$ has a zero and that
$P = \{s \in M \mid s^\omega \neq 0 \}$. Observing that $x^\omega =
(x^\omega)^\omega$, one gets
\[
  x \in P \Longleftrightarrow x^\omega \neq 0 \Longleftrightarrow 
	(x^\omega)^\omega \neq 0 \Longleftrightarrow x^\omega \in P
\]
which proves \ref{eq:S2}. Similarly, one gets
\[
	ux^\omega v \in P \iff (ux^\omega v)^\omega \neq 0 \implies x^\omega \neq 0 \iff x 
	\in P
\]
which proves \ref{eq:S1}.

Conversely, suppose that $L$ satisfies \ref{eq:S1} and \ref{eq:S2}. If $L$ is full,
then $L$ is strongly cyclic. Otherwise, let $z \notin P$. Then $z^\omega \notin P$ by
\ref{eq:S1} and $uz^\omega v \notin P$ for all $u, v \in M$ by \ref{eq:S2}. This
means that $z$ is a zero of $M$ and that $0 \notin P$. By Proposition
\ref{prop:strongly cyclic languages}, it remains to prove that $x \in P$ if and only
if $x^\omega \neq 0$. First, if $x \in P$, then $x^\omega \in P$ by \ref{eq:S2} and
since $0 \notin P$, one has $x^\omega \neq 0$. Conversely, if $x^\omega \neq 0$, then
$ux^\omega v \in P$ for some $u, v \in M$, since $x^\omega$ is not equivalent to $0$
in the syntactic congruence of $P$. It follows that $x^\omega \in P$ by \ref{eq:S1}
and $x \in P$ by \ref{eq:S2}.
\end{proof}

\noindent We turn now to cyclic languages.

\begin{defi}\label{def:cyclic}
A subset of a monoid is said to be \emph{cyclic} if it is closed under conjugation,
power and root. That is, a subset $P$ of a monoid $M$ is cyclic if it satisfies the
following conditions, for all $u, v \in M$ and $n > 0$:
\begin{conditionsJE}{C}
	\item \label{item:power} $u^n \in P$ if and only if $u \in P$,

	\item \label{item:conjugation} $uv \in P$ if and only if $vu \in P$.
\end{conditionsJE}
\end{defi}
\noindent This definition applies in particular to the case of a language of $A^*$.

\begin{exa}\label{ex:cyclic}
  If $A = \{a, b\}$, the language $b^*$ and its complement $A^*aA^*$ are cyclic.
\end{exa}
\noindent One can show that regular cyclic languages are closed under inverses of
morphisms and under Boolean operations but not under quotients. For instance, the
language $L = \{abc, bca, cab\}$ is cyclic, but its quotient $a^{-1}L = \{bc\}$ is
not cyclic. Thus regular cyclic languages do not form a variety of languages.
However, they admit the following straightforward characterization in terms of
monoids.

\begin{prop} \label{prop:charcyclic}
	Let $L$ be a regular language of $A^*$, let $\varphi$ be a surjective morphism from
	$A^*$ to a finite monoid $M$ recognising $L$ and let $P = \varphi(L)$. Then $L$ is
	cyclic if and only if $P$ is cyclic.
\end{prop}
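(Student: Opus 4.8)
The plan is to reduce both directions to a single dictionary between words and monoid elements, furnished by the hypothesis that $\varphi$ recognises $L$. First I would record the standard saturation property: since $\varphi$ recognises $L$, the language $L$ is a union of $\varphi$-classes, so that $L = \varphi^{-1}(\varphi(L)) = \varphi^{-1}(P)$; equivalently, for every word $w \in A^*$ one has $w \in L$ if and only if $\varphi(w) \in P$. Indeed $L \subseteq \varphi^{-1}(P)$ is immediate, and conversely if $\varphi(w) \in P = \varphi(L)$ then $\varphi(w) = \varphi(w')$ for some $w' \in L$, whence $w \in L$ because $L$ is $\varphi$-saturated. This equivalence is the only bridge I will need, together with the two structural facts that $\varphi$ is a morphism, so $\varphi(u^n) = \varphi(u)^n$ and $\varphi(uv) = \varphi(u)\varphi(v)$, and that $\varphi$ is surjective.

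For the \emph{if} direction, I would assume $P$ cyclic and verify the two conditions of Definition \ref{def:cyclic} for $L$. Taking $u \in A^*$ and $n > 0$, condition $(\mathrm{C}_1)$ follows from $u^n \in L \iff \varphi(u)^n \in P \iff \varphi(u) \in P \iff u \in L$, where the middle equivalence is cyclicity of $P$ applied to $\varphi(u)$. Condition $(\mathrm{C}_2)$ is identical: $uv \in L \iff \varphi(u)\varphi(v) \in P \iff \varphi(v)\varphi(u) \in P \iff vu \in L$. This direction uses only that $\varphi$ is a morphism.

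For the \emph{only if} direction, I would assume $L$ cyclic and here surjectivity does the lifting. Given $s \in M$ and $n > 0$, I would choose $u \in A^*$ with $\varphi(u) = s$; then $s^n \in P \iff u^n \in L \iff u \in L \iff s \in P$, using cyclicity of $L$ in the middle. Given $s, t \in M$, I would pick preimages $u, v$ with $\varphi(u) = s$ and $\varphi(v) = t$, so that $st \in P \iff uv \in L \iff vu \in L \iff ts \in P$. This establishes both conditions for $P$.

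There is no real obstacle here: the statement is a transparent transfer lemma, and the proof is essentially a line-by-line translation of Definition \ref{def:cyclic} through $\varphi$. The only points deserving a moment's care are the saturation property $L = \varphi^{-1}(P)$, which is exactly where recognition is used and which legitimises the equivalence $w \in L \iff \varphi(w) \in P$, and the appeal to surjectivity in the \emph{only if} direction, which lets every monoid element be named by a word so that the cyclicity of $L$ can be invoked.
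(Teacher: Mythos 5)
Your proof is correct: the saturation property $L=\varphi^{-1}(P)$ (which is exactly what recognition gives, with surjectivity ensuring $P=\varphi(L)$ is the right set) lets both cyclicity conditions be transferred verbatim through $\varphi$, with surjectivity used only to lift monoid elements to words in the \emph{only if} direction. The paper states this proposition without proof, calling it straightforward, and your argument is precisely the intended one.
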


\begin{cor}\label{cor:strongly cyclic are cyclic}
	Every strongly cyclic language is cyclic. 
\end{cor}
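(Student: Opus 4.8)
The plan is to use the syntactic characterizations already established for both classes, namely Proposition \ref{prop:strongly cyclic languages} for strongly cyclic languages and Proposition \ref{prop:charcyclic} for cyclic languages. Let $L$ be a strongly cyclic language, let $\varphi: A^* \to M$ be its syntactic morphism and let $P = \varphi(L)$ be its syntactic image. By Proposition \ref{prop:charcyclic}, it suffices to verify that the syntactic image $P$ satisfies the two cyclic conditions \ref{item:power} and \ref{item:conjugation} of Definition \ref{def:cyclic}, since cyclicity of $L$ is equivalent to cyclicity of $P$.

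First I would dispose of the trivial case where $L$ is the full language $A^*$, for which $P = M$ is obviously cyclic. Otherwise, by Proposition \ref{prop:strongly cyclic languages}, the monoid $M$ has a zero and $P = \{s \in M \mid s^\omega \neq 0\}$. The verification of the power condition \ref{item:power} should be essentially immediate: for any $s \in M$ and $n > 0$, the idempotent powers of $s$ and of $s^n$ coincide, since $(s^n)^\omega = s^\omega$ in a finite monoid. Hence $s^\omega \neq 0$ if and only if $(s^n)^\omega \neq 0$, which gives $s \in P \iff s^n \in P$.

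For the conjugation condition \ref{item:conjugation}, I would use the standard fact that conjugate elements of a finite monoid have related idempotent powers: if $s = uv$ and $t = vu$, then $(uv)^\omega$ and $(vu)^\omega$ are conjugate idempotents, and in particular one is zero if and only if the other is zero. Concretely, $(vu)^\omega = v(uv)^{\omega}u \cdot (\text{adjustment})$, or more cleanly one observes that $u(vu)^\omega v = (uv)^{\omega+1}$ and symmetrically, so that $(uv)^\omega = 0$ forces $(vu)^\omega = 0$ and conversely. This yields $uv \in P \iff vu \in P$.

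The main obstacle, and the only point requiring genuine care, is the conjugation argument: one must be precise about why $(uv)^\omega = 0$ is equivalent to $(vu)^\omega = 0$ in a monoid with zero, avoiding any false claim that the two idempotent powers are literally equal (they need not be). Rather than manipulating idempotent powers directly, a cleaner route is to invoke condition \ref{eq:S1} of Proposition \ref{prop:identities of strongly cyclic}: if $vu \in P$ then $(vu)^\omega \in P$, and since $uv \cdot (vu)^{\omega-1}v \cdot$ realizes a factorization of the form $u\,x^\omega\,v'$, the implication \ref{eq:S1} together with \ref{eq:S2} should transfer membership from $vu$ to $uv$; by symmetry the equivalence follows. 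Once both \ref{item:power} and \ref{item:conjugation} are checked for $P$, Proposition \ref{prop:charcyclic} delivers that $L$ is cyclic, completing the proof.
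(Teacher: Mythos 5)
Your proposal is correct and, in its final form, follows essentially the same route as the paper: the paper's proof also reduces to checking the conjugation condition \ref{item:conjugation} for the syntactic image $P$, by rewriting $(xy)^{2\omega}$ as $\bigl((xy)^{\omega-1}x\bigr)(yx)^{\omega}y$ and applying conditions \ref{eq:S1} and \ref{eq:S2} of Proposition \ref{prop:identities of strongly cyclic}. The only differences are cosmetic: the paper does not spell out the power condition \ref{item:power} (it follows from \ref{eq:S2} exactly as you indicate), it works directly from \ref{eq:S1}--\ref{eq:S2} without case-splitting on whether $L$ is full, and your garbled expression $uv\cdot(vu)^{\omega-1}v$ should be replaced by the clean identity $(vu)^{2\omega}=\bigl(v(uv)^{\omega-1}\bigr)(uv)^{\omega}u$ --- but the intended argument is the right one.
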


\begin{proof}
	Let $L$ be a strongly cyclic language, let $M$ be its syntactic monoid and let $P$
	be its syntactic image. By Proposition \ref{prop:identities of strongly cyclic},
	$P$ satisfies \ref{eq:S1} and \ref{eq:S2}. It suffices now to prove  that it 
	satisfies \ref{item:conjugation}. The sequence of implications
\begin{align*}
	xy \in P &\overset{\text{\ref{eq:S2}}}{\iff} (xy)^\omega \in P \iff (xy)^\omega
	(xy)^\omega \in P \iff (xy)^{\omega -1}xy(xy)^{\omega -1}xy \in P \\
	&\iff ((xy)^{\omega -1}x)(yx)^\omega y \in P \overset{\text{\ref{eq:S1}}}{\implies}
	(yx)^\omega \in P \overset{\text{\ref{eq:S2}}}{\iff} yx \in P.
\end{align*}
shows that $xy \in P$ implies $yx \in P$ and the opposite implication follows by
symmetry.
\end{proof}

\noindent Another result is worth mentioning: for any regular cyclic language, there
is a least strongly cyclic language containing it \cite[Theorem 2]{Carton97}.

\begin{prop}\label{prop:closure}
	Let $L$ be a regular cyclic language of $A^*$, let $\eta : A^* \to M$ be its
	syntactic stamp and let $P = \eta(L)$. There $M$ has a zero and the language 
\[
	\overline{L} = 
	\begin{cases} 
	  \eta^{-1}(\{s \mid s^\omega \neq 0\}) &\text{if $0 \notin P$,} \\
	  A^*    &\text{otherwise.}
	\end{cases}
\]
is the least strongly cyclic language containing~$L$.
\end{prop}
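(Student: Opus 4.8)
The plan is to prove the three assertions separately: that $M$ has a zero, that $\overline{L}$ is strongly cyclic and contains $L$, and that it is the least such language. Throughout I will use that, since $L$ is cyclic, its syntactic image $P=\eta(L)$ is a cyclic subset of $M$ by Proposition~\ref{prop:charcyclic}, so $P$ satisfies \ref{item:power} and \ref{item:conjugation}. A first consequence of \ref{item:power} is that membership in $P$ depends only on idempotent powers: $s\in P$ if and only if $s^\omega\in P$.

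To produce the zero I would examine the minimal ideal $K$ of the finite monoid $M$. The key claim is that $K$ is \emph{$P$-homogeneous}, i.e. either $K\subseteq P$ or $K\cap P=\emptyset$. Since each $s\in K$ satisfies $s^\omega\in K$ and $s\in P\iff s^\omega\in P$, it is enough to show that all idempotents of $K$ lie in $P$, or all lie outside $P$. For this I use the standard fact that in a completely simple semigroup any two idempotents $e,f$ of $K$ can be written $e=st$ and $f=ts$ with $s,t\in K$ (a routine computation in Rees coordinates); condition \ref{item:conjugation} then gives $e\in P\iff f\in P$, hence homogeneity. Now for $s,t\in K$ and all $x,y\in M$ we have $xsy,\,xty\in K$, so homogeneity yields $xsy\in P\iff xty\in P$; thus $s$ and $t$ have the same behaviour with respect to $P$, and since $M$ is the syntactic monoid of $L$ they coincide. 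Therefore $K$ is a singleton, and its element is the zero $0$ of $M$.

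Next I split according to whether $0\in P$. If $0\notin P$, then $\overline{L}=\eta^{-1}(\{s\mid s^\omega\neq 0\})$ is strongly cyclic directly by condition (2) of Proposition~\ref{prop:strongly cyclic languages}, witnessed by $\eta$ itself (which is onto a finite monoid with zero), and it is non-full because $0^\omega=0$. The inclusion $L\subseteq\overline{L}$ reduces to $P\subseteq\{s\mid s^\omega\neq 0\}$, which holds since $s\in P$ forces $s^\omega\in P$ by \ref{item:power}, whence $s^\omega\neq 0$ as $0\notin P$. If instead $0\in P$, then $\eta^{-1}(0)\subseteq L$, and as $0$ is absorbing this preimage is a two-sided ideal of the form $A^*wA^*$ contained in $L$; here $\overline{L}=A^*$ is trivially strongly cyclic and contains $L$. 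For minimality, let $S$ be strongly cyclic with $L\subseteq S$; if $S=A^*$ there is nothing to prove, so assume $S$ is non-full, so that by Proposition~\ref{prop:identities of strongly cyclic} its syntactic image $Q$ (under its syntactic stamp $\psi$) satisfies \ref{eq:S1} and \ref{eq:S2}, and $S$ is nondense. When $0\in P$, nondensity gives a word $u$ with $S\cap A^*uA^*=\emptyset$, while $wu\in A^*wA^*\subseteq L\subseteq S$ and $wu\in A^*uA^*$, a contradiction; hence $S=A^*=\overline{L}$. When $0\notin P$, take $w\in\overline{L}$, put $e=\eta(w)^\omega\neq 0$, and note that since $e\neq 0$ in $M$ the elements $e$ and $0$ are distinguished by $P$, so there are $x,y\in A^*$ with $\eta(x)\,e\,\eta(y)\in P$ while $\eta(x)\,0\,\eta(y)=0\notin P$. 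Choosing a common power $M$ for which $\eta(w)^M=\eta(w)^\omega=e$ and $\psi(w)^M=\psi(w)^\omega$, we get $\eta(x w^M y)=\eta(x)\,e\,\eta(y)\in P$, so $x w^M y\in L\subseteq S$ and thus $\psi(x)\,\psi(w)^\omega\,\psi(y)\in Q$; condition \ref{eq:S1} then yields $\psi(w)^\omega\in Q$, and \ref{eq:S2} gives $\psi(w)\in Q$, i.e. $w\in S$. Hence $\overline{L}\subseteq S$.

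I expect the existence of the zero to be the main obstacle: one must recognise the minimal ideal as the right object and verify that cyclicity collapses it to a single absorbing element, the delicate point being the conjugacy presentation $e=st$, $f=ts$ combined with the passage from $P$-homogeneity to equality via syntactic minimality. The second subtle step is the minimality argument when $0\notin P$, which transports an $L$-membership fact through the \emph{foreign} syntactic monoid of $S$ by means of \ref{eq:S1} and \ref{eq:S2}.
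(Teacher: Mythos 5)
Your proof is correct, and in two places it genuinely diverges from (and complements) the paper's own argument. First, you actually prove that $M$ has a zero, via the minimal ideal $K$: any two idempotents of $K$ are conjugate ($e=st$, $f=ts$), so \ref{item:conjugation} makes $K$ $P$-homogeneous, and since $K$ is an ideal, homogeneity forces all elements of $K$ to be syntactically equivalent, hence equal. The paper's proof silently assumes the existence of the zero and never establishes it, so your argument fills a real gap in the write-up. Second, for minimality in the case $0\notin P$ you stay entirely on the algebraic side: you separate $e=\eta(w)^\omega$ from $0$ using syntactic minimality of $M$, push the resulting word $xw^My\in L\subseteq S$ into the syntactic monoid of $S$, and conclude $w\in S$ via conditions \ref{eq:S1} and \ref{eq:S2} of Proposition \ref{prop:identities of strongly cyclic}. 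The paper instead realises $S$ as $\Stab(\cA)$ for an automaton and argues with Proposition \ref{prop:charstabword}: for each $n$ some state survives reading $u^n$, and (implicitly, by pigeonhole) a single state survives all powers. Your route avoids the automaton and that uniformization step at the price of invoking the syntactic characterization of strongly cyclic languages; both are valid, and the containment argument and the $0\in P$ case coincide with the paper's. Two cosmetic points: $\eta^{-1}(0)$ is an ideal \emph{containing} $A^*wA^*$ for any $w\in\eta^{-1}(0)$, not one ``of the form $A^*wA^*$'' (only the inclusion is used, so nothing breaks), and you should not reuse the letter $M$ for the common $\omega$-exponent, since it already names the monoid.
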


\begin{proof}
If $0 \notin P$, then the language $\overline{L}$ is strongly cyclic by
Proposition~\ref{prop:strongly cyclic languages}. Morevover, since $L$ is cyclic, $P$
is cyclic by Proposition \ref{prop:charcyclic}. It follows that if $s \in P$, then
$s^\omega \in P$ and in particular $s^\omega \neq 0$. Consequently, $\overline{L}$
contains $L$. 

It remains to prove that $\overline{L}$ is the least strongly cyclic language
containing $L$. Let $X$ be a strongly cyclic language containing~$L$ and let $u$ be a
word of $\overline{L}$. Let $\mathcal{A} = (Q,A,E)$ be a deterministic automaton such
that $X = \Stab(\mathcal{A})$. Setting $s = \eta(u)$, one has $s^\omega \neq 0$ by
definition of $\overline{L}$. Consequently, $\eta(s)^n \neq 0$ for every integer~$n$
and there are two words $x_n$ and~$y_n$ such that $x_nu^ny_n$ belongs to~$L$. By
Proposition \ref{prop:charstabword}, there is a state $q_n$ of~$\cA$ such that the
transition $q_n\cdot x_nu^ny_n$ is defined. The transition $(q_n\cdot x_n)\cdot u^n$
is thus defined for every $n$ and by Proposition \ref{prop:charstabword} again, the
word~$u$ belongs to~$X$. Thus $\overline{L} \subseteq X$ as required.
  
Suppose now that $0 \in P$ and let $z$ be a word of $L$ such that $\eta(z) = 0$. Let
$X$ be a strongly cyclic language containing $L$. If $X$ is not full, then $X$ is
nondense by Proposition \ref{prop:identities of strongly cyclic} and there exists a
word $u \in A^*$ such that $A^*uA^* \cap X = \emptyset$. Since $X$ contains $L$, one
also gets $A^*uA^* \cap L = \emptyset$ and in particular $zu \notin L$. But this
yieds a contradiction, since $\eta(zu) = \eta(z)\eta(u) = 0 \in P$ and thus $zu \in
\eta^{-1}(P) = L$. Thus the only strongly cyclic language containing~$L$ is $A^*$.
\end{proof}

\noindent Given a finite monoid $M$, the Green's preorder relation $\leJ$ defined on
$M$ by
\[
	s \leJ t \text{ if and only if $s \in MtM$, or equivalently, if there exists $u, v
	\in M$ such that $s = utv$}
\]
is a preorder on $M$. The associated equivalence relation $\cJ$ is defined by 
\[
	s \calJ t \text{ if $s \leJ t$ and $t \leJ s$, or equivalently, if $MsM = MtM$.}
\]

\begin{cor}\label{cor:strongly cyclic versus cyclic}
	Let $L$ be a regular cyclic language of $A^*$, let $\eta : A^* \to M$ be its
	syntactic stamp and let $P = \eta(L)$. Then $L$ is strongly cyclic if and only if
	for all idempotents $e, f$ of $M$, the conditions $e \in P$ and $e \leJ f$ imply $f
	\in P$.
\end{cor}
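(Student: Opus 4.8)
The plan is to prove both directions of the equivalence, using Corollary \ref{cor:strongly cyclic are cyclic} together with the syntactic characterization of Proposition \ref{prop:strongly cyclic languages} and the identities \ref{eq:S1}--\ref{eq:S2} of Proposition \ref{prop:identities of strongly cyclic}. Throughout, $L$ is a regular cyclic language with syntactic stamp $\eta : A^* \to M$ and syntactic image $P = \eta(L)$, so by Proposition \ref{prop:charcyclic} the set $P$ is cyclic. The full case ($L = A^*$, i.e.\ $P = M$) is trivial in both directions, so the interesting situation is when $L$ is not full; I would handle that reduction first.

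For the forward implication, suppose $L$ is strongly cyclic. If $L$ is full the condition holds vacuously. Otherwise, by Proposition \ref{prop:strongly cyclic languages}, $M$ has a zero and $P = \{s \in M \mid s^\omega \neq 0\}$, and in particular $0 \notin P$. Now take idempotents $e, f$ with $e \in P$ and $e \leJ f$. Since $e \in P$ we have $e^\omega = e \neq 0$. From $e \leJ f$ we may write $e = ufv$ for some $u, v \in M$. The goal is to force $f \neq 0$, equivalently $f \in P$ (as $f = f^\omega$). The key observation is that $e = ufv \neq 0$ already prevents $f = 0$, because if $f = 0$ then $e = ufv = u\,0\,v = 0$, contradicting $e \neq 0$. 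Hence $f \neq 0$, so $f^\omega = f \neq 0$ and $f \in P$.

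For the converse, assume $L$ is cyclic and that the stated idempotent condition holds; I want to verify \ref{eq:S1} and \ref{eq:S2}, after which Proposition \ref{prop:identities of strongly cyclic} gives strong cyclicity. Condition \ref{eq:S2}, namely $x^\omega \in P \iff x \in P$, is exactly the power-closure axiom \ref{item:power} applied to the idempotent power (since $x^\omega = x^k$ for some $k > 0$ and $x = x^1$), so it holds because $P$ is cyclic. The substantive step is \ref{eq:S1}: given $ux^\omega v \in P$, I must deduce $x^\omega \in P$. Set $e = (ux^\omega v)^\omega$ and $f = x^\omega$, both idempotents. By cyclicity (axiom \ref{item:power}) applied to $ux^\omega v \in P$ we get $e = (ux^\omega v)^\omega \in P$, and I claim $e \leJ f$: indeed $ux^\omega v \in M x^\omega M$ shows $ux^\omega v \leJ x^\omega$, and a short computation using $x^\omega x^\omega = x^\omega$ propagates this to the idempotent power $e \leJ f$. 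The hypothesis then yields $f = x^\omega \in P$, which is \ref{eq:S1}.

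The main obstacle I anticipate is the $\cJ$-order bookkeeping in the converse: verifying cleanly that $(ux^\omega v)^\omega \leJ x^\omega$. The inequality $ux^\omega v \leJ x^\omega$ is immediate from the definition of $\leJ$, but I need it for the idempotent \emph{power} of $ux^\omega v$. This follows because $\leJ$ is a preorder stable under products, so $s \leJ x^\omega$ implies $s^k \leJ x^\omega$ for every $k > 0$ (using $x^\omega x^\omega = x^\omega$ to absorb the repeated factors on the dominating side); taking $k$ so that $s^k = s^\omega$ gives $e \leJ f$. Once this is in place, both directions are short, and the role of the hypothesis $0 \notin P$ in separating the full from the non-full case must be tracked carefully so that Proposition \ref{prop:strongly cyclic languages} applies.
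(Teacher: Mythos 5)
Your proof is correct and follows essentially the same route as the paper: the converse direction is identical (verify \ref{eq:S2} from cyclicity and \ref{eq:S1} via $(ux^\omega v)^\omega \leJ x^\omega$, then invoke Proposition \ref{prop:identities of strongly cyclic}), and your $\leJ$-bookkeeping for the idempotent power is sound. The only cosmetic difference is in the forward direction, where you argue via the zero of $M$ using Proposition \ref{prop:strongly cyclic languages} instead of applying \ref{eq:S1} directly to $e = uf^\omega v$; both arguments are valid and equally short.
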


\begin{proof}
Suppose that $L$ is strongly cyclic and let $e, f$ be two idempotents of $M$ such
that $e \in P$ and $e \leJ f$. Let $u,v \in M$ be such that $e = ufv$. Since
$f^\omega = f$, one gets $uf^\omega v \in P$ and thus $f \in P$ by Condition
\ref{eq:S1} of Proposition \ref{prop:identities of strongly cyclic}.

In the opposite direction, suppose that for all idempotents $e, f$ of $M$, the
conditions $e \in P$ and $e \leJ f$ imply $f \in P$. Since $L$ is cyclic, it
satisfies \ref{item:power} and hence \ref{eq:S2}. We claim that it also satisfies
\ref{eq:S1}. Indeed, $ux^\omega v \in P$ implies $(ux^\omega v)^\omega \in P$ by
\ref{eq:S2}. Furthermore, since $(ux^\omega v)^\omega \leJ x^\omega$, one also has
$x^\omega \in P$, and finally $x \in P$ by \ref{eq:S2}, which proves the claim.
\end{proof}

The precise connection between cyclic and strongly cyclic languages was given in
\cite{BealCartonReutenauer96}.

\begin{thm}\label{thm:cyclic versus strongly cyclic}
	A regular language is cyclic if and only if it is a Boolean combination of regular
	strongly cyclic languages.
\end{thm}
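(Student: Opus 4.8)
The plan is to prove both inclusions of the equivalence in Theorem \ref{thm:cyclic versus strongly cyclic}. For the forward direction, I must show that every regular cyclic language is a Boolean combination of regular strongly cyclic languages. The key tool here is Corollary \ref{cor:strongly cyclic versus cyclic}, which characterizes, among cyclic languages, exactly which are strongly cyclic in terms of the syntactic image $P$ being upward closed (for idempotents) under the $\calJ$-preorder. The strategy is to take a regular cyclic language $L$ with syntactic stamp $\eta : A^* \to M$ and image $P = \eta(L)$, and to decompose $P$ according to the $\calJ$-classes of $M$, building up $L$ from strongly cyclic pieces by a difference-hierarchy-style construction.

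\medskip

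\noindent First I would observe that, by Proposition \ref{prop:closure}, every regular cyclic language $L$ has a least strongly cyclic language $\clos{L}$ containing it, computed from the syntactic stamp. The natural idea is to peel off $P$ layer by layer along the $\calJ$-order. Concretely, since $P$ is cyclic (Proposition \ref{prop:charcyclic}), one can consider the ``cyclic closures'' obtained by saturating $P$ upward in the $\calJ$-preorder restricted to idempotents, as dictated by Corollary \ref{cor:strongly cyclic versus cyclic}. Each such saturation yields a strongly cyclic language, and the difference between successive saturations recaptures the original membership in $P$. The aim is therefore to write
\[
	L = \clos{L}_1 - \clos{L}_2 + {} \dotsm {} \pm \clos{L}_m
\]
as a difference hierarchy of strongly cyclic languages, where the $\clos{L}_i$ correspond to a decreasing filtration of $M$ by $\calJ$-upward-closed sets of idempotents. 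Since differences and alternating sums of this kind are Boolean combinations, this would establish the forward inclusion.

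\medskip

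\noindent The reverse direction should be the easier half. Corollary \ref{cor:strongly cyclic are cyclic} already shows that every strongly cyclic language is cyclic, so it suffices to check that the class of regular cyclic languages is closed under Boolean operations. This is asserted in the text just after Example \ref{ex:cyclic} (regular cyclic languages are closed under Boolean operations), and it follows directly from Proposition \ref{prop:charcyclic}: cyclicity of $P$ is defined by conditions \ref{item:power} and \ref{item:conjugation}, both of which are manifestly preserved under complementation, and one can recognize a finite family of languages by a common morphism (a restricted direct product) so that union and intersection correspond to the same operations on the cyclic subsets $P_i$, which again satisfy the cyclic conditions. Hence any Boolean combination of strongly cyclic languages is cyclic.

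\medskip

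\noindent The main obstacle will be making the forward decomposition precise: I expect the delicate point to be organizing the $\calJ$-filtration of $M$ so that each truncation is genuinely the syntactic image of a \emph{strongly cyclic} language (i.e. verifying at each stage that the saturated set satisfies the criterion of Corollary \ref{cor:strongly cyclic versus cyclic}) and that the alternating differences collapse correctly to $P$. In particular one must handle the role of the zero of $M$ and the distinction between the full and non-full cases, as in Propositions \ref{prop:strongly cyclic languages} and \ref{prop:closure}, so that the peeling terminates and produces finitely many strongly cyclic layers.
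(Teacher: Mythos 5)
Your overall plan follows the same route as the paper: the paper derives this theorem from Theorem \ref{thm:cyclic languages} (a cyclic language $L$ lies in $\cB_n(\cS)$ iff $\ell(L)\leq n$), whose hard direction is exactly your ``peeling'' --- one sets $P_i=\{s\in M\mid \ell(s^\omega)\geq i\}$, checks via Corollary \ref{cor:strongly cyclic versus cyclic} that each $L_i=\eta^{-1}(P_i)$ is strongly cyclic, and verifies $P=P_1-P_2+\dotsm\pm P_m$; the converse direction is, as you say, Corollary \ref{cor:strongly cyclic are cyclic} together with Boolean closure of cyclic languages.

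There is, however, one concrete missing step that your sketch does not supply and that cannot be waved away: you must prove that the peeling \emph{terminates}, i.e.\ that $P$-chains of idempotents have bounded length. This is not automatic from finiteness of $M$, because $\leJ$ is only a preorder: a priori one could have two $\cJ$-equivalent idempotents $e,f$ with $e\in P$ and $f\notin P$, yielding arbitrarily long alternating chains $e\leJ f\leJ e\leJ\dotsm$ and an infinite strictly decreasing sequence of nonempty layers. The paper closes this with Proposition \ref{prop:}: $\cJ$-equivalent idempotents are conjugate, and since $P$ is cyclic it is closed under conjugation, so consecutive elements of a $P$-chain of idempotents must lie in strictly increasing $\cJ$-order; hence the length is bounded by the $\cJ$-depth of $M$. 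Note that this is where cyclicity of $L$ is used in an essential way, and it is a different issue from the role of the zero and the full/non-full dichotomy you flag. A second, smaller point: before invoking Corollary \ref{cor:strongly cyclic versus cyclic} on each layer you must check that the layer is itself a \emph{cyclic} subset of $M$ (the corollary's hypothesis); with the filtration $P_i=\{s\mid \ell(s^\omega)\geq i\}$ this holds because $\ell$ is constant on $\cJ$-classes of idempotents and $s\mapsto s^\omega$ is invariant under taking powers and conjugates, but your vaguer ``upward saturation'' description leaves this unverified.
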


\noindent Theorem \ref{thm:cyclic versus strongly cyclic} motivates a detailed
study of the difference hierarchy of the class $\cS$ of strongly cyclic languages.
This study relies on a careful analysis of the chains on the set of idempotents of a
finite monoid, pre-ordered by the relation $\leJ$.

\begin{defi}\label{def:P-chains of idempotents}
	A \emph{$P$-chain of idempotents} is a sequence $(e_0, e_1, \ldots, e_{m-1})$ of
	idempotents of $M$ such that
\[
  e_0 \leJ e_1 \leJ {}\dotsm{} \leJ e_{m-1}
\]
$e_0 \in P$ and, for $0 < i <m$, $e_i \in P$ if and only if $e_{i-1} \notin
P$. The integer $m$ is the length of the $P$-chain of idempotents.
\end{defi}

\noindent We let $\ell(M, P)$ denote the maximal length of a $P$-chain of idempotents
of $M$. We consider in particular the case where $\varphi : A^* \to M$ is a stamp
recognising a regular language $L$ of $A^*$ and $P = \varphi(L)$. The next theorem
shows that in this case, $\ell(M, P)$ does not depend on the choice of the stamp
recognising $L$, but only depends on $L$.


\begin{thm}\label{thm:P Q}
	Let $L$ be a regular language. Let $\varphi : A^* \to M$ and $\psi : A^* \to N$ be
	two stamps recognising $L$. If $P = \varphi(L)$ and $Q = \psi(L)$, then $\ell(M,P) =
	\ell(N,Q)$.
\end{thm}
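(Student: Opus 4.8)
The plan is to show that the quantity $\ell(M,P)$ is invariant by reducing both stamps to the syntactic stamp of $L$. Since $\varphi$ and $\psi$ both recognise $L$, each factors through the syntactic stamp $\eta : A^* \to M(L)$ via surjective monoid morphisms $\alpha : M \to M(L)$ and $\beta : N \to M(L)$, with $P = \alpha^{-1}(P_0)$ and $Q = \beta^{-1}(P_0)$, where $P_0$ is the syntactic image of $L$. It therefore suffices to prove the single reduction statement: if $\pi : M \onto N$ is a surjective morphism and $P = \pi^{-1}(P_0)$ for some $P_0 \subseteq N$, then $\ell(M,P) = \ell(N,P_0)$. Applying this to $\alpha$ and to $\beta$ gives $\ell(M,P) = \ell(M(L),P_0) = \ell(N,Q)$.

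To prove the reduction, I would establish the two inequalities separately. For $\ell(N,P_0) \leq \ell(M,P)$, start from a $P_0$-chain of idempotents $f_0 \leJ f_1 \leJ {}\dotsm{} \leJ f_{m-1}$ in $N$ and lift it to a chain in $M$. The key point is that a surjective morphism sends idempotents to idempotents and is compatible with $\leJ$, but lifting requires producing idempotent preimages whose $\leJ$-relations are preserved; the standard tool is that for $e^2 = e$ in $N$, any preimage $s$ has an idempotent power $s^\omega$ with $\pi(s^\omega) = f^\omega = f$, and surjectivity on each $\cJ$-class lets one arrange $\pi(e_i) = f_i$. Because $P = \pi^{-1}(P_0)$, the membership pattern $e_i \in P \iff f_i \in P_0$ transfers verbatim, so the alternation and the initial condition $e_0 \in P$ are preserved, yielding a $P$-chain of the same length $m$ in $M$.

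For the reverse inequality $\ell(M,P) \leq \ell(N,P_0)$, take a $P$-chain of idempotents $e_0 \leJ {}\dotsm{} \leJ e_{m-1}$ in $M$ and push it forward by $\pi$. Each $f_i := \pi(e_i)$ is idempotent, $\leJ$ is preserved by morphisms so $f_0 \leJ {}\dotsm{} \leJ f_{m-1}$, and $P = \pi^{-1}(P_0)$ gives $e_i \in P \iff f_i \in P_0$, so the projected sequence satisfies the initial condition and the alternation. The one subtlety is that the projected chain need not be strictly $\leJ$-increasing, nor need the $f_i$ be distinct; however, the definition of a $P$-chain of idempotents only requires $e_0 \in P$ and the alternation $e_i \in P \iff e_{i-1} \notin P$ for $0 < i < m$, and this alternation forces consecutive $f_i$ to lie on opposite sides of $P_0$, hence to be distinct. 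This gives a genuine $P_0$-chain of idempotents of length $m$ in $N$, completing the inequality.

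I expect the main obstacle to be the lifting step in the first inequality, specifically arranging that the lifted idempotents are \emph{comparable} in the $\leJ$-order of $M$ rather than merely mapping to a $\leJ$-chain in $N$. Surjectivity guarantees idempotent preimages in each target $\cJ$-class (since $\pi(s^\omega) = \pi(s)^\omega$), but $f_i \leJ f_{i+1}$ in $N$ does not automatically produce comparable preimages in $M$ without a careful choice. The resolution is to build the chain inductively from the top, choosing at each step an idempotent preimage $\leJ$-above the one already constructed: given $e_{i+1}$ with $\pi(e_{i+1}) = f_{i+1}$ and a witness $f_i = u f_{i+1} v$ in $N$, pull back $u,v$ to $M$, form $e_i := (\tilde{u}\, e_{i+1}\, \tilde{v})^\omega$, and verify $e_i \leJ e_{i+1}$ and $\pi(e_i) = f_i^\omega = f_i$. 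Handling this induction correctly, together with the observation that $P = \pi^{-1}(P_0)$ makes every membership condition automatic, is the technical heart of the argument.
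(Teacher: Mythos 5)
Your proof is correct and takes essentially the same route as the paper's: reduce to the syntactic stamp via the canonical quotient morphism, push chains of idempotents forward along it (using that morphisms preserve idempotents and $\leJ$ and that memberships transfer because $P$ is the full preimage of $P_0$), and lift chains back by constructing idempotent preimages from the top down via $e_i = (\tilde{u}\,e_{i+1}\,\tilde{v})^\omega$. The only cosmetic difference is that you factor both stamps through the syntactic stamp and apply the reduction twice, whereas the paper assumes without loss of generality that one of the two stamps is already the syntactic one.
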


\begin{proof}
It is sufficient to prove the result when $\varphi$ is the syntactic stamp of $L$.
Since the morphism~$\psi$ is surjective, $M$ is a quotient of $N$ and there is a
surjective morphism~$\pi : N \to M$ such that $\pi \circ \psi = \varphi$. It follows
that
\begin{equation}\label{eq:P and Q}
  \pi(Q) = P \text{ and } \pi^{-1}(P) = Q.
\end{equation}
We show that to any $P$-chain of idempotents in~$N$, one can associate a $Q$-chain of
idempotents of the same length in~$M$ and vice-versa.
  
Let $(e_0, \ldots, e_{m-1})$ be a $Q$-chain of idempotents in~$N$ and let $f_i =
\pi(e_i)$ for $0 \leq i \leq m-1$. Since every monoid morphism preserves $\leJ$, the
relations \eqref{eq:P and Q} show that $(f_0, \ldots, f_{m-1})$ is a $P$-chain of
idempotents in~$M$.
  
Let now $(f_0, \ldots, f_{m-1})$ be a $P$-chain of idempotents in~$M$. Since $f_{i-1}
\leJ f_i$, there exist for $1 \leq i \leq m-1$ elements $u_i, v_i$ of~$M$ such that
$u_if_iv_i = f_{i-1}$. Let us choose an idempotent $e_{m-1}$ such that $\pi(e_{m-1})
= f_{m-1}$ and some elements $s_i$ and $t_i$ of~$N$ such that $\pi(s_i) = u_i$ and
$\pi(t_i) = v_i$. We now define a sequence of idempotents $(e_0, \ldots, e_{m-1})$
of~$N$ by setting
\[
	e_{m-2} = (s_{m-1}e_{m-1}t_{m-1})^\omega \qquad e_{m-3} =
	(s_{m-2}e_{m-2}t_{m-2})^\omega \qquad \dotsm \qquad e_0 = (s_1e_1t_1)^\omega
\]
By construction, $e_0 \leJ {} \dotsm {} \leJ e_{m-1}$ and a straightforward induction
shows that $\pi(e_i) = f_i$ for $0 \leq i \leq m-1$. Moreover the equalities
\eqref{eq:P and Q} show that $e_i \in Q$ if and only if $f_i \in P$. It follows that
$(e_0, \ldots, e_{m-1})$ is a $Q$-chain of idempotents of $N$ and thus $\ell(M,P) =
\ell(N,Q)$.
\end{proof}
 
Since the integers $\ell(M, P)$ only depend on $L$ and not on the choice of the
recognising monoid, let us define $\ell(L)$ as $\ell(M, P)$ where $M$ $[P]$ is the
syntactic monoid [image] of $L$. Note that by Corollary \ref{cor:strongly cyclic
versus cyclic}, a cyclic language $L$ is strongly cyclic if and only if $\ell(L) =
1$. This is a special case of the following stronger result \cite[Theorem
4]{Carton97}.

\begin{thm}\label{thm:cyclic languages}
	Let $L$ be a regular cyclic language. Then $L \in \cB_n(\cS)$ if and only if
	$\ell(L) \leq n$.
\end{thm}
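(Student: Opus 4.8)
The plan is to prove the two implications of Theorem~\ref{thm:cyclic languages} separately, using the machinery already developed. The forward direction, that $L \in \cB_n(\cS)$ implies $\ell(L) \leq n$, is the one I expect to follow more directly from the theory of chains. Suppose $L = X_1 - X_2 + \dotsm \pm X_n$ with each $X_i$ strongly cyclic and $X_1 \supseteq X_2 \supseteq \dotsm \supseteq X_n$. Choosing a single morphism $\varphi : A^* \to M$ onto a finite monoid recognising all the $X_i$ simultaneously (via a restricted product), I would set $P_i = \varphi(X_i)$ and $P = \varphi(L)$. By Theorem~\ref{thm:P Q} it suffices to bound $\ell(M,P)$ for this particular recognising monoid. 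The key point is that each $X_i$, being strongly cyclic, forces its image $P_i$ to satisfy conditions \ref{eq:S1} and \ref{eq:S2}, so by Corollary~\ref{cor:strongly cyclic versus cyclic} the sets $P_i$ behave monotonically along the $\leJ$-preorder on idempotents. Given a $P$-chain of idempotents $e_0 \leJ \dotsm \leJ e_{m-1}$, I would track the ``cut'' $k(e)$ defined as the largest index with $e \in P_k$ (using the nesting $P_1 \supseteq \dotsm \supseteq P_n$), exactly as in the proof of Proposition~\ref{prop:BnV}. Membership $e \in P$ is governed by the parity of this cut, and since consecutive chain elements alternate in and out of $P$ while the cut is monotone in $\leJ$, the cuts must strictly increase; hence $m \leq n$, giving $\ell(L) \leq n$.

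For the converse, that $\ell(L) \leq n$ implies $L \in \cB_n(\cS)$, the strategy is to build an explicit difference decomposition out of the idempotent-chain structure. Working in the syntactic monoid $M$ with syntactic image $P$, I would define, for each $k \geq 1$, the set
\[
	V_k = \{ s \in M \mid \text{there is a $P$-chain of idempotents below $s^\omega$ of length $\geq k$} \},
\]
mirroring the $U_k$ construction in Proposition~\ref{prop:chains to difference}, but indexed by the $\leJ$-order on idempotents rather than the monoid order. The idea is that $\ell(L) \leq n$ forces $V_{n+1} = \emptyset$, and that $s \in P$ if and only if the maximal such chain length (equivalently, the relevant parity count) is odd, yielding $P = V_1 - V_2 + \dotsm \pm V_n$. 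The essential verification is that each preimage $\varphi^{-1}(V_k)$ is strongly cyclic, which I would check against the characterization \ref{eq:S1}--\ref{eq:S2} of Proposition~\ref{prop:identities of strongly cyclic}: condition \ref{eq:S2} should hold because the defining property depends only on $s^\omega$, and condition \ref{eq:S1} should hold because $ux^\omega v \leJ x^\omega$ lets one transport an idempotent chain below $x^\omega$ down to one below $(ux^\omega v)^\omega$ without shortening it.

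The main obstacle, and the place where the argument is genuinely more delicate than the generic chain results of Section~\ref{sec:Chains and difference hierarchies}, is precisely this last verification that the $V_k$ are strongly cyclic and that the alternation condition in Definition~\ref{def:P-chains of idempotents} translates correctly into the nested-difference form. Unlike ordinary $P$-chains, the chains here live on idempotents ordered by $\leJ$, and the closure operations behind strong cyclicity (power, root, conjugation, and the $s \mapsto s^\omega$ passage) must be shown to respect the level sets $V_k$. I would expect to lean heavily on the interplay between $\leJ$ and the idempotent structure --- in particular the fact, used in Theorem~\ref{thm:P Q}, that chains below $f_{i-1} = u_i f_i v_i$ can be lifted or pushed along the preorder via expressions of the form $(s e t)^\omega$ --- to guarantee that membership of $s$ in $V_k$ genuinely depends only on the syntactic data and defines a strongly cyclic language. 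Once that is established, Theorem~\ref{thm:cyclic versus strongly cyclic} confirms that the Boolean combination $V_1 - V_2 + \dotsm \pm V_n$ lands inside the cyclic languages, so the decomposition exhibits $L$ as an element of $\cB_n(\cS)$, completing the equivalence.
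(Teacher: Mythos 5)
Your proposal is correct, and its two halves relate differently to the paper's proof. The converse direction ($\ell(L)\leq n$ implies $L\in\cB_n(\cS)$) is essentially the paper's argument: your $V_k$ coincides with the paper's $P_i=\{s\in M\mid \ell(s^\omega)\geq i\}$ (the maximal length of a $P$-chain of idempotents ending at $s^\omega$ equals the maximal length of one lying entirely $\leJ s^\omega$, since any such chain can be completed to end at $s^\omega$ without shortening, so the ambiguity in your word ``below'' is harmless), and your verification of strong cyclicity via \ref{eq:S1}--\ref{eq:S2} is exactly the paper's appeal to Corollary \ref{cor:strongly cyclic versus cyclic}; your closing invocation of Theorem \ref{thm:cyclic versus strongly cyclic} is superfluous, since the decomposition $P=V_1-V_2+\dotsm\pm V_n$ already exhibits $L$ as an element of $\cB_n(\cS)$. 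Where you genuinely diverge is the forward direction. The paper writes $L$ as a symmetric difference $L_1\dif\dotsm\dif L_n$ of strongly cyclic languages (via Proposition \ref{prop:equivalent diff}), notes $\ell(L_i)=1$, and concludes from a separate subadditivity lemma $\ell(X\dif Y)\leq\ell(X)+\ell(Y)$ (Lemma \ref{lem:subadditive}), whose proof splits a chain in a product monoid into two interleaved chains. You instead keep the nested form, pass to a restricted product recognising all the $X_i$, and transplant the cut argument of Proposition \ref{prop:BnV} to idempotent chains: the cut is $\leJ$-monotone on idempotents because each image $P_i$ satisfies \ref{eq:S1}, and alternation forces the cuts to strictly increase, giving $m\leq n$ directly. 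Both routes work; yours avoids stating the subadditivity lemma, while the paper's lemma is a reusable general fact about $\ell$. The one point you should make explicit is that Proposition \ref{prop:identities of strongly cyclic} and Corollary \ref{cor:strongly cyclic versus cyclic} are stated for the syntactic monoid, so you must observe that conditions \ref{eq:S1}--\ref{eq:S2} transfer to the images $P_i$ inside the restricted product via the projections onto the syntactic monoids of the $X_i$; this is routine but not free.
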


We first prove the following lemma which states that the function~$\ell$ is
subadditive with respect to the symmetric difference.

\begin{lem}\label{lem:subadditive}
	If $X$ and $Y$ are regular languages, then $\ell(X \dif Y) \leq \ell(X) + \ell(Y)$.
\end{lem}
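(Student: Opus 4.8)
The plan is to reduce the statement to a purely combinatorial inequality inside a single finite monoid, and then to settle that inequality by an elementary counting argument on chains of idempotents.

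First I would pass to a common recognising monoid. Let $\varphi_X : A^* \to M_X$ and $\varphi_Y : A^* \to M_Y$ be the syntactic stamps of $X$ and $Y$, and let $\varphi : A^* \to M$ be their restricted product, whose image $M$ is a submonoid of $M_X \times M_Y$. Writing $\pi_1, \pi_2$ for the two projections, the subsets $A = \{s \in M \mid \pi_1(s) \in \varphi_X(X)\}$ and $B = \{s \in M \mid \pi_2(s) \in \varphi_Y(Y)\}$ satisfy $X = \varphi^{-1}(A)$ and $Y = \varphi^{-1}(B)$, hence $X \dif Y = \varphi^{-1}(A \dif B)$; since $\varphi$ is surjective onto $M$, these are the exact images $A = \varphi(X)$, $B = \varphi(Y)$ and $A \dif B = \varphi(X \dif Y)$. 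By Theorem \ref{thm:P Q} the value of $\ell$ is independent of the recognising stamp, so $\ell(X) = \ell(M, A)$, $\ell(Y) = \ell(M, B)$ and $\ell(X \dif Y) = \ell(M, A \dif B)$. It therefore suffices to prove, for an arbitrary finite monoid $M$ and arbitrary subsets $A, B$, the inequality $\ell(M, A \dif B) \leq \ell(M, A) + \ell(M, B)$.

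Next I would analyse a longest chain for $A \dif B$. Let $(e_0, \ldots, e_{m-1})$ be an $(A \dif B)$-chain of idempotents of maximal length $m = \ell(M, A \dif B)$. For each $i$ record the two membership bits $a_i = [e_i \in A]$ and $b_i = [e_i \in B]$; then $e_i \in A \dif B$ holds exactly when $a_i \neq b_i$. Because consecutive elements of the chain lie on opposite sides of $A \dif B$, the combined parity $a_i \oplus b_i$ changes at every step, which forces \emph{exactly one} of the two bits to flip between $e_{i-1}$ and $e_i$. Writing $p$ and $q$ for the numbers of steps at which the $A$-bit and the $B$-bit respectively flip, one gets $p + q = m - 1$.

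Finally I would extract the two subchains. The sequence $a_0, \ldots, a_{m-1}$ splits into $p+1$ maximal constant blocks, which alternate between $A$ and its complement; choosing one idempotent from each block yields a $\leJ$-increasing, $A$-alternating subsequence, and it is a genuine $A$-chain precisely when its first block lies in $A$, in which case its length is $p+1$ (otherwise $p$). The symmetric statement holds for $B$. Now the condition $e_0 \in A \dif B$ means that exactly one of $a_0, b_0$ equals $1$: if $e_0 \in A$ then the $A$-blocks start inside $A$, giving $\ell(M, A) \geq p+1$ and $\ell(M, B) \geq q$, while if $e_0 \in B$ the roles are exchanged. In either case $\ell(M, A) + \ell(M, B) \geq p + q + 1 = m$, as required. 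The one delicate point is this bookkeeping of the extra unit, and it is exactly what the requirement $e_0 \in A \dif B$ in the definition of a chain supplies; everything else is routine.
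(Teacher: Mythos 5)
Your proof is correct and follows essentially the same route as the paper's: pass to the restricted product, observe that consecutive elements of an $(X\dif Y)$-chain flip exactly one of the two membership bits so that the flip counts satisfy $p+q=m-1$, and then extract a $P$-chain of length $p+1$ and a $Q$-chain of length $q$ (or vice versa) according to which side the initial element lies on. The only (immaterial) difference is that you keep both subchains inside the product monoid and invoke Theorem~\ref{thm:P Q} to identify $\ell$ there, whereas the paper reads them off in the two factor monoids directly.
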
 

\begin{proof}
	Suppose that the languages $X$ and~$Y$ are respectively recognised by the stamps
	$\varphi : A^* \to M$ and $\psi : A^* \to N$. Let $P$ and~$Q$ be the images of $X$
	and~$Y$ in $M$ and~$N$, so that $X = \varphi^{-1}(P)$ and $Y = \psi^{-1}(Q)$. The
	language $X \dif Y$ is recognised by the restricted product of the stamps $\varphi$
	and $\psi$, say $\gamma: A^* \to R$, and the image of $X \dif Y$ in~$R$ is
\[
	T = {R \cap {\Bigl(P \times (N - Q) + (M - P) \times Q\Bigr)}}.
\]
	Let $((e_0,f_0), \ldots, (e_{m-1}, f_{m-1}))$ be a $T$-chain of idempotents in~$R$.
	Let us consider the set $I$ (resp. $J$) of integers~$i$ for which exactly one of
	the idempotents $e_{i-1}$ or $e_i$ (resp. $f_{i-1}$ or $f_i$) belongs to~$P$ (resp.
	$Q$). Formally, we define the sets of integers $I$ and~$J$ to be
  \begin{align*}
		 I &= \{ 1 \leq i \leq m-1 \mid e_{i-1} \in P \iff e_i \not\in P \} \\
		 J &= \{ 1 \leq i \leq m-1 \mid f_{i-1} \in Q \iff f_i \not\in Q \}
  \end{align*}
	Since the sequence $((e_0,f_0), \ldots, (e_{m-1}, f_{m-1}))$ is a $T$-chain in $R$,
	one has $e_0 \leJ {} \ldots {} \leJ e_{m-1}$ and $f_0 \leJ {} \ldots {} \leJ
	f_{m-1}$. Moreover, every integer $i$ between $1$ and $m-1$ belongs to exactly one
	of the sets $I$ or~$J$. Otherwise, the idempotents $(e_{i-1}, f_{i-1})$ and $(e_i,
	f_i)$ of~$R$ would be either both in $T$ or both out of $T$. Let $I = \{i_1,
	\ldots, i_p\}$ and $J = \{j_1, \ldots, j_q\}$ with $i_1 < {} \dotsm {} < i_p$ and
	$j_1 < {} \dotsm {} < j_q$. Then $p + q = m - 1$.
	
	Since $(e_0,f_0) \in T$, the conditions $e_0 \in P$ and $f_0 \notin Q$ are
	equivalent. By symmetry, suppose that $e_0 \in P$. Then $f_0 \notin Q$ and thus
	$f_1 \in Q$. Furthermore, the definitions of $I$ and $J$ give
\begin{align*}
	e_0 &\in P, & e_1 &\in P, &\ldots&& e_{i_1-1} &\in P, & e_{i_1} &\notin P, &\ldots&&
	e_{i_2-1} &\notin P, &e_{i_2} &\in P, &\ldots \\
	f_0 &\notin P, & f_1 &\notin P, &\ldots&& f_{j_1-1} &\notin P,& f_{j_1} &\in P,
	&\ldots&& f_{j_2-1} &\in P, &f_{j_2} &\notin P, &\ldots
\end{align*}
	Then the sequence $(e_0, e_{i_1}, \ldots, e_{i_p})$ is a $P$-chain of
	idempotents in $M$ and $(f_{j_1}, \ldots, f_q)$ is a $Q$-chain of idempotents in
	$N$. Therefore $p + 1 \leq \ell(X)$, $q \leq \ell(Y)$ and $m = p + 1 + q \leq
	\ell(X) + \ell(Y)$. Thus $\ell(X \dif Y) \leq \ell(X) + \ell(Y)$.
\end{proof}

\noindent We can now complete the proof of Theorem \ref{thm:cyclic languages}. 

\begin{proof}
Let $\eta:A^* \to M$ be the syntactic stamp of $L$ and let $P = \eta(L)$. Let 
also $E(M)$ be the set of idempotents of $M$.
If $L \in \cB_n(\cF)$, then $L = L_1 \dif {} \dotsm {} \dif L_n$ for some strongly cyclic
languages $L_i$. By Corollary~\ref{cor:strongly cyclic versus cyclic}, one has
$\ell(L_i) = 1$ for $1 \leq i \leq n$ and thus $\ell(L) \leq n$ by Lemma
\ref{lem:subadditive}.
  
Suppose now that $\ell(L) \leq n$. For each idempotent~$e$ of~$M$, let $\ell(e)$
denote the maximal length of a $P$-chain of idempotents ending with $e$. Then
$\ell(e) \leq \ell(L)$ by definition. For each $i > 0$, let
\[
	P_i = \{s \in M \mid \ell(s^\omega) \geq i\} \quad \text{and} \quad L_i =
	\eta^{-1}(P_i)
\]
Let $e, f \in E(M)$. Since every idempotent $e$ satisfies $e^\omega = e$, the
conditions $e \in P_i$ and $e \leJ f$ imply $f \in P_i$. It follows by
Corollary~\ref{cor:strongly cyclic versus cyclic} that the languages $L_i$ are
strongly cyclic. We claim that
\begin{equation}\label{eq:decomposition of P}
	P = P_1 - P_2 + P_3 - P_4\ \ldots\ \pm P_m  
\end{equation}
First observe that since $L$ is cyclic, an element~$s$ of~$M$ belongs to~$P$ if and
only if $s^\omega$ belongs to~$P$. Moreover, $s^\omega \in P$ if and only if
$\ell(s^\omega)$ is odd. Since $\ell(P) \leq n$, one has $\ell(s^\omega) \leq n$ for
every $s \in M$ and thus $P_{n+1} = \emptyset$. Formula \eqref{eq:decomposition of P}
follows, since for each $r \geq 0$,
\[
	\{s \in M \mid \ell(s^\omega) = r \} = P_r - P_{r+1}.
\]
Moreover, one gets from \eqref{eq:decomposition of P} the formula
\begin{equation}\label{eq:decomposition of L}
	L = L_1 - L_2 + L_3 \ \ldots\ \pm L_n  
\end{equation}
which completes the proof of the theorem.
\end{proof}

Theorem \ref{thm:cyclic languages} can be used to give an another proof of Theorem
\ref{thm:cyclic versus strongly cyclic}. To get this result, we must prove that any
cyclic language belongs to the class $\cB_n(\cS)$ for some integer~$n$. By Theorem
\ref{thm:cyclic languages}, it suffices to prove that the length of the $P$-chains of
idempotents in a monoid recognising~$L$ is bounded. This is a consequence of the
following proposition \cite[Proposition 5]{Carton97}.

\begin{prop}\label{prop:}
	Let $L$ be a regular cyclic language. Let $\varphi : A^* \to M$ be a stamp
	recognising $L$ and let $P = \varphi(L)$. Then the length of any $P$-chain of
	idempotents is bounded by the $\cJ$-depth of $M$.
\end{prop}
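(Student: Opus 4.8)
The plan is to show that the idempotents occurring in a $P$-chain lie in pairwise distinct $\cJ$-classes; since they are linearly ordered by $\leJ$, they then form a strictly increasing chain of $\cJ$-classes, whose length is by definition at most the $\cJ$-depth of $M$. The whole difficulty is therefore concentrated in a single point: two consecutive idempotents of a $P$-chain, which by Definition \ref{def:P-chains of idempotents} have opposite membership in $P$, cannot be $\cJ$-equivalent.

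To establish this, I would first prove the following lemma: if $e$ and $f$ are idempotents of $M$ with $e \calJ f$, then $e \in P$ if and only if $f \in P$. Since $M$ is finite we have $\cJ = \cD$, so $e \calD f$ and there exists $a \in M$ with $e \calR a \calL f$. As $a$ is $\calR$-related to the idempotent $e$, it is regular, and the location-of-inverses lemma for Green's relations provides an inverse $a'$ of $a$ with $a a' = e$ and $a' a = f$. Because $L$ is cyclic, Proposition \ref{prop:charcyclic} guarantees that $P = \varphi(L)$ is a cyclic subset of $M$, hence closed under conjugation by \ref{item:conjugation}. Applying this to $u = a$ and $v = a'$ gives $e = a a' \in P$ if and only if $a' a = f \in P$, as claimed.

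With the lemma in hand the conclusion is immediate. Let $(e_0, e_1, \ldots, e_{m-1})$ be a $P$-chain of idempotents, so $e_0 \leJ e_1 \leJ {} \dotsm {} \leJ e_{m-1}$ and, for $0 < i < m$, exactly one of $e_{i-1}, e_i$ belongs to $P$. By the lemma, $e_{i-1}$ and $e_i$ are not $\cJ$-equivalent; combined with $e_{i-1} \leJ e_i$ this yields $e_{i-1} <_{\cJ} e_i$ in the strict sense. Hence the $\cJ$-classes of $e_0, \ldots, e_{m-1}$ are pairwise distinct and form a strictly increasing chain of $m$ $\cJ$-classes, so $m$ is at most the $\cJ$-depth of $M$.

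The only genuinely technical step is the construction of the inverse $a'$ with $a a' = e$ and $a' a = f$ from the data $e \calR a \calL f$, which rests on the structure theory of Green's relations in finite monoids (the equality $\cJ = \cD$ together with the fact that an element $\calR$-related to an idempotent $e$ and $\calL$-related to an idempotent $f$ admits an inverse realising $e$ and $f$ as its two associated idempotents). Once this conjugacy of $\cJ$-equivalent idempotents is available, cyclicity of $P$ does all the remaining work, and the bound follows purely from the alternation condition defining a $P$-chain.
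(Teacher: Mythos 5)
Your proposal is correct and follows essentially the same route as the paper: both arguments reduce to showing that two $\cJ$-equivalent idempotents are conjugate (the paper cites this from Pin's book where you derive it from $\cJ=\cD$ and the location of inverses), then invoke cyclicity of $P$ via Proposition \ref{prop:charcyclic} and condition \ref{item:conjugation} to conclude that consecutive elements of a $P$-chain cannot be $\cJ$-equivalent, so the chain is strict and bounded by the $\cJ$-depth.
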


\begin{proof}
Let $(e_0, \ldots, e_{n-1})$ be a $P$-chain of idempotents in~$M$. Then by definition
\[
  e_0 \leJ {} \ldots {} \leJ e_{n-1}.
\]
Moreover, if $e_{i-1} \calJ e_i$, then by \cite[Proposition 1.12]{Pin86}, the
idempotents $e_{i-1}$ and $e_i$ are conjugate. That is, there exist two elements $x$
and~$y$ of~$M$ such that $xy = e_{k-1}$ and $yx = e_k$. Since $L$ is cyclic, $P$ is
also cyclic by Proposition~\ref{prop:charcyclic} and \ref{item:conjugation} implies
that $e_{i-1} \in P$ if and only if $e_i \in P$, which contradicts the definition of
a $P$-chain of idempotents. It follows that the sequence $(e_0, \ldots, e_{n-1})$ is a
strict $<_{\cJ}$-chain and hence its length is bounded by the $\cJ$-depth of $M$.
\end{proof}

\begin{exa} \label{exam:exam1}
	Let $L$ be the cyclic language $(b+aa)^* + (ab^*a)^* +a^* - b^* + 1$. Its syntactic
	monoid is the monoid with zero presented by the relations $bb = b$, $a^3 = a$, $baa
	= a^2b$, $a^2ba = ba$, $bab = 0$. Its transition table and its $\cJ$-class
	structure are represented below. The syntactic image of~$L$ is $P = \{1, a, a^2,
	aba, a^2b\}$ and $(aba, b, 1)$ is a maximal $P$-chain of idempotents.
	
\bigskip

\begin{center}
	\setlength{\extrarowheight}{2pt}\small
	\begin{tabular}{|l<{\ptvi}r|c|c|c|c|c|c|c|c|}
	\hline
	\multicolumn{2}{|c|}{}&1 &2 &3 &4 &5 &6 &7 &8 \\
	\hline
	$*$ &$1$   &1 &2 &3 &4 &5 &6 &7 &8 \\
			&$a$   &3 &4 &5 &2 &3 &8 &2 &6 \\
	$*$ &$b$   &7 &0 &8 &4 &4 &0 &7 &8 \\
  $*$ &$a^2$ &5 &2 &3 &4 &5 &6 &4 &8 \\
			&$ab$  &8 &4 &4 &0 &8 &8 &0 &0  \\
			&$ba$  &2 &0 &6 &2 &2 &0 &2 &6 \\
  $*$ &$a^2b$&4 &0 &8 &4 &4 &0 &4 &8 \\
  $*$ &$aba$ &6 &2 &2 &0 &6 &6 &0 &0 \\
  $*$ &$bab$ &0 &0 &0 &0 &0 &0 &0 &0 \\
	\hline
	\end{tabular}
	\qquad
  \begin{tabular}{c}
	\setlength{\extrarowheight}{6pt}
	\begin{tabularx}{24pt}{|@{}X@{}|}
		 \hline
		 \wstar{\ 1} \\
		 \hline
	\end{tabularx}\\
	$\tvi$\\
  \setlength{\extrarowheight}{6pt}
	\begin{tabularx}{24pt}{|@{}X@{}|}
		\hline
		\wstar{\ b} \\
		\hline
	\end{tabularx}
	\qquad
	\setlength{\extrarowheight}{6pt}
	\begin{tabularx}{48pt}{|@{}X@{}|@{}X@{}|}
		\hline
		\wstar{\ a^2} &\nstar{a} \\
		\hline
	\end{tabularx}\\
$\tvi$\\
	\setlength{\extrarowheight}{6pt}
	\begin{tabularx}{60pt}{|@{}X@{}|@{}X@{}|}
		\hline
		\wstar{a^2b} &\nstar{ba} \\
		\hline
		\nstar{ab} &\wstar{aba} \\
		\hline
	\end{tabularx}\\
$\tvi$\\
	\setlength{\extrarowheight}{6pt}
	\begin{tabularx}{30pt}{|@{}X@{}|}
		\hline
		\wstar{bab} \\
		\hline
	\end{tabularx}\\
	\end{tabular}
\end{center}
\end{exa}


\section{Conclusion}\label{sec:Conclusion}

Difference hierarchies of regular languages form an appealing measure of complexity.
They can be studied from the viewpoint of descriptive set theory and automata theory
\cite{GlasserSchmitzSelivanov16} or from an algebraic perspective, as presented in
this paper. It would be interesting to compare these two approaches.

The results proposed by Glasser, Schmitz and Selivanov
\cite{GlasserSchmitzSelivanov16}, together with our new result on group languages,
give hope that more decidability results might be obtained in a near future. In
particular, the recent progress on concatenation hierarchies \cite{Pin17b,
PlaceZeitoun16, PlaceZeitoun17}, might lead to new decidability results for the
difference hierarchies induced by the lower levels of the Straubing-Th\'erien
hierarchy.

Let us conclude with an open problem:

\begin{qu}\label{Q:BL decidable BLn undecidable}
	Does there exist a lattice of regular languages $\cL$ and an integer $n$ such that
	the membership problems for $\cL$ and for $\cB(\cL)$ are decidable, but is
	undecidable for $\cB_n(\cL)$?
\end{qu}

If the answer to Question \ref{Q:BL decidable BLn undecidable} is positive, a more 
precise question can be raised:

\begin{qu}\label{Q:BLn decidable BL(n+1) undecidable}
	For each integer $n$, does there exist a lattice of regular languages $\cL$ such that
	the membership problems for $\cL$, $\cB(\cL)$ and $\cB_n(\cL)$ are decidable, but
	the membership problem for $\cB_{n+1}(\cL)$ is undecidable?
\end{qu}

\section*{Acknowledgments}\label{sec:Acknowledgments}

The authors would like to thank the anonymous referees, whose suggestions strongly
improved the quality of this paper.


\bibliography{DifferenceHierarchy}  
\bibliographystyle{biblistanglais}

\end{document}